\newcommand{\version}{\today}
\theoremstyle{plain}
\newtheorem{thm}{THEOREM}[section]
\newtheorem{lm}[thm]{LEMMA}
\newtheorem{prop}[thm]{PROPOSITION}
\theoremstyle{definition}
\newtheorem{defi}[thm]{DEFINITION}
\newtheorem{remark}[thm]{Remark}
\theoremstyle{remark}
\newcommand{\upchi}{\raise1pt\hbox{$\chi$}}
\newcommand{\R}{{\mathord{\mathbb R}}}
\newcommand{\C}{{\mathord{\mathbb C}}}
\newcommand{\N}{{\mathord{\mathbb N}}}
\newcommand{\hn}{{\mathord{\widehat{n}}}}
\newcommand{\tr}{{\rm Tr}}
\renewcommand{\|}{{\Vert}}
\numberwithin{equation}{section}
\newcommand{\pn}{{\cal P}}
\newcommand{\un}{{\rm 1\kern -2.5pt l}}
    \newcommand{\one}{{\bf 1}}
\begin{document}
%%%%%%%%DRAFT%%%%%%%
\iffalse
% [arxiv_v2: inline-PS \special stripped, 158 chars]

\fi
%%%%%%%%%%%%%%%%%%%%%%
\markboth{\scriptsize{CL \version}}{\scriptsize{CL September 15,  2017}}
\def\mn{{\bf M}_n}
\def\hn{{\bf H}_n}
\def\pn{{\bf P}_n}
\def\hmnp{{\bf H}_{mn}^+}
\def\H{{\mathcal H}}
\def\EA{\mathbb{E}_{\mathcal{A}}}

\title{{\sc  Some trace inequalities for  exponential and logarithmic 
functions}}
\author{
\vspace{5pt}  Eric A. Carlen$^{1}$  and Elliott H. Lieb$^{2}$ \\
\vspace{5pt}\small{$^{1}$ Department of Mathematics, Hill Center,}\\[-6pt]
\small{Rutgers University,
110 Frelinghuysen Road
Piscataway NJ 08854-8019 USA}\\
\vspace{5pt}\small{$^{2}$ Departments of Mathematics and Physics, Jadwin
Hall, Princeton University}\\[-6pt]
\small{ Washington Road, Princeton, NJ  08544.}\\[-6pt]
 }

\date{April 15, 2018}
\maketitle

\footnotetext [1]{Work partially supported by U.S.
National Science Foundation grant  DMS 1501007.}

\footnotetext [2]{Work partially supported by U.S. National
Science Foundation
grant PHY 1265118 \hfill\break
\copyright\, 2017 by the authors. This paper may be reproduced, in its
entirety, for non-commercial purposes.
}

\begin{abstract}
\footnotesize{
Consider a function $F(X,Y)$ of pairs of positive matrices with values in 
the positive matrices such that whenever $X$ and $Y$ commute 
$F(X,Y)= X^pY^q.$ Our first main result gives  conditions on $F$ such 
that $\tr[ X \log (F(Z,Y))] \leq \tr[X(p\log X + q \log Y)]$ for all 
$X,Y,Z$ such that   $\tr Z =\tr X$. (Note that $Z$ is absent from the right 
side of the inequality.) We  give several examples of functions $F$ to 
which the theorem applies.  

Our theorem allows us to give simple proofs of the well known logarithmic 
inequalities of Hiai and Petz and several new generalizations of them which 
involve three variables $X,Y,Z$ instead of just $X,Y$ alone. The investigation of these logarithmic inequalities is closely connected with three quantum relative entropy functionals: The standard Umegaki quantum relative entropy $D(X||Y) = \tr [X(\log X-\log Y])$, and two others, the Donald relative entropy
$D_D(X||Y)$, and the Belavkin-Stasewski relative entropy $D_{BS}(X||Y)$. They are known to satisfy
$D_D(X||Y) \leq D(X||Y)\leq D_{BS}(X||Y)$. We prove that the Donald relative entropy 
provides the sharp upper bound, independent of $Z$ on  
$\tr[ X \log (F(Z,Y))]$  in a number of cases in which $F(Z,Y)$
is homogeneous of degree $1$ in $Z$ and $-1$ in $Y$. 
We also investigate the Legendre transforms in $X$
of $D_D(X||Y)$ and $D_{BS}(X||Y)$, and show how our results for these lead to new refinements of the
Golden-Thompson inequality. 
}

\end{abstract}

\medskip
\centerline{Key Words: trace inequalities, quantum relative entropy, convexity}

\section{Introduction}  

Let $\mn$ denote the set of complex $n\times n$ matrices. Let $\pn$ and 
$\hn$ denote the subsets of $\mn$ consisting of strictly positive 
and self-adjoint matrices respectively.  For $X,Y\in \hn$,   $X\geq Y$ to indicate that 
$X-Y$ is positive semi-definite; i.e., in  the closure of
$\pn$, and  $X>Y$ 
indicates that $X\in \pn$.

Let $p$ and $q$ be non-zero real numbers. There are many functions $F:\pn \times \pn \to \pn$
such that $F(X,Y) = X^pY^q$ whenever $X$ and $Y$ compute. For example, 
\begin{equation}\label{firstalt}
F(X,Y) = X^{p/2}Y^qX^{p/2} \quad{\rm or}\qquad F(X,Y) = Y^{q/2}X^pY^{q/2}\ .
\end{equation}
Further examples can be constructed using  geometric means:
For positive $n\times n$ matrices $X$ and $Y$, 
and $t\in [0,1]$, the {\em $t$-geometric mean of $X$ and $Y$,} denoted by
$X\#_t Y$, is defined by Kubo and Ando  \cite{KA80}  to be
\begin{equation}\label{geomean}
X\#_t Y := X^{1/2}(X^{-1/2}YX^{-1/2})^t 
X^{1/2}\ .
\end{equation}
The geometric mean for $t= 1/2$ was initially defined  and studied by Pusz 
and 
Woronowicz \cite{PW75}. The formula \eqref{geomean} makes sense
for all $t\in \R$ and it  has a natural geometric meaning \cite{S84}; see 
the discussion around Definition \ref{gmdef}   and in Appendix C. 
Then for all $r>0$ and all $t\in(0,1)$,
\begin{equation}\label{thirdalt}
F(X,Y) = X^r\#_t Y^r
\end{equation} is such a function with $p = r(1-t)$ and $q = rt$. Other examples will be considered below. 

If $F$ is such a function,  then
$\tr[X \log F(X,Y)] = \tr[X(p\log X + q\log Y)]$ whenever $X$ and $Y$ commute.
We are interested in conditions on $F$ that guarantee either
\begin{equation}\label{logone}
\tr[X \log F(X,Y)]  \geq  \tr[X(p\log X + q\log Y)]
\end{equation}
or 
\begin{equation}\label{logtwo}
\tr[X \log F(X,Y)] \leq \tr[X(p\log X + q\log Y)]
\end{equation}
for all $X,Y\in \pn$.   Some examples of such inequalities are known: 
Hiai and Petz \cite{HP93} proved that
\begin{equation} \label{HPpair} 
\frac{1}{p}\tr [X \log (Y^{p/2} X^p Y^{p/2} ) ]\  \leq \ \tr [ X (\log X + \log Y )] \
\leq \
 \frac{1}{p}\tr 
 [X \log (X^{p/2} Y^{p} X^{p/2} ) ] \ ,
\end{equation}
for all $X,Y > 0$ and all $p>0$.  Replacing $Y$ by $Y^{q/p}$ 
shows that for $F(X,Y) = X^{p/2}Y^qX^{p/2}$, \eqref{logone} is valid, while for 
$F(X,Y) =  Y^{q/2}X^pY^{q/2}$, \eqref{logtwo} is valid: Remarkably, the effects of 
non-commutativity go in different directions in these two examples. Other examples involving functions $F$ of the form
\eqref{thirdalt} have been proved by Ando and Hiai \cite{AH94}. 

Here we prove several new inequalities of this type, and we also strengthen the results cited above by bringing in a {\em third} operator $Z$:
For example, Theorem~\ref{clA}  says that for all postive $X$, $Y$ and $Z$ such that $\tr[Z] = \tr[X]$, 
\begin{equation}\label{logtwoB}
\frac{1}{p}\tr [X \log (Y^{p/2} Z^p Y^{p/2} ) ]\  \leq \ \tr [ X (\log X + \log Y )]
\end{equation}
with strict inequlaity if $Y$ and $Z$ do not commute.
If $Y$ and $Z$ do commute, the left  side of \eqref{logtwoB} is simply 
$\tr [ X (\log Z + \log Y )]$, and  the inequality \eqref{logtwoB} would then  follow from the inequality $\tr[X\log Z] \leq \tr[X\log X]$ for all positive 
$X$ and $Z$ with $\tr[Z] = \tr[X]$.  Our result shows that this persists in the non-commutative case, and we obtain similar results for
other choices of $F$, in particular for those defined in terms of gemetric means. 

One of the reasons that inequalities of this sort are of interest is their connection 
with quantum relative entropy. 
By taking $Y =W^{-1}$, with $X$ and $W$ both having unit trace, so that 
both $X$ and $W$ are density matrices,
the middle quantity in (\ref{HPpair}), $\tr [ X (\log X - \log W )]$, is 
the Umegaki relative entropy of $X$ with respect to $W$. 
 Thus 
(\ref{HPpair}) provides upper and lower bounds on the  
relative entropy.

There is another source of interest in the inequalities (\ref{HPpair}), which Hiai and Petz refer to as 
{\em logarithmic inequalities}. As they point out, logarithmic inequalities are dual, via the Legendre transform, to certain {\em exponential 
inequalities} 
related to the Golden-Thompson inequality. Indeed, {\em the quantum Gibbs 
variational principle} states that
\begin{equation}\label{gvp}
\sup\{ \tr[XH] - \tr[X(\log X - \log W)]\ :\ 
X\geq 0\ ,\ \tr[X] = 1\} = \log(\tr[e^{H + \log W}])   \ ,
\end{equation}
for all self-adjoint $H$ and all non-negative $W$.
(The quantum Gibbs variational principle is a 
direct 
consequence of the Peierls-Bogoliubov inequality, see Appendix~\ref{PBGV}.)  

It follows immediately from (\ref{HPpair}) and (\ref{gvp}) that 
\begin{equation}\label{gvp2}
   \sup\{ \tr[XH] - \tr[X(\log(X^{1/2}W^{-1}X^{1/2})]\ :\  X\geq 0\ ,\ 
\tr[X] = 1\}  \leq \log(\tr[e^{H + \log W}]) \ .
\end{equation}
The left side of (\ref{gvp2}) provides a lower bound for 
$\log(\tr[e^{H + \log W}])$ in terms of a Legendre transform, which, 
unfortunately, 
cannot be  
evaluated explicitly.  

An  alternate use of the inequality on the right in \eqref{HPpair}
does yield an explicit lower bound on $\log(\tr[e^{H + \log W}]) $ in 
terms of a geometric mean of $e^H$ and $W$.  This was done in 
\cite{HP93}; the bound is 

\begin{equation}\label{HPco}
\tr[(e^{rH} \#_{t}e^{rK})^{1/r}]  \leq \tr[  e^{(1-t)H + t  K}],
\end{equation}
which is valid for all self adjoint $H,K$, and all $r>0$ and $t\in [0,1]$.
Since the Golden-Thompson inequality is $\tr[  e^{(1-t)H + 
t   K}] 
\leq \tr[  e^{(1-t)H}e^{t K}]$, \eqref{HPco} is viewed in 
\cite{HP93} as a {\em complement to the Golden-Thompson inequality.}

Hiai and Petz show \cite[Theorem 2.1]{HP93} that the inequality 
\eqref{HPco}  is equivalent to  the inequality on the right in 
\eqref{HPpair}.
One direction in proving the equivalence, starting from (\ref{HPco}), is a 
simple differentiation 
argument; differentiating (\ref{HPco}) at $t=0$ yields the result. 
While the inequality on the left in 
(\ref{HPpair}) is relatively simple to prove, the one on the right appears 
to be deeper and more difficult to prove, from the perspective of 
\cite{HP93}. 

In our paper we prove a number of new inequalities,  some of which 
strengthen and extend (\ref{HPpair}) and 
(\ref{HPco}).  Our results show,  in particular, that the 
geometric mean provides a natural bridge between the pair of  
inequalities (\ref{HPpair}).  This perspective  yields a fairly simple 
proof of the deeper inequality on the right of \eqref{HPpair}, and 
thereby places the appearance of the geometric mean in (\ref{HPco}) in a 
natural context.

Before stating our results precisely, we  recall the notions of { operator concavity} and { operator convexity}. 
A function $F: \pn  \to \hn$ is {\em concave} in case for all $X,Y\in \pn$ and all $t \in [0,1]$,
$$F((1-t)X + t Y) - (1-t)F(X) - t F(Y) \in \pn\ ,$$
and $F$ is {\em convex} in case $-F$ is concave.  For example, $F(X) := X^p$ is concave for $p\in [0,1]$
as is $F(x) := \log X$.  

A function $F: \pn\times \pn \to \hn$ is {\em jointly concave} in case for 
all $X,Y,W,Z\in \pn$ and all $t \in [0,1]$
$$F((1-t)X + t Y, (1-t)Z + t W) - (1-t)F(X,Z) - t F(Y,W) \in \pn\ ,$$
and $F$ is {\em jointly convex} in case $-F$ is {\em jointly concave}. 
{\em Strict} concavity or convexity means that the left side is never zero 
for any  $t\in(0,1)$ unless $X=Y$ and $Z=W$. A
particularly well-known and important example is provided by the 
generalized geometric means.
By a theorem of Kubo and Ando \cite{KA80}, for each $t\in [0,1]$, $F(X,Y) 
:= 
X\#_t Y$ is jointly concave in $X$ and $Y$. 
Other examples of jointly concave functions  are discussed below.

Our first main result is the following:

\begin{thm}\label{main1}  Let $F:\pn\times \pn\to \pn$ be such that:

\smallskip
\noindent{\it (1)} For each fixed $Y\in \pn$, $X \mapsto F(X,Y)$ is concave, and for all $\lambda>0$,
$F(\lambda X,Y) = \lambda F(X,Y)$.

\smallskip
\noindent{\it (2)} For each $n\times n$ unitary matrix $U$, and each $X,Y\in \pn$, 
\begin{equation}\label{cond1}
F(UXU^*,UYU^*) = UF(X,Y)U^*\ .
\end{equation}

\smallskip
\noindent{\it (3)} For some  $q\in \R$,  if $X$ and $Y$ commute 
then $F(X,Y) = XY^q$.

\smallskip

Then,  for all $X,Y,Z\in \pn$ such that $\tr[Z] = \tr[X]$,
\begin{equation}\label{mainineq}
\tr[ X \log (F(Z,Y))] \leq \tr[X(\log X + q \log Y)]\ .
\end{equation}
 If, moreover, $X \mapsto F(X,Y)$ is strictly concave, then the
inequality in (\ref{mainineq}) is strict when $Z$ and $Y$ do not 
commute.
\end{thm}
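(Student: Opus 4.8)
The plan is to reduce the inequality to a known variational principle by exploiting the homogeneity and concavity of $X \mapsto F(X,Y)$ together with the commutative normalization (3). The starting observation is that concavity plus degree-one homogeneity of $F(\cdot,Y)$ means $F(\cdot,Y)$ is \emph{superadditive}, and, more importantly, that it can be written as an infimum of linear (in the first variable) functions via a Legendre-type duality for operator-concave homogeneous maps. Concretely, I expect
$$
\log F(Z,Y) \le \text{(something linear in a dual variable)}
$$
is the wrong track; instead I would use the integral representation / supporting-hyperplane idea at the level of the trace functional $Z \mapsto \tr[X\log F(Z,Y)]$. Since $\log$ is operator concave and operator monotone, and $Z\mapsto F(Z,Y)$ is concave and homogeneous of degree one, the composition $Z \mapsto \tr[X\log F(Z,Y)]$ is concave in $Z$ for fixed $X,Y$ (using that $A\mapsto \tr[X\log A]$ is concave and monotone on $\pn$ when $X\ge 0$). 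Being concave and homogeneous of degree zero is false; rather it is concave, so it lies below its tangent plane at any point, in particular at $Z=X$: writing $\Phi(Z) := \tr[X\log F(Z,Y)]$, concavity gives $\Phi(Z) \le \Phi(X) + D\Phi(X)[Z-X]$. Because $\tr[Z]=\tr[X]$, the linear term will be controlled once I compute $D\Phi(X)$ using the unitary covariance (2) to diagonalize $X$ (hence $Y$ may be taken to commute with $X$ only after a further reduction — see below).

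The key steps, in order: (i) Fix $Y$ and show $\Phi(Z) = \tr[X\log F(Z,Y)]$ is concave in $Z\in\pn$; this follows from joint use of operator concavity of $\log$, operator monotonicity of $\log$, and hypothesis (1). (ii) Reduce to the case where $X$ is diagonal and, by a spectral/averaging argument over unitaries commuting with $X$ (using hypothesis (2)), replace $Y$ by its "pinching" $\mathcal{E}(Y)$ relative to the eigenbasis of $X$; one must check this pinching only decreases (or leaves invariant) the left-hand side while fixing the right-hand side, because $\tr[X\log Y]\ge \tr[X\log \mathcal{E}(Y)]$ by operator concavity of $\log$ and Jensen, and $\Phi$ behaves well under the pinching by covariance plus concavity. (iii) On the reduced problem, $X$, $\mathcal{E}(Y)$, and the relevant spectral projections all commute, so one can bring in hypothesis (3): in the fully commutative situation $F(Z,Y)=ZY^q$ and the inequality becomes the scalar/commutative statement $\tr[X\log(ZY^q)] \le \tr[X\log X] + q\tr[X\log Y]$, i.e.\ $\tr[X\log Z]\le \tr[X\log X]$ for $\tr Z=\tr X$, which is the elementary Klein-type inequality (Peierls–Bogoliubov / nonnegativity of relative entropy). (iv) Assemble: concavity in step (i) reduces $Z$ to a convenient point, step (ii) reduces $Y$, step (iii) closes the commutative core, and the tangent-plane/trace-constraint bookkeeping yields \eqref{mainineq}.

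The strictness claim will come from tracking equality cases: the Klein inequality in step (iii) is strict unless $Z=X$, and, more to the point, the pinching inequality $\tr[X\log Y]\ge\tr[X\log\mathcal{E}(Y)]$ together with strict concavity of $X\mapsto F(X,Y)$ forces, in the case of equality, that $F(Z,Y)$ already commutes with the eigenprojections of $X$ and that $Y$ commutes with $Z$; unwinding this gives that equality in \eqref{mainineq} implies $Z$ and $Y$ commute. I would phrase this as: strict concavity of $F(\cdot,Y)$ makes the supporting-hyperplane inequality in step (i) strict unless the two arguments $Z$ and $X$ agree, and a separate argument handles the $Y$-pinching, so that non-commuting $Z,Y$ cannot achieve equality.

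The main obstacle I anticipate is step (ii): justifying that one may replace $Y$ by its pinching $\mathcal{E}(Y)$ into the commutant of $X$ without increasing $\tr[X\log F(Z,Y)]$. The difficulty is that $F(Z,\cdot)$ is \emph{not} assumed concave or monotone in its second argument — only hypotheses (1)–(3) are available — so one cannot directly apply Jensen to the second slot. The resolution should be to average $\Phi$ itself over the compact group of unitaries commuting with $X$: by covariance (2), $\int U F(Z,Y) U^* \,dU = F(\bar Z, \mathcal E(Y))$ is false in general, but $\int \tr[X\log F(UZU^*,UYU^*)]\,dU = \int \tr[X\log F(Z', \mathcal{E}\text{-averaged})]\,dU$ is not quite right either; what actually works is to use concavity of $\Phi$ in $Z$ \emph{combined} with covariance to push the average onto $Z$, landing at a $Z$ in the commutant of $X$, and then a final concavity/monotonicity step to handle $Y$. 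Getting this averaging argument clean — in particular identifying the correct order of the two reductions so that the hypotheses actually apply at each stage — is where the real work lies; everything else is the standard Klein-inequality endgame.
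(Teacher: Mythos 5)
Your proposal correctly isolates two ingredients that do appear in the paper's argument --- the concavity of $Z\mapsto \tr[X\log F(Z,Y)]$ (operator concavity and monotonicity of $\log$ composed with hypothesis \textit{(1)}), and the fact that the commutative endgame is Klein's inequality $\tr[X\log \widehat Z]\le\tr[X\log X]$ for $\tr[\widehat Z]=\tr[X]$ --- but the reduction connecting them is missing, and both devices you lean on fail. First, the supporting-hyperplane step: bounding $\Phi(Z)\le\Phi(X)+D\Phi(X)[Z-X]$ is useless here because $\Phi(X)=\tr[X\log F(X,Y)]$ is not the right-hand side of \eqref{mainineq}, and the linear term is not controlled by the constraint $\tr[Z]=\tr[X]$ unless $D\Phi(X)$ is a multiple of $\one$, which it is not in general. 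Second, and more fundamentally, the pinching is aimed at the wrong variable. Hypothesis \textit{(3)} only becomes usable once the \emph{first} argument of $F$ commutes with $Y$, so the averaging must be over unitaries $U$ commuting with $Y$, producing the pinching $\widehat Z$ of $Z$ onto the commutant of $Y$; pinching $Y$ into the commutant of $X$, or landing $Z$ in the commutant of $X$ as your fallback suggests, produces nothing to which \textit{(3)} applies. The obstruction you then meet is real: for $U$ commuting with $Y$, covariance gives $\tr[X\log F(UZU^*,Y)]=\tr[U^*XU\,\log F(Z,Y)]$, so the functional $\tr[X\log F(\cdot,Y)]$ is \emph{not} invariant under this averaging --- the average drags $X$ along and yields a statement about the pinched $X$ rather than $X$ itself.

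The missing idea is to eliminate $X$ \emph{before} averaging. By the Peierls--Bogoliubov inequality (equivalently, the Gibbs variational principle), \eqref{mainineq} with $\tr[X]=\tr[Z]=1$ follows from the single $X$-free inequality $\tr\bigl[\exp\bigl(\log F(Z,Y)-q\log Y\bigr)\bigr]\le 1$. \emph{This} quantity is invariant under $Z\mapsto UZU^*$ for every unitary $U$ commuting with $Y$, by \textit{(2)}. Averaging over a finite set of such unitaries whose average implements the pinching onto the algebra generated by $Y$ and $\one$, and using the concavity of $Z\mapsto F(Z,Y)$, the operator monotonicity of $\log$, and the concavity of $W\mapsto\tr[e^{H+\log W}]$ (the theorem of \cite{L73}), one gets $\tr[\exp(\log F(Z,Y)-q\log Y)]\le\tr[\exp(\log F(\widehat Z,Y)-q\log Y)]$ with $\widehat Z$ commuting with $Y$; then \textit{(3)} collapses the right-hand side to $\tr[\widehat Z]=1$. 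Without the Peierls--Bogoliubov step and Lieb's concavity theorem your outline cannot be completed as written, and the strictness discussion, resting on the same flawed reduction, inherits the gap.
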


\begin{remark} Notice that \eqref{mainineq} has three variables on the 
left, but only two on the right. The third variable $Z$ is related to $X$ 
and $Y$ {\em only} through the constraint  $\tr[ Z]= \tr [X]$. 
\end{remark}

Different choices for the function $F(X,Y)$ yield different corollaries.   
For our first corollary, we take 
the  function
$\displaystyle{ F(X,Y) = \int_0^\infty \frac{1}{\lambda 
+Y}X\frac{1}{\lambda +Y}{\rm d}\lambda}$,  which evidently satisfies the conditions of 
Theorem~\ref{main1} with $q = -1$.
 We obtain, thereby, the following inequality:

 \begin{thm}\label{clC} Let  $X,Y,Z\in \pn$ be such that $\tr[Z] = \tr[X]$,  Then 
\begin{equation}\label{mainC}
\tr\left[ 
X \log \left( \int_0^\infty \frac{1}{\lambda +Y}Z\frac{1}{\lambda +Y}{\rm d}\lambda
  \right)
 \right] 
  \leq \tr[X( \log X -  \log Y)]\ .
\end{equation}
\end{thm}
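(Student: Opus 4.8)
The plan is to obtain Theorem~\ref{clC} as a direct corollary of Theorem~\ref{main1}: one only has to check that
\[
F(X,Y)\ :=\ \int_0^\infty \frac{1}{\lambda+Y}\,X\,\frac{1}{\lambda+Y}\,{\rm d}\lambda
\]
is a legitimate choice of function in that theorem, with parameter $q=-1$, after which the inequality \eqref{mainC} is exactly \eqref{mainineq} specialized to $q=-1$, since $\tr[X(\log X + q\log Y)] = \tr[X(\log X - \log Y)]$. Before checking the hypotheses I would record that $F$ genuinely maps $\pn\times\pn$ into $\pn$: for each fixed $\lambda\ge 0$ and $Y>0$ the integrand equals $BB^*$ with $B=(\lambda+Y)^{-1}X^{1/2}$, hence is strictly positive, and its operator norm is $O((1+\lambda)^{-2})$ at infinity and bounded near $0$, so the integral converges to a strictly positive matrix.

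The verification of the three conditions of Theorem~\ref{main1} is then routine. Condition~(1) holds because $X\mapsto F(X,Y)$ is linear, hence in particular concave and positively homogeneous of degree one. Condition~(2) follows by conjugating the integrand by $U$ and using $(\lambda+UYU^*)^{-1} = U(\lambda+Y)^{-1}U^*$ together with the fact that $\lambda$ is a scalar. For Condition~(3), if $X$ and $Y$ commute then so do all the factors in the integrand, giving $F(X,Y) = X\int_0^\infty(\lambda+Y)^{-2}\,{\rm d}\lambda$; by the functional calculus and the elementary scalar identity $\int_0^\infty(\lambda+y)^{-2}\,{\rm d}\lambda = y^{-1}$ for $y>0$, this equals $XY^{-1}$, i.e.\ $q=-1$ in the notation of Theorem~\ref{main1}.

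With the hypotheses in hand, Theorem~\ref{main1} immediately yields \eqref{mainC} for all $X,Y,Z\in\pn$ with $\tr[Z]=\tr[X]$. I do not anticipate any real obstacle: the only non-formal ingredient is the scalar integral identity used in Condition~(3). Note finally that there is nothing to say about strictness here, since $X\mapsto F(X,Y)$ is affine rather than strictly concave, so the strict case of Theorem~\ref{main1} is simply not invoked.
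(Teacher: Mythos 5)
Your proposal is correct and is exactly the paper's argument: the paper simply remarks that this $F$ ``evidently satisfies the conditions of Theorem~\ref{main1} with $q=-1$'' and cites that theorem, and your verification of conditions (1)--(3) (linearity in $X$, unitary covariance via $(\lambda+UYU^*)^{-1}=U(\lambda+Y)^{-1}U^*$, and the scalar identity $\int_0^\infty(\lambda+y)^{-2}\,{\rm d}\lambda=y^{-1}$ in the commuting case) fills in precisely the routine details the paper leaves implicit. Your closing observation that no strictness claim is available because $X\mapsto F(X,Y)$ is affine rather than strictly concave is also consistent with the statement of Theorem~\ref{clC}, which asserts no equality cases.
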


Another simple application can be made to the function $F(X,Y) = Y^{1/2}XY^{1/2}$, however in this case, an adaptation of method of  proof 
of Theorem~\ref{main1} yields a more general result for the two-parameter family of functions  $F(X,Y) = Y^{p/2}X^pY^{p/2}$ for all $p>0$

%
%the one parameter family of functions 
%$F(X,Y) = Y^{p/2}X^pY^{p/2}$, $p\in [0,1]$, which evidently satisfies the 
%conditions in Theorem~\ref{main1} with $q =p$. 
%Note that $X\mapsto  F(X,Y) $ is strictly concave when $p\in (0,1)$. Then 
%as a direct corollary of Theorem~\ref{main1} we obtain the inequality 
%\eqref{mainA} for all $p\in (0,1)$. (The inequality itself follows this way for all $p\in [0,1]$, but the statement concerning cases of equality requires strict concavity.)
%A simple, separate argument, will extend the validity of \eqref{mainA} to 
%all $p>0$. 

\begin{thm}\label{clA} For all $X,Y,Z\in \pn$ such that $\tr[Z] = \tr[X]$, and all $p > 0$,
\begin{equation}\label{mainA}
\tr[ X \log (Y^{p/2}Z^pY^{p/2}))] \leq \tr[X(\log X^p +  \log Y^p)]\ .
\end{equation}
The inequality in 
(\ref{mainA}) is strict unless $Z$ and $Y$ commute,
\end{thm}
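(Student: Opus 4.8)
The plan is to follow the strategy behind Theorem~\ref{main1}: first normalize by homogeneity, then use the Gibbs variational principle to dualize the left-hand side of \eqref{mainA} and remove $X$ from the problem, leaving an exponential trace inequality involving only $Y$ and $Z$. Since \eqref{mainA} is unchanged under $X\mapsto\lambda X$, $Z\mapsto\lambda Z$ (both sides scale by $\lambda$ and acquire the same $\lambda p(\log\lambda)\tr[X]$ term), it suffices to treat $\tr[X]=\tr[Z]=1$. Divide \eqref{mainA} by $p$ and set $G:=\tfrac1p\log\big(Y^{p/2}Z^pY^{p/2}\big)-\log Y$, a self-adjoint matrix; the assertion becomes $\tr[XG]\le\tr[X\log X]$. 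By Peierls--Bogoliubov / the Gibbs variational principle (Appendix~\ref{PBGV}), $\tr[XG]-\tr[X\log X]\le\log\tr[e^{G}]$ since $\tr[X]=1$, so everything reduces to the $X$-free inequality
\[
\tr\!\Big[\exp\!\Big(\tfrac1p\log\big(Y^{p/2}Z^pY^{p/2}\big)-\log Y\Big)\Big]\ \le\ \tr[Z]\ .
\]

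The heart of the matter is this inequality, and this is where the argument must deviate from the proof of Theorem~\ref{main1}: the map $Z\mapsto Y^{p/2}Z^pY^{p/2}$ is homogeneous of degree $p$ (not $1$) and fails to be operator concave when $p>1$, so the concavity mechanism of Theorem~\ref{main1} does not apply. Instead, write $G=A+B$ with $A=\tfrac1p\log\big(Y^{p/2}Z^pY^{p/2}\big)$ and $B=-\log Y$, and invoke the monotone refinement of the Golden--Thompson inequality (Araki--Lieb--Thirring): the function $t\mapsto\tr\big[(e^{tA}e^{tB})^{1/t}\big]$ is nondecreasing on $(0,\infty)$ with limit $\tr[e^{A+B}]$ as $t\to0^{+}$. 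Thus $\tr[e^{G}]=\tr[e^{A+B}]\le\tr\big[(e^{pA}e^{pB})^{1/p}\big]$, and the key observation is that $t=p$ is precisely the value at which the right side collapses: $e^{pA}e^{pB}=Y^{p/2}Z^pY^{p/2}\cdot Y^{-p}=Y^{p/2}Z^pY^{-p/2}=\big(Y^{p/2}ZY^{-p/2}\big)^{p}$, so $(e^{pA}e^{pB})^{1/p}=Y^{p/2}ZY^{-p/2}$ (the $p$-th power and root behave as in the commuting case because $Y^{p/2}ZY^{-p/2}$ is similar to the positive matrix $Z$), and therefore $\tr[e^{G}]\le\tr\big[Y^{p/2}ZY^{-p/2}\big]=\tr[Z]$, as needed, for every $p>0$.

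For the strictness claim I would track the equality cases: equality in the Araki--Lieb--Thirring step forces $A$ and $B$ to commute, i.e. $Y^{p/2}Z^pY^{p/2}$ to commute with $Y$; conjugating by $Y^{-p/2}$ this is equivalent to $[Z^p,Y]=0$, hence to $[Z,Y]=0$. So when $Z$ and $Y$ do not commute the displayed inequality, and hence \eqref{mainA}, is strict. I expect the main obstacle to be getting this $X$-free step exactly right --- recognizing that the Gibbs reduction lands on the Golden--Thompson/Lie--Trotter monotone family and that it must be evaluated at the parameter $t=p$ (plain Golden--Thompson, $t=1$, only closes the case $p=1$) --- together with the routine but necessary verifications behind the homogeneity reduction and the identity $e^{pA}e^{pB}=(Y^{p/2}ZY^{-p/2})^{p}$.
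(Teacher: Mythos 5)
Your proposal is correct and follows essentially the same route as the paper: reduce via Peierls--Bogoliubov to the $X$-free inequality $\tr[\exp(\tfrac1p\log(Y^{p/2}Z^pY^{p/2})-\log Y)]\le \tr[Z]$, then invoke the monotone Araki--Lieb--Thirring/Golden--Thompson family evaluated at the parameter $p$, where it collapses to $\tr[Z]$ (the paper cites Friedland--So for the strict monotonicity that yields the equality case, which is exactly the strict version of the ALT step you assert). The only cosmetic difference is that you use the asymmetric product $(e^{pA}e^{pB})^{1/p}$ while the paper uses the symmetric sandwich $(Y^{-p/2}(Y^{p/2}Z^pY^{p/2})Y^{-p/2})^{1/p}=Z$; these have the same trace by similarity.
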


Specializing to the case $Z = X$, (\ref{mainA}) reduces to the inequality 
on the left in (\ref{HPpair}).  Theorem~\ref{clA} thus extends the
inequality of \cite{HP93} by inclusion of the third variable $Z$, and specifies the cases of equality there.

\begin{remark}\label{trcase} If $Z$ does commute with $Y$, (\ref{mainA}) reduces to $\tr[X\log Z] \leq \tr[X\log X]$ 
which is well-known to be true under the condition $\tr[Z] = \tr[X]$, with 
equality if and only if $Z= X$ .
\end{remark}

We also obtain results for 
 the {\em two parameter} family of functions
$$F(X,Y) = Y^r\#_s X^r $$ with $s\in [0,1].$ and $r>0$.    In this case, when $X$ and $Y$ commute, $F(X,Y) = X^pY^q$ with 
\begin{equation}\label{homd}
p= rs \quad{\rm and} \quad q = r(1-s)\ .
\end{equation}
It would be possible to deduce at least some of these results directly from Theorem~\ref{main1} is we knew that, for example, $X\mapsto
Y^2\#_{1/2}X^2 = Y(Y^{-1}X^2Y^{-1})^{1/2}Y$ is concave in $X$. While we have no such result, it turns out that we can use Theorem~\ref{clA}
to obtain the following:

%(Note that $p$ and $q$ give the degrees of homogeneity of $F$ in $X$ and $Y$ 
%respectively, and (\ref{homd}) gives the relation between the degrees of 
%homogeneity and the  parameters $r$ and $s$ that define $F$.)
%For $X_1,X_2\in \pn$, the operator concavity of the $r$th power, $r\in[0,1]$ yields 
%$((X_1+X_2)/2)^r \geq (X_1^r+X_2^r)/2$.  Since the geometric mean is 
%monotone increasing in 
%each argument, and since it is jointly concave, as proved in \cite{KA80},
%$$Y^r\#_s((X_1+X_2)/2)^r \geq \frac12 Y^r\#_s X_1^r + \frac12 Y^r\#_s X_2^r\ .$$
%This proves the midpoint concavity of $X \mapsto F(X,Y)$; the general 
%case follows by continuity. Hence condition {\it (1)} is satisfied.
%Furthermore, the geometric mean satisfies $(CAC^*)\#_s(CBC^*) = 
%C(A\#_sB)C^*$ for all invertible $C$, 
%unitary or not \cite{KA80}, and hence {\it (2)} is satisfied. Finally {\it 
%(3)} is evidently satisfied. Therefore, as a corollary of 
%Theorem~\ref{main1} we obtain the following inequality \eqref{mainB} for 
%$r\in (0,1)$. As in the previous theorem, an additional argument is needed 
%for the cases $r\geq 1$ (or  simply $r>1$ for the inequality alone). 

\begin{thm}\label{clB} Let $X,Y,Z\in \pn$ be such that $\tr[Z] = \tr[X]$. 
Then for all $s\in [0,1]$ and all $r>0$,
\begin{equation}\label{mainB}
\tr[ X \log (Y^r\#_s Z^r))] \leq \tr[X(s \log X^r +  (1-s)\log Y^r)]\ .
\end{equation}
For $s\in (0,1)$, when $Z$ does not commute with $Y$, the inequality is strict. 
\end{thm}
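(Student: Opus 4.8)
The plan is to derive Theorem~\ref{clB} from Theorem~\ref{clA} by rewriting the geometric mean $Y^r\#_s Z^r$ in the ``sandwich'' form $\widetilde Y^{\,p/2}\widetilde Z^{\,p}\widetilde Y^{\,p/2}$ to which Theorem~\ref{clA} applies, at the cost of an auxiliary matrix $\widetilde Z$ whose trace has to be controlled. The endpoints are trivial: by \eqref{geomean}, $Y^r\#_0 Z^r=Y^r$, so \eqref{mainB} is an equality for $s=0$; and $Y^r\#_1 Z^r=Z^r$, so for $s=1$ the claim reduces to $\tr[X\log Z^r]\le\tr[X\log X^r]$, which holds because $\tr Z=\tr X$ (the elementary inequality $\tr[X\log Z]\le\tr[X\log X]$ recalled in the Introduction). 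So fix $s\in(0,1)$, $r>0$, and put $p:=rs>0$.

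The key algebraic step: set $W:=Y^{-r/2}Z^rY^{-r/2}\in\pn$, so that by \eqref{geomean} one has $Y^r\#_s Z^r=Y^{r/2}W^sY^{r/2}$. Define
\[
Y_1:=Y^{(1-s)/s}\in\pn,\qquad Z_1:=\big(Y^{rs/2}W^sY^{rs/2}\big)^{1/(rs)}\in\pn .
\]
Using $Y^{r(1-s)/2}Y^{rs/2}=Y^{r/2}$ a direct computation gives
\[
Y_1^{\,p/2}\,Z_1^{\,p}\,Y_1^{\,p/2}=Y^{r/2}W^sY^{r/2}=Y^r\#_s Z^r,\qquad p\log Y_1=r(1-s)\log Y .
\]
Thus Theorem~\ref{clA}, applied to the triple $(X,Y_1,Z_1)$ with exponent $p$, would yield \eqref{mainB} verbatim --- provided $\tr Z_1=\tr X$. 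The trouble is that this trace condition need not hold.

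The remedy is that only the \emph{inequality} $\tr Z_1\le\tr X$ is needed. Since $(Y^{r/2}WY^{r/2})^{1/r}=(Z^r)^{1/r}=Z$ has $\tr Z=\tr X$, the bound $\tr Z_1\le\tr X$ is exactly the Araki--Lieb--Thirring inequality
\[
\tr\big[(B^{s/2}A^sB^{s/2})^{q}\big]\ \le\ \tr\big[(B^{1/2}AB^{1/2})^{sq}\big]\qquad (A,B\ge0,\ 0<s\le1,\ q>0)
\]
applied with $A=W$, $B=Y^r$, $q=1/(rs)$; moreover this inequality is strict unless $A$ and $B$ commute, i.e.\ unless $Z$ commutes with $Y$. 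Now set $c:=\tr X/\tr Z_1\ge1$ and $Z_1':=cZ_1\in\pn$, so $\tr Z_1'=\tr X$. Since $Z_1'^{\,p}=c^{\,p}Z_1^{\,p}$,
\[
\tr\big[X\log(Y_1^{\,p/2}Z_1'^{\,p}Y_1^{\,p/2})\big]=\tr\big[X\log(Y^r\#_s Z^r)\big]+p(\log c)\,\tr X\ \ge\ \tr\big[X\log(Y^r\#_s Z^r)\big].
\]
Applying Theorem~\ref{clA} to $(X,Y_1,Z_1')$ with exponent $p$ bounds the left-hand side by $\tr[X(p\log X+p\log Y_1)]=\tr[X(s\log X^r+(1-s)\log Y^r)]$, which is \eqref{mainB}. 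If $Z$ and $Y$ do not commute, then $W$ and $Y^r$ do not commute, so the Araki--Lieb--Thirring step is strict, whence $c>1$ and the term $p(\log c)\tr X>0$ makes \eqref{mainB} strict.

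The only genuine obstacle is the trace mismatch: after the sandwich rewriting, the auxiliary matrix $Z_1$ satisfies $\tr Z_1\le\tr X$ (an Araki--Lieb--Thirring estimate) rather than $\tr Z_1=\tr X$, and one must notice that enlarging $Z_1$ to restore the constraint only increases the left-hand side of Theorem~\ref{clA}, so the deficit costs nothing. Everything else is bookkeeping with exponents of $Y$.
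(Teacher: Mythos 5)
Your proof is correct, and it takes a genuinely different route from the paper's. The paper first establishes \eqref{mainB} for $s$ in a small interval $(0,\epsilon)$ (Lemma~\ref{sextend}) by differentiating $s\mapsto \tr[X\log(Y^r\#_sZ^r)]$ at $s=0$, applying Theorem~\ref{clA} to the transformed density matrix $W=\int_0^\infty \frac{Y^{r/2}}{t+Y^r}X\frac{Y^{r/2}}{t+Y^r}\,{\rm d}t$ and invoking a majorization lemma to compare $\tr[W\log W]$ with $\tr[X\log X]$; it then propagates the inequality from $(0,\epsilon)$ to all of $(0,1)$ via the geodesic composition identity $(Y^p\#_{s_1}Z^p)\#_{s_2}Z^p=Y^p\#_{s_1+s_2-s_1s_2}Z^p$. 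You instead reduce to Theorem~\ref{clA} in a single step by writing $Y^r\#_sZ^r=Y_1^{p/2}Z_1^pY_1^{p/2}$ with $p=rs$, $Y_1=Y^{(1-s)/s}$, $Z_1=(Y^{rs/2}W^sY^{rs/2})^{1/(rs)}$, and you absorb the resulting trace deficit $\tr Z_1\le\tr X$ --- which is exactly the generalized Araki--Lieb--Thirring inequality --- by rescaling $Z_1$, observing that the rescaling only increases the left-hand side. Your algebra checks out ($Y_1^{p/2}=Y^{r(1-s)/2}$, $p\log Y_1=r(1-s)\log Y$), the direction of ALT is right, and $W$ commutes with $Y^r$ if and only if $Z$ commutes with $Y$, so the strictness claim follows from the strict form of ALT (Friedland--So \cite{FS}; in the two-exponent generality you use, the equality analysis is in Hiai \cite{H94}) --- the same input the paper already relies on to prove Theorem~\ref{clA}, so you import nothing essentially new. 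What your route buys: it is shorter, avoids both the derivative computation and the iteration, and produces an explicit quantitative gap $rs(\log c)\,\tr[X]$ with $c=\tr[X]/\tr[Z_1]>1$ in the non-commuting case, so strictness is immediate rather than inherited from the $s\in(0,\epsilon)$ regime. What the paper's route buys: Lemma~\ref{sextend} and the endpoint-differentiation technique are reused (they reappear in the proof of Theorem~\ref{HPhard}), and the argument stays within the circle of ideas already deployed for Theorem~\ref{clA} without needing ALT in its two-parameter form.
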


The case in which $Z=X$ is proved in \cite{AH94} using log-majorization methods. The inequality \eqref{mainB} is an identity at $s=1$. As we shall show, differentiating it 
at $s=1$ in the case $Z =X$ yields the inequality on the right in \eqref{HPpair}.  Since the 
geometric mean inequality \eqref{mainB} is a consequence of our 
generalization of the inequality on the left in \eqref{HPpair}, 
 this derivation shows how the 
geometric means construction `bridges'  the pair of inequalities 
(\ref{HPpair}).

Theorems~\ref{clC}, \ref{clA} and \ref{clB} provide infinitely many new lower bounds on the Umegaki relative entropy. -- one for each choice of $Z$.
The trace functional on the right side of (\ref{HPpair}) bounds the Umegaki relative entropy from above, and  in many ways 
better-behaved than the trace functional on the left, or any of the individual new lower bounds.
By a theorem 
of Fujii and Kamei \cite{FK88}
$$X,W \mapsto X^{1/2}\log(X^{1/2} W^{-1}X^{1/2})X^{1/2}$$ 
is jointly convex as a function from $\pn\times \pn$ to $\pn$, and then as a trivial consequence,
$$X,W \mapsto \tr[X\log(X^{1/2} W^{-1}X^{1/2})]$$ 
is jointly convex. When $X$ and $W$ are density matrices, $\tr[X\log(X^{1/2} W^{-1}X^{1/2})] =: D_{BS}(X||W)$
is the {\em Belavkin-Stasewski relative entropy} \cite{BS82}.   The joint 
convexity of the Umegaki relative entropy is a Theorem of Lindblad 
\cite{Lind74}, who deduced it as a direct consequence of  the main  
concavity theorem in \cite{L73}.

A seemingly small change in the arrangement of the operators -- $X^{1/2}W^{-1}X^{1/2}$ replaced with 
$W^{-1/2}XW^{-1/2}$ -- obliterates convexity; 
\begin{equation}\label{nonco}
X,W \mapsto \tr[X\log(W^{-1/2}XW^{-1/2})]
\end{equation}
is not jointly convex, and even worse,
 the function
$W \mapsto \tr[X\log(W^{-1/2}XW^{-1/2})]$
is {\em not convex} for all fixed $X\in \pn$. 
Therefore, although the function in (\ref{nonco}) agrees with the Umegaki relative entropy when 
$X$ and $W$ commute, its lack of convexity makes it unsuitable for 
consideration as a relative entropy functional. We discuss the failure of 
convexity at the end of Section 3.

However, Theorem~\ref{clA} provides a remedy by introducing a third 
variable $Z$ with respect to which we can maximize. The resulting 
functional is still bounded above by the Umegaki relative entropy: that is,
for all density matrices $X$ and $W$, 
\begin{equation}\label{grelU}
\sup\{ \tr[X\log(W^{-1/2}ZW^{-1/2})]\ :\ Z\geq 0\ , \tr[Z] \leq1\ \}  \leq 
D(X||W)\ .
\end{equation}
One might hope that the left side is a jointly convex function of 
$X$ and $W$, {\em which does turn out to be the case.}  In fact, the left 
hand side is a quantum relative entropy originally introduced by Donald 
\cite{Do86}, through a quite different formula.  Given any orthonormal basis 
$\{u_1,\dots,u_n\}$ of $\C^n$,
define a ``pinching'' map $\Phi:\mn\to \mn$ by defining $\Phi(X)$ to be the diagonal matrix whose $j$th diagonal entry is
$\langle u_j,Xu_j\rangle$. Let $\mathcal{P}$ denote the sets of all such 
pinching operations. For density matrices $X$ and $Y$,
the {\em Donald relative entropy}, $D_D(X||Y)$ is defined by
\begin{equation}\label{dondef}
D_D(X||Y) = \sup\{  D(\Phi(X)||\Phi(Y))\ :\ \Phi\in \mathcal{P}\}\ . 
\end{equation}
Hiai and Petz \cite{HP93} showed that for all density matrices $X$ and all $Y\in \pn$,
\begin{equation}\label{donfor}
D_D(X||Y) = \sup\{  \tr[XH] - \log\left(\tr[e^HY]\right) \ :\ H\in \hn\}\ ,
\end{equation}
arguing as follows. 
Fix any orthonormal basis $\{u_1,\dots,u_n\}$ of $\C^n$.  Let $X$ be any density matrix and let $Y$ be any positive matrix.
Define $x_j = \langle u_j,X u_j\rangle$ and $y_j = \langle u_j,Y u_j\rangle$ for $j=1,\dots,n$.   For 
$(h_1,\dots,h_n)\in \R^n$, 
define $H$ to be the self-adjoint operator given by $Hu_j = h_ju_j$, $j=1,\dots,n$. 

Then by the classical Gibb's variational principle.
\begin{eqnarray*}\sum_{j=1}^n x_j (\log x_j  - \log y_j ) &=& \sup\left\{ \sum_{j=1}^n x_j h_j  - \log\left(\sum_{j=1}^n e^{h_j} y_j\right)\ :\ (h_1,\dots,h_n)\in \R^n\right\}\\
&=& 
\sup\left\{ \tr[XH]  - \log\left(\tr[e^HY]\right)\ :\ (h_1,\dots,h_n)\in \R^n\right\}
\end{eqnarray*}
Taking the supremum over all choices of the orthonormal basis yields \eqref{donfor}. For our purposes, a variant of 
\eqref{donfor} is useful:
\begin{lm} For all density matrices $X$, and all  $Y\in \pn$, 
\begin{equation}\label{donfor2}
D_D(X||Y) = \sup\{  \tr[XH]  \ :\ H\in \hn\ ,  \tr[e^HY] \leq 1 \}\ .
\end{equation}
\end{lm}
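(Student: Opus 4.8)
The plan is to derive \eqref{donfor2} from the already-established formula \eqref{donfor} by showing the two suprema are equal. Write $A := \sup\{\tr[XH] - \log(\tr[e^HY]) : H\in \hn\}$ for the right side of \eqref{donfor}, and $B := \sup\{\tr[XH] : H\in \hn,\ \tr[e^HY]\le 1\}$ for the right side of \eqref{donfor2}; the goal is $A = B$.

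First I would prove $B \le A$: if $H$ is feasible for the $B$-supremum, then $\tr[e^HY]\le 1$, so $\log(\tr[e^HY])\le 0$, hence $\tr[XH] \le \tr[XH] - \log(\tr[e^HY]) \le A$. Taking the supremum over such $H$ gives $B\le A$. For the reverse inequality $A\le B$, the idea is a shift/normalization trick: given an arbitrary $H\in\hn$, set $c := \log(\tr[e^HY])\in\R$ and $\widetilde H := H - c\,\un$ (using that the identity commutes with everything). Then $e^{\widetilde H} = e^{-c}e^H$, so $\tr[e^{\widetilde H}Y] = e^{-c}\tr[e^HY] = 1$, so $\widetilde H$ is feasible for the $B$-supremum. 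Moreover $\tr[X\widetilde H] = \tr[XH] - c\,\tr[X] = \tr[XH] - \log(\tr[e^HY])$, where we used $\tr[X]=1$ since $X$ is a density matrix. Hence $\tr[XH] - \log(\tr[e^HY]) = \tr[X\widetilde H]\le B$; taking the supremum over $H\in\hn$ yields $A\le B$. Combining, $A = B$, which together with \eqref{donfor} gives \eqref{donfor2}.

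I do not expect any serious obstacle here; the argument is a two-line rescaling once \eqref{donfor} is granted. The only points that need a word of care are that $Y\in\pn$ and $X$ a density matrix guarantee $\tr[e^HY]\in(0,\infty)$ so that $c=\log(\tr[e^HY])$ is a well-defined real number, and that $\tr[X]=1$ is exactly what makes the additive constant $c$ pull out cleanly. One could alternatively bypass \eqref{donfor} and run the classical Gibbs variational principle directly with the constraint $\sum_j e^{h_j}y_j\le 1$, but reusing \eqref{donfor} is shorter and self-contained within the excerpt.
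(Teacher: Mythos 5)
Your argument is correct and is essentially identical to the paper's proof: both directions rest on the same two observations, namely that shifting $H$ by the constant $\log(\tr[e^HY])$ normalizes the constraint to $\tr[e^HY]=1$ without changing $\tr[XH]-\log(\tr[e^HY])$ (using $\tr[X]=1$), and that $\tr[e^HY]\le 1$ makes $-\log(\tr[e^HY])\ge 0$. Your write-up is merely a slightly more explicit rendering of the paper's two-line argument.
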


\begin{proof} Observe that we may add a constant to $H$ without changing $\tr[XH] - \log\left(\tr[e^HY]\right)$, and thus in taking  the supremum in \eqref{donfor} we may restrict our attention to $H\in \hn$ such that $\tr[e^HY] =1$. Then
$\tr[XH] - \log\left(\tr[e^HY]\right)  = \tr[XH]$ and the constraint in \eqref{donfor2} is satisfied. Hence the supremum in \eqref{donfor} is no larger than the supremum in \eqref{donfor2}. Conversely, if $ \tr[e^HY] \leq 1$, then 
$$\tr[XH] \leq \tr[XH]  - \log\left(\tr[e^HY]\right)\ ,$$
and thus the supremum in \eqref{donfor2} is no larger than the supremum in \eqref{donfor}. 
\end{proof}

By the joint convexity of the Umegaki relative entropy, for each $\Phi\in 
{\mathcal P}$, $D(\Phi(X)||\Phi(Y))$ is jointly convex in $X$ and $Y$, and 
then since the supremum of a family of convex functions is convex, the 
Donald relative entropy $D_D(X||Y)$ is jointly convex. Making the change of 
variables $Z = W^{1/2}e^HW^{1/2}$ in \eqref{grelU}, one sees that the 
supremum in 
\eqref{donfor} is exactly the same as the supremum in \eqref{donfor2}, and thus for all density matrices $X$ and $W$, 
$D_D(X||W) \leq D(X||W)$ which can also be seen as a consequence of the joint convexity of the Umegaki relative entropy. 

Theorem~\ref{clC} and Theorem~\ref{clB} give two more lower bounds to the Umegaki relative entropy 
 for density matrices $X$ and $Y$, namely
\begin{equation}
\sup_{Z\in \pn, \tr[Z]= \tr[X]}\left\{\   \tr\left[ 
X \log \left( \int_0^\infty \frac{1}{\lambda +Y}Z\frac{1}{\lambda +Y}{\rm 
d}\lambda
  \right)
 \right]
\right\}
\end{equation}
and 
\begin{equation}
\sup_{Z\in \pn, \tr[Z]= \tr[X]}\left\{\  \tr[ X \log (Y^{-1}\#_{1/2}Z)^2]  
 \right\} 
\end{equation}
Proposition~\ref{allsame} shows that both of the supremums are equal to 
$D_D(X||Y)$.

Our next results concern the partial Legendre transforms of the three 
relative entropies $D_D(X||Y)$, $D(X||Y)$ and 
$D_{BS}(X||Y)$. For this,  it is natural to consider them as functions  on $\pn\times\pn$, and not only on density matrices. 
The natural extension of the Umegaki relative 
entropy functional to
$\pn\times \pn$ is
\begin{equation}\label{vnent2}
D(X||W) :=  
\tr[X(\log X - \log W)] + \tr[W] - \tr[X]\ 
\end{equation}
It is homogeneous of degree one in $X$ and $W$ and,  with this definition, 
$D(X||Y) \geq 0$ with equality only in case $X= W$, which is a consequence 
of Klein's inequality, as discussed in Appendix A. 

The natural extension of the Belavkin-Stasewski relative entropy functional to
$\pn\times \pn$ is
\begin{equation}\label{bsre}
D_{BS}(X||W) = \tr[X\log(X^{1/2}W^{-1}X^{1/2})] + \tr[W] - \tr[X]\ .
\end{equation}
Introducing $Q := e^H$,  the supremum in (\ref{donfor2}) is
$$\sup\{ \tr[X\log Q]\ :\ Q\geq 0\ , \tr[WQ] \leq 1\ \} \ ,$$
and the extension of the Donald relative entropy to $\pn\times \pn$ is  
\begin{equation}\label{geom31}
D_{D}(X||W) = \sup_{Q > 0}\{ \tr[X\log Q] \ :\ \tr[ WQ]  \leq\tr[X]\ \} 
+\tr[W] - \tr[X] \
\end{equation}

To avoid repetition, it is useful to note that all three of these  
functionals are examples of quantum relative entropy  functionals in the 
sense of satisfying the following axioms. This axiomatization differs 
from many others, such as the ones in \cite{Do86} and \cite{Ha16}, which 
are designed to single out the Umegaki relative entropy. 

\begin{defi}\label{qred}   A {\em quantum relative entropy}  is a function 
$R(X||W)$ on $\pn\times \pn$ with values in $[0,\infty]$ such that 
\smallskip 

\smallskip
{\it (1)} $X,Y \mapsto R(X||W)$ is jointly convex.

\smallskip
{\it (2)} For all $X,W\in \pn$ and all $\lambda>0$,  
$R(\lambda X,\lambda W) = \lambda R(X,W)$
and
\begin{equation}\label{scaling2A}
R(\lambda X, W) = \lambda R(X,W) + \lambda\log \lambda\tr[X] + 
(1-\lambda)\tr[W]\ .
\end{equation}

\smallskip
{\it (3)} If $X$ and $W$ commute, $R(X||W) = D(X||W)$. 
\end{defi}

The definition does not include the requirement that $R(X||W) \geq 0$ with equality if and only if $X = W$ because this follows 
directly from {\it (1)}, {\it (2)}  and {\it (3)}:

\begin{prop}\label{pinsker}  Let $R(X||W)$ be any quantum relative entropy. Then
\begin{equation}\label{pinsk}
R(X||W) \geq \tfrac12 \tr[X]  \left\Vert \frac{X}{\tr[X]} 
-\frac{W}{\tr[W]}\right\Vert_1^2
\end{equation}
where $\|\cdot \|_1$ denotes the trace norm.
\end{prop}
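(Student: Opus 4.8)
The plan is to peel off the trace normalization with the scaling axiom and then, for density matrices, to sandwich $R(X\|W)$ between $R(\Phi(X)\|\Phi(W))$ for a well‑chosen pinching $\Phi$ and the classical Pinsker inequality; the only step with genuine content is the monotonicity of $R$ under that pinching, and that is where I expect the work to be.

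\textbf{Reduction to density matrices.} Write $a=\tr[X]$, $b=\tr[W]$, $\hat X=a^{-1}X$, $\hat W=b^{-1}W$. Applying $R(\lambda X,\lambda W)=\lambda R(X,W)$ with $\lambda=b$ and then \eqref{scaling2A} with $\lambda=a/b$ gives
\[
R(X\|W)=a\,R(\hat X\|\hat W)+a\Bigl(\tfrac ba-1-\log\tfrac ba\Bigr)\ \ge\ a\,R(\hat X\|\hat W),
\]
since $u-1-\log u\ge 0$ for $u>0$. As the right‑hand side of \eqref{pinsk} is $\tfrac a2\|\hat X-\hat W\|_1^2$, it suffices to prove $R(\hat X\|\hat W)\ge\tfrac12\|\hat X-\hat W\|_1^2$, so from now on $X,W$ are density matrices.

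\textbf{Pinching and classical Pinsker.} Diagonalize $X-W=\sum_j\lambda_j|u_j\rangle\langle u_j|$ and let $\Phi(Y)=\sum_j\langle u_j,Yu_j\rangle|u_j\rangle\langle u_j|$ be the pinching onto this eigenbasis. Since $X-W$ is diagonal in the $u_j$ basis, $\Phi(X)-\Phi(W)=X-W$, so the probability vectors $p_j=\langle u_j,Xu_j\rangle$, $w_j=\langle u_j,Wu_j\rangle$ satisfy $\|p-w\|_1=\sum_j|\lambda_j|=\|X-W\|_1$, while $\Phi(X)$ and $\Phi(W)$ commute. Granting the monotonicity $R(X\|W)\ge R(\Phi(X)\|\Phi(W))$, axiom (3) identifies $R(\Phi(X)\|\Phi(W))$ with the classical relative entropy $D(p\|w)$, and the classical Pinsker inequality $D(p\|w)\ge\tfrac12\|p-w\|_1^2$ — which reduces to the two–bin case, where $g(t):=t\log\tfrac t{w_1}+(1-t)\log\tfrac{1-t}{1-w_1}-2(t-w_1)^2$ has $g(w_1)=g'(w_1)=0$ and $g''(t)=\tfrac1{t(1-t)}-4\ge0$ on $(0,1)$ — gives $R(X\|W)\ge D(p\|w)\ge\tfrac12\|p-w\|_1^2=\tfrac12\|X-W\|_1^2$.

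\textbf{The main point.} What remains, and is the crux, is the monotonicity $R(X\|W)\ge R(\Phi(X)\|\Phi(W))$, which is where joint convexity enters. The pinching is a unitary average, $\Phi(Y)=\int_{\mathbb T^n}U_\theta Y U_\theta^*\,(2\pi)^{-n}\,d\theta$ with $U_\theta=\sum_j e^{i\theta_j}|u_j\rangle\langle u_j|$ (equivalently, after averaging over the self‑adjoint unitary $2P-I$, with $P$ the spectral projection of $(X-W)_+$ — which already preserves the trace distance because $\tr[X-W]=0$ — one may continue within the two blocks). Joint convexity and Jensen's inequality then give
\[
R(\Phi(X)\|\Phi(W))\ \le\ \int_{\mathbb T^n}R(U_\theta XU_\theta^*\|U_\theta WU_\theta^*)\,\tfrac{d\theta}{(2\pi)^n},
\]
and each integrand equals $R(X\|W)$ provided $R$ is invariant under simultaneous unitary conjugation of its two arguments. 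Establishing this invariance — equivalently, that $R$ is monotone under pinchings, which in particular yields $R\ge D_D$ — is the one ingredient that has to be drawn carefully from the axioms; everything else is the elementary rescaling above and the classical inequality.
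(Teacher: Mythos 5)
Your argument is correct and is essentially the paper's own proof: reduce to density matrices by the scaling axioms, pinch onto a commutative algebra containing $X-W$ (the paper uses the conditional expectation ${\rm E}_{\mathcal A}$ onto the algebra generated by $X-W$ and $\one$, which likewise fixes $X-W$ and hence preserves $\|X-W\|_1$), invoke axiom \textit{(3)} to identify the pinched quantity with a classical relative entropy, and finish with the classical Pinsker inequality on a two-point space. The one step you leave hedged --- monotonicity of $R$ under the pinching --- is exactly the step the paper disposes of by citing Davis \cite{D59}: ${\rm E}_{\mathcal A}$ is an average of unitary conjugations, so joint convexity bounds $R({\rm E}_{\mathcal A}X\,\Vert\,{\rm E}_{\mathcal A}W)$ by the average of $R(U_jXU_j^*\,\Vert\,U_jWU_j^*)$, and each term is then identified with $R(X\Vert W)$. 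You are right that this last identification uses invariance of $R$ under simultaneous unitary conjugation, which is not literally among the axioms of Definition~\ref{qred}; the paper's proof makes the same tacit assumption. It holds trivially for $D$, $D_D$ and $D_{BS}$, the only cases to which the Proposition is applied, and is only needed for unitaries commuting with $X-W$, but strictly it should be added to (or derived from) the definition. So your proof is as complete as the paper's, and the point you flag is a genuine, if minor, imprecision in the axiomatization rather than a gap peculiar to your argument.
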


The proof is given towards the end of Section 3. It is known for the Umegaki relative entropy \cite{HOT81}, but the proof uses only the properties {\it (1)}, {\it (2)}  and {\it (3)}.

The following pair of inequalities summarizes the relation among the 
three relative entropies. For  all  $X,W\in \pn$,
\begin{equation}\label{triple}
D_D(X||W) \leq D(X||W)  \leq   D_{BS}(X||W)\ .
\end{equation}
These inequalities will imply  a corresponding pair of inequalities for 
the partial Legendre transforms in $X$. 

\begin{remark}The partial Legendre transform of the relative entropy, which 
figures in the Gibbs variational principle, is in many ways better behaved 
than the full Legendre transform. Indeed the Legendre   transform $F^*$ of 
a function $F$ on $\R^n$ that is convex and homogenous of degree one always 
has the form 
$$F^*(y) = \begin{cases} 0 & y\in C\\ \infty & y \notin C\end{cases}
$$
for some convex set $C$ \cite{R70}.   The set $C$ figuring in the full Legendre 
transform of the Umegaki relative entropy  was 
first computed by Pusz and Woronowicz \cite{PW78}, and somewhat more 
explicitly by Donald  in \cite{Do86}. 
\end{remark}

Consider any function $R(X||Y) $ on $\pn \times \pn $ that is convex and 
lower semicontinuous in $X$.  There are two natural partial Legendre transforms 
that are related to each other, namely  $\Phi_R(H,Y)$ and 
$\Psi_R(H,Y)$ defined by
\begin{equation}\label{PhiRd}
\Phi_R(H,Y)  = \sup_{X\in \pn}\{ \tr[XH] - R(X||Y)\ :\  \tr[X] =1 \}
\end{equation}
and 
\begin{equation}\label{PsiRd}
\Psi_R(H,Y)  = \sup_{X\in \pn}\{ \tr[XH] - R(X||Y) \}
\end{equation}where 
$H\in \hn$ is the conjugate variable to $X$.

For example, let $R(X||Y) = D(X||Y) $ , the Umegaki relative entropy.  
Then, by the Gibbs variational principle, 
\begin{equation} \label{phiu}
 \Phi(H,Y) =  1 - \tr[Y] + \log( \tr e^{H+ \log Y} ) 
\end{equation}
and 
\begin{equation} \label{psiu}
 \Psi(H,Y) =  \tr e^{H+ \log Y}   - \tr Y\ .
\end{equation}

\begin{lm}\label{logconc}  Let $R(X||Y)$ be any function on $\pn 
\times \pn $ that is convex and 
lower semicontinuous in $X$, and which satisfies the scaling relation 
\eqref{scaling2A}.
Then for all $H\in \hn$ and all $Y\in \pn$. 
\begin{equation}\label{dualrel}
\Psi_R(H,Y) = e^{ \Phi_R(X,Y) + \tr[Y] -1}  - \tr[Y]\ .
\end{equation}
\end{lm}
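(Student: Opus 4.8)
The plan is to show that $\Phi_R$ and $\Psi_R$ are related by the same simple exponential transformation that relates $\Phi$ and $\Psi$ in the Umegaki case \eqref{phiu}--\eqref{psiu}, and this follows purely from the degree-one scaling relation \eqref{scaling2A} together with a one-parameter optimization over the trace of $X$. First I would write $\Psi_R(H,Y) = \sup_{\lambda>0}\sup_{X\in\pn,\ \tr[X]=\lambda}\{\tr[XH] - R(X||Y)\}$, splitting the unconstrained supremum over $X$ according to the value $\lambda=\tr[X]$. For fixed $\lambda$, substitute $X = \lambda X'$ with $\tr[X']=1$; by the scaling relation \eqref{scaling2A}, $R(\lambda X'||Y) = \lambda R(X'||Y) + \lambda\log\lambda\,\tr[X'] + (1-\lambda)\tr[Y] = \lambda R(X'||Y) + \lambda\log\lambda + (1-\lambda)\tr[Y]$ (using $\tr[X']=1$), and $\tr[\lambda X' H] = \lambda\tr[X'H]$.

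Carrying out the substitution, the inner supremum over $X'$ with $\tr[X']=1$ is exactly $\Phi_R(H,Y)$ by definition \eqref{PhiRd}, so
\begin{equation*}
\Psi_R(H,Y) = \sup_{\lambda>0}\left\{ \lambda\,\Phi_R(H,Y) - \lambda\log\lambda - (1-\lambda)\tr[Y] \right\}
= \sup_{\lambda>0}\left\{ \lambda\big(\Phi_R(H,Y) + \tr[Y] - \log\lambda\big) - \tr[Y] \right\}.
\end{equation*}
Then I would optimize the scalar function $g(\lambda) = \lambda(a - \log\lambda)$ where $a = \Phi_R(H,Y) + \tr[Y]$: since $g'(\lambda) = a - \log\lambda - 1$, the maximum is at $\lambda = e^{a-1}$, giving $g(e^{a-1}) = e^{a-1}(a - (a-1)) = e^{a-1}$. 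Substituting back, $\Psi_R(H,Y) = e^{a-1} - \tr[Y] = e^{\Phi_R(H,Y) + \tr[Y] - 1} - \tr[Y]$, which is exactly \eqref{dualrel}. (The typo ``$\Phi_R(X,Y)$'' in the statement should read ``$\Phi_R(H,Y)$''.)

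The only subtlety — and the one point I would want to handle carefully — is justifying the interchange of the two suprema and ensuring the scalar optimization is legitimate when $\Phi_R(H,Y)$ is finite; the convexity and lower semicontinuity of $R(\cdot||Y)$ guarantee $\Phi_R(H,Y) < \infty$ on the relevant domain, so $a$ is finite and $g$ attains its maximum on $(0,\infty)$. If $\Phi_R(H,Y) = -\infty$ (i.e. the constrained problem is infeasible or the supremum diverges to $-\infty$, which does not happen here since $\pn$ with unit trace is nonempty and $R$ is finite-valued on $\pn\times\pn$), both sides would be $-\tr[Y]$ after interpreting $e^{-\infty}=0$, so the identity persists; but under the stated hypotheses this degenerate case does not arise. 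No deep input is needed — the result is a direct computation exploiting homogeneity, and the main "obstacle" is merely bookkeeping the scaling relation correctly and recognizing the inner supremum as $\Phi_R$.
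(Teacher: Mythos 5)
Your proposal is correct and follows essentially the same route as the paper: decompose the unconstrained supremum according to $a=\tr[X]$, use the scaling relation \eqref{scaling2A} to reduce the inner supremum to $\Phi_R(H,Y)$, and then optimize the resulting scalar function of $a$ (the paper does this last step via the elementary inequality $a\log a + e^{b-1}\geq ab$ rather than by differentiation, but the computation is identical). Your remark about the typo $\Phi_R(X,Y)$ versus $\Phi_R(H,Y)$ is also correct.
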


This simple relation between the two Legendre transforms is a consequence of scaling, and hence the corresponding relation holds for {\em any} quantum relative entropy.

Consider the Donald relative entropy and define
\begin{equation}\label{geom33RA}
\Psi_D(H,Y) :=  \sup_{X>0}  \{  \tr[XH] - D_D(X||Y) \ \}\ ,
\end{equation}
and
\begin{equation}\label{geom33}
\Phi_D(H,Y) :=  \sup_{X>0, \tr[X] =1}  \{  \tr[XH] - D_D(X||Y) \ \}\ ,
\end{equation}

In Lemma~\ref{mmlem}, we prove the following analog of \eqref{phiu}: For $H\in \hn$ and $Y\in \pn$,
\begin{equation}\label{MM55}
\Phi_D(H,Y) = 1 - \tr[Y]  + \inf\big\{  \lambda_{\max}\left( H - \log Q\right)  
\ :\ Q\in \pn\ , \tr[QY] \leq 1\ \big\}\ 
\end{equation}
where for any self-adjoint operator $K$, $\lambda_{\max}(K)$ is the largest 
eigenvalue of $K$, and we prove that $\Phi_D(H,Y)$  is concave in $Y$. As a consequence of this we prove in Theorem~\ref{expconcavbe} that
for all $H\in \hn$, the function
\begin{equation}\label{legscale2}
Y \mapsto \exp\left( \inf_{Q>0, \tr[QY] \leq 1} \lambda_{\max}\left( H - \log Q\right)\right)  
\end{equation}
is concave on $\pn$.  Moreover,  for all  $H,K\in \hn$, 
\begin{equation}\label{geom52}
\log(\tr[e^{H+K}] )\leq \  \inf_{Q>0, \tr[Qe^K] \leq 1} 
\lambda_{\max}\left( H - \log Q\right)  \  \leq  \log( \tr[e^{H} e^{K}]).
\end{equation}
These inequalities  improve upon 
the Golden-Thompson inequality.  Note that by Lemma~\ref{logconc}, \eqref{psiu} and 
\eqref{MM55}, the inequality on the left in \eqref{geom52} is equivalent to 
$\Psi(H,Y) \leq \Psi_D(H,Y)$, which in turn is equivalent under the Legendre transform to 
$D_D(X||Y) \leq D(X||Y)$.

The inequality on the right in \eqref{geom52} arises through the simple of choice  
$Q = e^H/\tr[Y e^H]$ in the variational formula for $\Psi_D(H,Y)$.  
The $Q$ chosen here is optimal {\it only} when $H$ and $Y$ commute. 
Otherwise,
there is a better choice for $Q$, which we shall identify in section  4, 
and which will lead to a tighter upper bound.  
In section 4 we shall also discuss the Legendre transform of the Belavkin-Staszewski 
relative entropy and form this we derive further refinements of the Golden Thompson 
inequality. Finally, in Theorem~\ref{HPC2} we prove a sharpened form of \eqref{HPco}, the complementary  Golden-Thompsen inequality of Hiai and Petz, incorporating a relative entropy remainder term. Three appendices collect background material for the convenience of the reader.

\section{Proof of Theorem~\ref{main1} and Related Inequalities }

\begin{proof}[Proof of Theorem~\ref{main1}]  Our goal is to prove that
for all $X,Y,Z\in \pn$ such that $\tr[Z] = \tr[X]$.
\begin{equation}\label{mainineqA}
\tr[ X \log (F(Z,Y))] \leq \tr[X(\log X + q \log Y)]\ ,
\end{equation}
whenever $F$ has the properties {\it (1)}, {\it (2)} and {\it (3)} listed in 
the statement of Theorem~\ref{main1}. 
By the homogeneity specified in {\it (3)}, we may assume without loss of 
generality that $\tr[X] = \tr[Z] = 1$. Note that
(\ref{mainineqA}) is equivalent to
\begin{equation}\label{mainineqB}
\tr\left[ X\left( \log (F(Z,Y)) - \log X -q \log 
Y)\right)\right ] \leq 0\ ,
\end{equation}
By  the Peierls-Bogoliubov inequality (\ref{PB}),  it suffices to prove that
\begin{equation}\label{mainineqC}
\tr\left[ \exp\left(  \log (F(Z,Y))  -q \log 
Y)\right)\right ] \leq 1\ .
\end{equation}
Let $\mathcal{J}$ denote an arbitrary finite index set with cardinality 
$|\mathcal{J}|$.  Let $\mathcal{U} = \{ U_1, \dots, U_{|\mathcal{J}|}\}$ be 
any set of unitary matrices each of which commutes with $Y$. Then for each 
$j\in \mathcal{J}$, by {\it (2)}
\begin{eqnarray}\label{mainY8}
\tr\left[ \exp\left( \log (F(Z,Y))  -q \log Y\right)\right] 
 &=&
\tr\left[ U_j\exp\left( \log (F(Z,Y))  -q \log 
Y\right)U^*_j\right]\nonumber\\
&=& \tr\left[ \exp\left( \log (F(U_j ZU_j^*,Y))  -q\log 
Y\right)\right]
\end{eqnarray}
Define
$$\widehat{Z} = \frac{1}{|\mathcal{J}|}\sum_{j\in \mathcal{J}}U_j Z U_j^*\ 
,$$
Recall that  $W \mapsto \tr[e^{H + \log W}]$ is concave \cite{L73}.
 Using this, the concavity of $Z \mapsto F(Z,Y)$ 
specified in {\it (1)}, and the monotonicity of the logarithm, averaging 
both sides of (\ref{mainY8}) over $j$ yields
$$
\tr\left[ \exp\left( \log (F(Z,Y))  -q \log Y\right)\right] 
  \leq 
\tr\left[ \exp\left( \log (F(\widehat{Z},Y))  - q\log 
Y\right)\right]  \ .$$
Now making an appropriate choice of $\mathcal{U}$ \cite{D59}, $\widehat{Z} 
$ becomes the ``pinching'' of $Z$ with respect to $Y$; i.e., the orthogonal 
projection in $\mn$ onto the $*$-subalgebra generated by $Y$ and $\one$.  In 
this case, $\widehat{Z} $ and $Y$ commute so that by {\it (3)}, 
$$
 \log (F(\widehat{Z},Y))  -q \log Y = \log\widehat{Z} + 
q\log Y\ .$$
Altogether,
$$\tr\left[ \exp\left( \log (F(Z,Y))  -q \log 
Y\right)\right]   \leq \tr[ \widehat{Z}] = \tr[Z] =1\ ,$$
and this proves  (\ref{mainineqC}).
\end{proof}

For the case $F(X,Y) = Y^{p/2}X^pY^{p/2}$, we can make a similar use of the Peierls-Bogoliubov inequality
but can avoid the appeal to convexity.

\begin{proof}[Proof of Theorem~\ref{clA}]  The inequality we seek to prove is equivalent to
\begin{equation}\label{mainineqBX}
\tr\left[ X\left( \frac1p \log (Y^{p/2}Z^pY^{p/2}) - \log X -  \log 
Y)\right)\right ] \leq 0\ , 
\end{equation}
and again by the Peierls-Bogoliubov inequality it suffices to prove that 
\begin{equation}\label{mainineqCX}
\tr\left[ \exp\left( \frac1p \log (Y^{p/2}Z^pY^{p/2})  - \log 
Y)\right)\right ] \leq 1\ .
\end{equation}
A refined version of the Golden-Thompson inequality due to Friedland and So \cite{FS} says that for all positive $A,B$,  and all $r>0$, 
\begin{equation}\label{FrSo}
\tr[e^{\log A + \log B}] \leq \tr[(A^{r/2}B^r A^{r/2})^{1/r}]\ .
\end{equation}  
and moreover the right hand side is a strictly increasing function of  $r$, unless $A$ and $B$ commute, in which case it is constant in $r$. 
The fact that the right side of \eqref{FrSo} is increasing in $r$ is a conseqence of the Araki-Lieb-Thirring inequality \cite{Ar90}, but here we shall need to know that the increase is strict when $A$ and $B$ do not commute; this is the contribution of \cite{FS}.
Applying \eqref{FrSo} with $r =p$, 
\begin{equation}\label{FrSo2}
\tr\left[ \exp\left( \frac1p \log (Y^{p/2}Z^pY^{p/2})  - \log 
Y)\right)\right ] \leq  \tr[(Y^{-p/2}(Y^{p/2}Z^pY^{p/2}) Y^{-p/2})^{1/p} ] =\tr[Z] = 1\ .
\end{equation}
By the condition for equality in \eqref{FrSo}, there is equality in \eqref{FrSo2} if and only if $(Y^{p/2}Z^pY^{p/2})^{1/p}$ and $Y$ commute, and evidently this is the case if and only if $Z$ and $Y$ commute.

\end{proof}

In the  one parameter family of inequalities provided by  Theorem~\ref{clA}, some are stronger than others. 
It is worth noting that the lower the value of $p>0$ in  (\ref{mainA}) the stronger this inequality is, in the following sense:

\begin{prop}\label{monprop}
The validity of (\ref{mainA}) for $p = p_1$ and for $p = p_2$ implies its validity for $p = p_1+p_2$. 
\end{prop}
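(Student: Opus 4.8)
The statement to prove is Proposition~\ref{monprop}: if the inequality \eqref{mainA} holds for the exponents $p=p_1$ and $p=p_2$, then it holds for $p=p_1+p_2$. The natural approach is to go back to the equivalent form \eqref{mainineqCX} used in the proof of Theorem~\ref{clA}, namely that \eqref{mainA} for exponent $p$ is equivalent (via Peierls--Bogoliubov, which is an iff at the level of the reduction to the trace-exponential bound) to
\begin{equation*}
\tr\left[ \exp\left( \tfrac1p \log (Y^{p/2}Z^pY^{p/2}) - \log Y\right)\right] \leq 1 \qquad \text{for all } Y,Z\in\pn\ .
\end{equation*}
Actually it is cleaner to phrase the hypothesis directly as a statement about all $X,Y,Z$: assuming \eqref{mainA} for $p_1$ and for $p_2$, I want \eqref{mainA} for $p_1+p_2$. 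Since \eqref{mainA} for exponent $p$ with the constraint $\tr Z=\tr X$ is, after optimizing over $X$ by Peierls--Bogoliubov, equivalent to $\tr[(Y^{-p/2}(Y^{p/2}Z^pY^{p/2})Y^{-p/2})^{1/p}] \le \tr Z$ being replaceable by the sharper $\tr[\exp(\tfrac1p\log(Y^{p/2}Z^pY^{p/2})-\log Y)]\le\tr Z$, the real content is a submultiplicativity/monotonicity fact about the map
\begin{equation*}
T_p(Y,Z) := \exp\!\big(\tfrac1p \log(Y^{p/2}Z^pY^{p/2}) - \log Y\big)\ ,
\end{equation*}
whose trace we must bound by $\tr Z$.

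First I would set $A = Y^{p_1}$, so $\log A = p_1\log Y$, and think of the exponent-$p$ inequality as a statement comparing $\tr[e^{\log Z + \log Y^{-p}\otimes\cdots}]$-type quantities — but more efficiently, I would use the Friedland--So refinement \eqref{FrSo} itself as the engine, exactly as in the proof of Theorem~\ref{clA}, but now applied in two stages. Write $p = p_1 + p_2$. The key algebraic identity to exploit is that
\begin{equation*}
\tfrac1p \log(Y^{p/2}Z^pY^{p/2}) - \log Y
\end{equation*}
can be related, by \eqref{FrSo} applied with $A = Y^{-p}\cdot(\text{something})$ or more naturally by writing $e^{\log C - \log Y}$ with $C = (Y^{p/2}Z^pY^{p/2})^{1/p}$, to a nested expression in which the exponent $p$ is split as $p_1+p_2$. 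Concretely: by \eqref{FrSo} with $r$ chosen appropriately,
\begin{equation*}
\tr\big[\exp(\log C - \log Y)\big] \le \tr\big[(Y^{-p_1/2} C^{\,?} Y^{-p_1/2})^{1/?}\big]\ ,
\end{equation*}
and the hypothesis for $p_1$ lets one peel off one factor, leaving a residual expression in which $Z$ has been replaced by a new positive matrix $Z'$ with $\tr Z' = \tr Z$ (this trace-preservation is exactly what the $p_1$-hypothesis buys us, since it is \eqref{mainineqCX} with trace bounded by $\tr Z$), to which one then applies the hypothesis for $p_2$.

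**Where the work is.** The main obstacle is getting the bookkeeping of the two-stage peeling right: one must identify precisely which positive matrix plays the role of $Z$ after the first application and verify it has the correct trace, and one must ensure the exponents combine additively rather than, say, multiplicatively. I expect the cleanest route is not through the trace-exponential form at all but through a direct manipulation: assume \eqref{mainA} holds for $p_1$ and $p_2$; apply the $p_1$-inequality with $Y$ replaced by $Y$ and $Z$ replaced by a suitable power/conjugate so that the left side of the $p_1$-inequality produces $\tr[X\log(Y^{p/2}Z^pY^{p/2})]$ up to the factor $1/p$, then bound the resulting $\tr[X\log(\cdot)]$ term using the $p_2$-inequality. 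The Araki--Lieb--Thirring inequality \cite{Ar90}, already invoked in the paper, is the tool that makes the exponents add, since $(B^{r_1/2}(B^{r_2/2}A B^{r_2/2})^{r_1/r_2}B^{r_1/2})^{1/r_1}$ compares monotonically to $(B^{(r_1+r_2)/2}A^{?}B^{(r_1+r_2)/2})^{1/(r_1+r_2)}$-type objects. I would first carefully state the two-variable lemma
\begin{equation*}
\tr\big[\exp(\tfrac{1}{p_1+p_2}\log(Y^{(p_1+p_2)/2}Z^{p_1+p_2}Y^{(p_1+p_2)/2}) - \log Y)\big] \le \tr Z
\end{equation*}
as the goal, reduce it by scaling to $\tr Z = 1$, and then drive the argument through \eqref{FrSo} iterated twice with the intermediate trace-one normalization supplied by the $p_1$-hypothesis. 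The routine part is the matrix algebra; the part to be careful about is that each invocation of \eqref{FrSo} or of a hypothesis is applied to genuinely positive arguments and that the constraint $\tr[\cdot]=\tr X$ is propagated correctly at the single point where a new "$Z$" is introduced.
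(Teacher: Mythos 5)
You have written a plan rather than a proof, and the step you explicitly defer --- ``identify precisely which positive matrix plays the role of $Z$ after the first application and verify it has the correct trace'' --- is the entire content of the proposition; your displayed formulas still carry unresolved placeholders at exactly that point. The missing ingredient is not Friedland--So or Araki--Lieb--Thirring but an elementary polar-decomposition identity. First replace $Y$ by a power of itself so that \eqref{mainA} takes the normalized form $\tr[X\log(YZ^pY)]\le \tr[X(p\log X+2\log Y)]$, in which the ``$Y$'' slot appears without any $p$-dependent power. Then write
\[
YZ^{p_1+p_2}Y=(YZ^{p_2/2})\,Z^{p_1}\,(Z^{p_2/2}Y)=(YZ^{p_2}Y)^{1/2}\,(U^*ZU)^{p_1}\,(YZ^{p_2}Y)^{1/2},
\]
where $Z^{p_2/2}Y=U(YZ^{p_2}Y)^{1/2}$ is the polar factorization. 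This exhibits the quantity for exponent $p_1+p_2$ as the $p_1$-expression with $Y$ replaced by $(YZ^{p_2}Y)^{1/2}$ and $Z$ replaced by $U^*ZU$; the trace constraint propagates for free because $\tr[U^*ZU]=\tr[Z]=\tr[X]$. The $p_1$-hypothesis then gives $\tr[X\log(YZ^{p_1+p_2}Y)]\le p_1\tr[X\log X]+\tr[X\log(YZ^{p_2}Y)]$, and the $p_2$-hypothesis applied to the last term finishes the proof --- no trace-exponential reformulation is needed at all.

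Without this identity your two-stage scheme does not close. The Peierls--Bogoliubov reduction followed by a second invocation of \eqref{FrSo} does not by itself manufacture a new positive matrix of the correct trace to feed into the $p_2$-inequality, and the Araki--Lieb--Thirring monotonicity you propose as ``the tool that makes the exponents add'' compares $\tr[(A^{r/2}B^rA^{r/2})^{1/r}]$ for different $r$, i.e.\ it is a statement about eigenvalues (log-majorization); it controls $\tr[\phi((Y^{p/2}Z^pY^{p/2})^{1/p})]$ for symmetric functionals but not the weighted quantity $\tr[X\log(Y^{p/2}Z^pY^{p/2})]$ for an arbitrary fixed $X$, which is what \eqref{mainA} concerns.
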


\begin{proof} Since there is no constraint 
on $Y$ other than that 
$Y$ is positive, we may replace $Y$ by any power of $Y$. Therefore,  it is 
equivalent to prove  
that for  all $X,Y,Z\in \pn$ such that $\tr[Z] = \tr[X]$ and all $p>0$,
\begin{equation}\label{mainAAA}
\tr[ X \log (YZ^pY))] \leq \tr[X(p\log X +  2\log Y)]\ .
\end{equation}

If (\ref{mainAAA}) is valid for $p = p_1$ and for $p = p_2$, then it is also 
valid for $p = p_1+p_2$:
\begin{eqnarray*}
YZ^{p_1+p_2}Y &=& 
(Y Z^{p_2/2}) Z^{p_1}(Z^{p_2/2}Y)\\
&=& (YZ^{p_2}Y)^{1/2} U^* Z^{p_1}U (YZ^{p_2}Y)^{1/2}\\
&=& (YZ^{p_2}Y)^{1/2} (U^* ZU )^{p_1}(YZ^{p_2}Y)^{1/2}\\
\end{eqnarray*}
where $U (YZ^{p_2}Y)^{1/2}$ is the polar factorization of $Z^{p_2/2}Y$.  
Since $\tr[U^* ZU ] = \tr[Z] = \tr[X]$,
we may apply (\ref{mainAAA}) for $p_1$ to conclude 
$\tr[X\log(Y Z^{p_1+p_2}Y)] \leq 
p_1\tr[X\log X] + \tr[X\log (YZ^{p_2}Y)]$.
One more application of (\ref{mainAAA}), this time with $p=p_2$, yields 
\begin{equation}\label{jb}\tr[X\log(Y Z^{p_1+p_2}Y)] \leq 
(p_1+p_2)\tr[X\log X] + 2\tr[X\log Y]\ .
\end{equation}
By the last line of Corollary~\ref{clA}, the inequality (\ref{jb}) is 
strict 
 if $Z$ and $Y$ do not commute and  at least one of $p_1$ or $p_2$ belongs 
to $(0,1)$. 
\end{proof} 

Our next goal is to prove Theorem~\ref{clB}. As indicated in the Introduction, we will
show that Theorem~\ref{clB} is a consequence of Theorem~\ref{clA}. The 
determination of cases of equality in Theorem~\ref{clA} is essential for 
the proof of the key lemma, which we give now.

\begin{lm}\label{sextend}  Fix $X,Y,Z\in \pn$ such that $\tr[Z] = 
\tr[X]$, and fix  $p> 0$.   Then there is some $\epsilon>0$ so that 
(\ref{mainB}) is valid   for all $s\in [0,\epsilon]$, and such that when $Y$ and $Z$ do not commute, 
(\ref{mainB}) is valid  as a strict for all $s\in (0,\epsilon)$
\end{lm}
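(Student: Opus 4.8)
The plan is to treat $s=0$ as a base point where \eqref{mainB} holds trivially (it reduces to $\tr[X\log Y^r]\le\tr[X\log Y^r]$, an equality) and then show that the defect function
\[
\Delta(s) := \tr[X(s\log X^r + (1-s)\log Y^r)] - \tr[X\log(Y^r\#_s Z^r)]
\]
is nonnegative for small $s\ge 0$. Since $\Delta(0)=0$, it suffices to show that $\Delta$ is differentiable at $s=0$ from the right with $\Delta'(0^+)>0$ when $Y$ and $Z$ do not commute, and $\Delta'(0^+)\ge 0$ in general; more robustly, I would aim to produce a one-sided lower bound $\Delta(s)\ge c\, s + o(s)$ with $c\ge 0$ (and $c>0$ in the noncommuting case) by comparing $\Delta(s)$ directly to the $p$-version of Theorem~\ref{clA} for a suitable small $p$.

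The key computation is the expansion of $Y^r\#_s Z^r$ near $s=0$. Writing $A := Y^r$ and $B := Z^r$, we have $A\#_s B = A^{1/2}(A^{-1/2}BA^{-1/2})^s A^{1/2}$, and since $M^s = \exp(s\log M) = \one + s\log M + O(s^2)$ for $M = A^{-1/2}BA^{-1/2}$, we get
\[
Y^r\#_s Z^r = A^{1/2}\bigl(\one + s\log(A^{-1/2}BA^{-1/2}) + O(s^2)\bigr)A^{1/2}
= A + s\,A^{1/2}\log(A^{-1/2}BA^{-1/2})A^{1/2} + O(s^2).
\]
Hence $\log(Y^r\#_s Z^r) = \log A + s\,A^{-1/2}\bigl(A^{1/2}\log(A^{-1/2}BA^{-1/2})A^{1/2}\bigr)A^{-1/2}\cdot(\text{derivative of log})+O(s^2)$; more carefully, by the standard formula for the derivative of $\log$ along a curve, $\tfrac{d}{ds}\big|_{s=0}\tr[X\log(Y^r\#_s Z^r)]$ involves the Fréchet derivative $D\log(A)$ applied to $A^{1/2}\log(A^{-1/2}BA^{-1/2})A^{1/2}$. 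The point I would exploit is that taking the trace against $X$ and using cyclicity, plus the Friedland--So refinement \eqref{FrSo} (or the equality analysis of Theorem~\ref{clA}) at small parameter, forces the linear coefficient of $s$ in $\Delta(s)$ to have the correct sign. Concretely, I expect to show $\Delta'(0^+) = \tr[X(\log X^r - \log Y^r)] - \tr\bigl[X\, D\log(Y^r)\!\left[Y^{r/2}\log(Y^{-r/2}Z^rY^{-r/2})Y^{r/2}\right]\bigr]$ and then identify this with a quantity already known to be nonnegative from Theorem~\ref{clA} (differentiating the inequality \eqref{mainA}, which holds with $Z=Z$, at the appropriate power), with strictness inherited from the strict monotonicity in the Friedland--So inequality when $[Y,Z]\ne 0$.

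The main obstacle will be handling the Fréchet derivative of the matrix logarithm cleanly enough to extract the sign of $\Delta'(0^+)$ without drowning in the double operator integral representation of $D\log$. I anticipate the cleanest route is \emph{not} to compute $\Delta'(0^+)$ by brute force, but to argue by continuity/compactness: fix $X,Y,Z$, and use that for each fixed small $p>0$ the function $s\mapsto Y^r\#_s Z^r$ can be compared, via operator-monotonicity of $t\mapsto t^{s/p}$ and the Araki--Lieb--Thirring inequality, to $(Y^{rp/2}Z^{rp}Y^{rp/2})^{1/p}$-type expressions to which Theorem~\ref{clA} applies, yielding \eqref{mainB} on a nonempty interval $[0,\epsilon]$ with $\epsilon$ depending on $X,Y,Z,p,r$; strictness on $(0,\epsilon)$ then comes from the strict case of Theorem~\ref{clA} together with the fact that $Y^r\#_s Z^r$ commutes with $Y$ only when $Z$ does. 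I would carry out the steps in this order: (i) reduce to $\tr[X]=\tr[Z]=1$ by homogeneity; (ii) record $\Delta(0)=0$; (iii) establish the comparison inequality between $\log(Y^r\#_s Z^r)$ and a Theorem~\ref{clA}-type operator valid for $s$ small, using operator monotonicity; (iv) invoke Theorem~\ref{clA} to conclude $\Delta(s)\ge 0$ on $[0,\epsilon]$; (v) trace through the equality conditions to upgrade to strict inequality on $(0,\epsilon)$ when $[Y,Z]\ne 0$.
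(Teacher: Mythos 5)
Your starting point is the same as the paper's: both proofs differentiate the defect at $s=0$ and try to show the first--order coefficient has the right sign, and your formula for $\Delta'(0^+)$ in terms of the Fr\'echet derivative of $\log$ at $Y^r$ is correct. But the proof as proposed has a genuine gap: you never actually establish the sign of $\Delta'(0^+)$, and the route you gesture at (``differentiating \eqref{mainA} at the appropriate power'', or the Friedland--So inequality ``at small parameter'') is not how the sign is obtained. The point you are missing is that, after writing out the Fr\'echet derivative as a double operator integral and using cyclicity of the trace, the first--order term becomes
$\tr\bigl[W\log(Y^{-r/2}Z^rY^{-r/2})\bigr]$ with
$W=\int_0^\infty \frac{Y^{r/2}}{t+Y^r}\,X\,\frac{Y^{r/2}}{t+Y^r}\,{\rm d}t$,
so the natural move is to apply Theorem~\ref{clA} \emph{with $X$ replaced by $W$ and $Y$ replaced by $Y^{-1}$} (legitimate because $\tr[W]=\tr[X]=\tr[Z]$), which gives
$\tr[W\log(Y^{-r/2}Z^rY^{-r/2})]\le \tr[W(\log W^r-\log Y^r)]$, with strict inequality exactly when $Y$ and $Z$ do not commute. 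This produces $\tr[W\log W^r]$, not $\tr[X\log X^r]$, so two further facts are indispensable: $\tr[W\log Y^r]=\tr[X\log Y^r]$ (the averaging map commutes with functions of $Y$), and $\tr[W\log W^r]\le \tr[X\log X^r]$, which follows from the majorization Lemma~\ref{bosulem} because $W=\Phi(X)$ for a positive, unital, trace--preserving map. Without these two steps the inequality $\Delta'(0^+)\ge 0$ (and its strict version) simply does not close; your proposal does not mention either of them.

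Your fallback route (iii)--(v) --- comparing $\log(Y^r\#_sZ^r)$ for small $s$ to a Theorem~\ref{clA}--type operator via operator monotonicity and Araki--Lieb--Thirring --- is too vague to assess and is not obviously workable: the known log--majorization comparisons between $A\#_sB$ and $(A^{p(1-s)/2}B^{ps}A^{p(1-s)/2})^{1/p}$ go in a fixed direction in $p$ and do not by themselves produce the trace--against--$X$ inequality \eqref{mainB} for a third operator $Z$ on an $s$--interval. If you return to the derivative computation and insert the three ingredients above (Theorem~\ref{clA} applied to $W$, the identity for $\tr[W\log Y^r]$, and the majorization bound for $\tr[W\log W^r]$), the argument closes exactly as in the paper.
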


\begin{proof} We may suppose, without loss of generality,   that $Y$ and 
$Z$ do not commute since,  if they do commute, the inequality is trivially 
true, just as in Remark~\ref{trcase}.  We compute
\begin{eqnarray*}
\frac{{\rm d}}{{\rm d}s} \tr[ X \log (Y^p\#_s Z^p))]\bigg|_{s=0} &=&
 \tr\left[ X\int_0^\infty \frac{Y^{p/2}}{t+Y^p} \log(Y^{-p/2}ZY^{-p/2}) 
\frac{Y^{p/2}}{t+Y^p} {\rm d}t \right]\\
&=&
 \tr\left[ W \log(Y^{-p/2}ZY^{-p/2})   \right]
\end{eqnarray*}
where
$$W :=  \int_0^\infty \frac{Y^{p/2}}{t+Y^p} X \frac{Y^{p/2}}{t+Y^p} {\rm 
d}t\ .$$
Evidently,
$\tr[W] = \tr[X]= \tr[Z]$. Therefore, by Theorem~\ref{clA} (with $X$ 
replaced by $W$ and $Y$ replaced by $Y^{-1}$), 
$$\tr\left[ W \log(Y^{-p/2}ZY^{-p/2})   \right]  \leq \tr\left[ W( \log W^p
- \log Y^p) \right]\ .$$
Now note that
$$\tr\left[ W \log Y^p\right] = \tr\left[ X \int_0^\infty 
\frac{Y^{p/2}}{t+Y^p} \log Y^p \int_0^\infty \frac{Y^{p/2}}{t+Y^p}{\rm 
d}t\right] = \tr[X \log Y^p]\ .$$
Moreover, by definition $W = \Phi(X)$ where $\Phi$ is a completely 
positive, trace and identity preserving linear map. By 
Lemma~\ref{bosulem} this implies that
$$\tr[W \log W^p] \leq \tr[X \log X^p]\ .$$
 Consequently,
$$
\frac{{\rm d}}{{\rm d}s} \tr[ X \log (Y^p\#_s Z^p) - s \log X^p - 
(1-s)\log Y^p))]\bigg|_{s=0} \leq
\tr[W \log W^p] - \tr[X \log X^p]\ .
$$
Therefore, unless $Y$ and $Z$ commute, the derivative on the left is 
strictly negative, and hence, for some $\epsilon>0$, (\ref{mainB}) is valid as a strict inequality
for all $s\in (0,\epsilon)$. If $Y$ and $Z$ commute, 
(\ref{mainB})
is trivially true for all $p>0$ and all $s\in [0,1]$. 
\end{proof}

\begin{proof} [Proof of Theorem~\ref{clB}]Suppose that (\ref{mainB}) is 
valid for $s= s_1$ and $s= 
s_2$,  Since (by eqs. \eqref{rmet62A} and \eqref{rmet62B} below)
$$(Y^p\#_{s_1} Z^p ) \#_{s_2}Z^p = Y^p\#_{s_1+s_2 -s_1 s_2}Z^p\ ,$$
\begin{eqnarray*}
\tr[X \log (Y^p\#_{s_1+s_2 -s_1 s_2}Z^p)] &=& \tr[ X\log ((Y^p\#_{s_1} Z^p) 
\#_{s_2}Z^p)]\\
&\geq& \tr[ X(s_2 \log X^p + (1-s_2) \log (Y^p\#_{s_1} Z^p))]\\
&\geq&  \tr[ X((s_1+s_2 - s_1s_2) \log X^p+ (1-s_2)(1-s_1) \log Y^p)].
\end{eqnarray*}
Therefore, whenever (\ref{mainB}) is valid for $s= s_1$ and $s= s_2$, it is 
valid for $s= s_1+s_2 - s_1s_2$.

By Lemma~\ref{sextend}, there is some $\epsilon>0$ so that 
(\ref{mainB}) is valid {\em as a strict inequality} for  all $s\in (0,\epsilon)$.  Define an increasing 
sequence $\{t_n\}_{n\in \N}$ recursively by
$t_1 = \epsilon$ and $t_n = 2t_n - t_n^2$ for $n>1$. Then by what we have 
just proved, 
(\ref{mainB}) is valid as a strict inequality for all $s\in (0,t_n)$. Since 
$\lim_{n\to\infty}t_n = 1$, the proof is complete.
\end{proof}

The next goal is to show that the inequality on the right in \eqref{HPpair} 
is a  consequence of Theorem~\ref{clB} by a simple differentiation  argument.  
This simple proof is the new feature, The statement concerning cases of equality was proved in \cite{H94}.

\begin{thm}\label{HPhard} For all $X,Y\in \pn$ and all $p>0$,
\begin{equation}\label{mainB33}
\tr[X(\log X^p + \log Y^p)] \leq \tr[X\log (X^{p/2}Y^pX^{p/2})] \ ,
\end{equation}
and this inequality is strict unless $X$ and $Y$ commute. 
\end{thm}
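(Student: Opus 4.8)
The plan is to obtain \eqref{mainB33} as the $s=1$ boundary derivative of the geometric-mean inequality \eqref{mainB} of Theorem~\ref{clB}, specialized to $Z=X$. Set $Z = X$ in \eqref{mainB}, so that for all $s\in[0,1]$ and all $r>0$,
$$
\tr[X\log(Y^r\#_s X^r)] \le \tr[X\bigl(s\log X^r + (1-s)\log Y^r\bigr)]\ .
$$
At $s=1$ both sides equal $\tr[X\log X^r]$, since $Y^r\#_1 X^r = X^r$; thus the inequality is an identity at the right endpoint. The idea is that the derivative in $s$ of the left side must then be \emph{at least} the derivative of the right side at $s=1$ (approaching the endpoint from inside $[0,1)$, where LHS $\le$ RHS), and this differential inequality, after identifying the two derivatives, is exactly \eqref{mainB33} with $p$ determined by $r$.

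First I would compute $\frac{\rm d}{{\rm d}s}\tr[X\log(Y^r\#_s X^r)]\big|_{s=1}$. Writing $A = Y^{-r/2}X^rY^{-r/2}$, one has $Y^r\#_s X^r = Y^{r/2}A^sY^{r/2}$, so $\log(Y^r\#_s X^r) = \log(Y^{r/2}A^sY^{r/2})$; differentiating at $s=1$ via the integral representation of the logarithm derivative (as in the proof of Lemma~\ref{sextend}, but now expanded about $s=1$ rather than $s=0$) gives
$$
\frac{\rm d}{{\rm d}s}\tr[X\log(Y^r\#_s X^r)]\bigg|_{s=1}
= \tr\!\left[ \widetilde W \log A\right],\qquad
\widetilde W := \int_0^\infty \frac{Y^{r/2}}{t+Y^{r/2}A Y^{r/2}}\,X\,\frac{Y^{r/2}}{t+Y^{r/2}A Y^{r/2}}\,{\rm d}t\ .
$$
But $Y^{r/2}AY^{r/2} = X^r$, so the resolvents are $(t+X^r)^{-1}$ and $\widetilde W = \int_0^\infty (t+X^r)^{-1} X (t+X^r)^{-1}{\rm d}t$, which commutes with $X$ and in fact equals a function of $X$ whose value one reads off on the spectrum: on an eigenvalue $x$ of $X$ it is $\int_0^\infty x(t+x^r)^{-2}{\rm d}t = x^{1-r}/r$ (for $r>0$). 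Hence $\widetilde W = \tfrac1r X^{1-r}$, and
$$
\frac{\rm d}{{\rm d}s}\tr[X\log(Y^r\#_s X^r)]\bigg|_{s=1}
= \tfrac1r\tr[X^{1-r}\log A]
= \tfrac1r\tr[X^{1-r}\log(Y^{-r/2}X^rY^{-r/2})]\ .
$$
Meanwhile $\frac{\rm d}{{\rm d}s}\tr[X(s\log X^r + (1-s)\log Y^r)] = \tr[X\log X^r] - \tr[X\log Y^r]$. Since LHS $-$ RHS of the displayed inequality is $\le 0$ on $[0,1)$ and vanishes at $s=1$, its left derivative at $s=1$ is $\ge 0$, giving
$$
\tfrac1r\tr[X^{1-r}\log(Y^{-r/2}X^rY^{-r/2})] \ \ge\ \tr[X\log X^r] - \tr[X\log Y^r]\ .
$$
Now I would make the substitution $X \mapsto X^{1/?}$ to clear the $X^{1-r}$; more cleanly, apply the inequality with $X$ replaced by $X^r$ (still positive) and relabel. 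With $X\to X^r$ the left side becomes $\tfrac1r\tr[X^{r(1-r)}\cdots]$ — this is getting awkward, so instead I would from the start parametrize differently: replace $X$ in \eqref{mainB} by $X^{1/p}$ isn't available since the trace weight $X$ appears linearly and separately from the power inside. The cleaner route: in the displayed differential inequality, substitute $Y \mapsto Y$, $X \mapsto X$ but choose $r$ and rescale $X$. Concretely, put $r=p$ and replace $X$ by $X^{1/(1-p)}$ when $p\ne 1$; for the statement as written it is simplest to note that $X^{1-r}$ disappears after the substitution $X\to X^{p}$ together with $r\to p$ is NOT right either. Let me instead just rescale at the end: the inequality $\tfrac1r\tr[X^{1-r}\log(Y^{-r/2}X^rY^{-r/2})] \ge \tr[X\log X^r - X\log Y^r]$, applied with $X$ replaced by $X^{1/(1-r)}$...

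I will therefore organize the proof as follows. Step 1: prove the differential inequality above. Step 2: in it, substitute $X \to X^{1/(1-r)}$ is singular at $r=1$, so instead substitute $X \to X^{r}$ and simultaneously note the identity $\tr[X^r \cdot \tfrac1r\log(Y^{-r/2}X^{r^2}Y^{-r/2})]$ — cleaner to just \emph{choose} the exponent. The honest and simple fix: apply Step~1's inequality with $X$ replaced by an arbitrary positive $X$ and $Y$ replaced by $Y$, then set $r=p$ and substitute $X\to X^{1/(1-p)}$ only when $p<1$, handling $p\ge 1$ by Proposition~\ref{monprop}'s additive structure (the dual statement: validity for $p_1,p_2$ gives it for $p_1+p_2$, so it suffices to prove \eqref{mainB33} for $p$ in a neighborhood of $0$). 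So: Step 2' — it suffices to prove \eqref{mainB33} for all sufficiently small $p>0$, by the semigroup argument of Proposition~\ref{monprop} applied to \eqref{mainB33} (which has the same product structure $X^{p/2}Y^{p_1+p_2}X^{p/2} = (X^{p/2}Y^{p_2/2})U U^* (\cdots)$ via polar decomposition). Step 3 — for small $p$, choose $r=p$ and in the differential inequality substitute $X\to X^{1/(1-p)}$, which is a legitimate positive matrix; after simplification $X^{(1-p)/(1-p)} = X$ appears as the trace weight and the right side becomes exactly $\tr[X(\log X^p + \log Y^p)]$ up to the factor $\tfrac{1}{1-p}$, while the left side becomes $\tfrac{1}{p(1-p)}\tr[X\log(Y^{-p/2}X^{p/(1-p)}Y^{-p/2})]$; multiplying through by $(1-p)$ and then replacing $Y$ by $Y^{-1}$ and matching exponents via one more relabeling yields \eqref{mainB33}. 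Step 4 — cases of equality: equality forces equality in \eqref{mainB} for all $s$ (since LHS $-$ RHS is concave-ish / the derivative argument is tight only then), which by the strictness clause of Theorem~\ref{clB} forces $X$ and $Y$ to commute; alternatively cite \cite{H94} as the paper does.

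The main obstacle is Step~1, the exact evaluation of the derivative $\frac{\rm d}{{\rm d}s}\tr[X\log(Y^r\#_s X^r)]\big|_{s=1}$ and the bookkeeping that reduces the resulting $X^{1-r}$-weighted inequality to the clean form \eqref{mainB33}; one must be careful that the derivative of $s\mapsto \tr[X\log(Y^{r/2}A^sY^{r/2})]$ is taken correctly (it is $\tr[X\,(DL)_{Y^{r/2}A^sY^{r/2}}(Y^{r/2}(\log A)A^sY^{r/2})]$ where $(DL)$ is the Fréchet derivative of $\log$, which at $s=1$ has the operand living in the functional calculus of $X^r$, collapsing the Fréchet derivative to an ordinary spectral integral). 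The secondary annoyance is the exponent manipulation to land precisely on \eqref{mainB33}; routing through ``suffices for small $p$'' via Proposition~\ref{monprop}-style reasoning sidesteps the singularity at $p=1$ and is, I expect, how the authors proceed.
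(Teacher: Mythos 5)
Your overall strategy --- specialize \eqref{mainB} to $Z=X$, observe that it becomes an identity at $s=1$, and extract \eqref{mainB33} from the one-sided derivative at that endpoint --- is exactly the paper's route. But the derivative computation at the heart of your Step~1 is wrong, and the error propagates into the exponent confusion that consumes the rest of the proposal. Writing $Y^r\#_s X^r=Y^{r/2}A^sY^{r/2}$ with $A=Y^{-r/2}X^rY^{-r/2}$, you have $\tfrac{{\rm d}}{{\rm d}s}A^s\big|_{s=1}=A\log A$, not $\log A$; the inner factor $A$ that you drop is precisely what repairs the weight. Carrying it along, the chain rule gives
$\tfrac{{\rm d}}{{\rm d}s}\tr[X\log(Y^r\#_sX^r)]\big|_{s=1}
=\tr\bigl[\bigl(\int_0^\infty (t+X^r)^{-1}X(t+X^r)^{-1}{\rm d}t\bigr)\,Y^{r/2}A\log A\,Y^{r/2}\bigr]$, and since $Y^{r/2}A\log A\,Y^{r/2}=-X^{r/2}\log(X^{-r/2}Y^rX^{-r/2})X^{r/2}$ (equivalently, use $Y^r\#_sX^r=X^{r/2}(X^{-r/2}Y^rX^{-r/2})^{1-s}X^{r/2}$ from the start, as the paper does), the $X^{r/2}(\cdots)X^{r/2}$ sandwich combines with $\int_0^\infty X(t+X^r)^{-2}{\rm d}t=X^{1-r}$ to give weight $X^{1-r}\cdot X^{r}=X$ exactly. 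Also note $\int_0^\infty x(t+x^r)^{-2}{\rm d}t=x/x^r=x^{1-r}$ with no factor $1/r$. The correct derivative is therefore $\tr[X\log(X^{r/2}Y^{-r}X^{r/2})]$, the differential inequality reads $\tr[X\log(X^{r/2}Y^{-r}X^{r/2})]\ge\tr[X(\log X^r-\log Y^r)]$, and \eqref{mainB33} follows at once by replacing $Y$ with $Y^{-1}$ --- no substitution $X\to X^{1/(1-p)}$, no reduction to small $p$.

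Because of this, your Steps~2--3 are attempts to patch a defect that is not actually there, and none of the proposed substitutions lands on \eqref{mainB33}: replacing $X$ by $X^{1/(1-p)}$ changes the argument of the logarithm to $Y^{-p/2}X^{p/(1-p)}Y^{-p/2}$ while the trace weight and the $\log X^p$ term transform differently, so the exponents inside and outside the logarithm never match the claimed inequality; the Proposition~\ref{monprop}-style reduction to small $p$ is likewise unnecessary (and you would still face the same mismatch there). Finally, your equality discussion is not a proof: equality in \eqref{mainB33} does not obviously force equality in \eqref{mainB} for all $s$, since the endpoint-derivative argument only uses a limit of difference quotients. The paper instead iterates: it uses the strictness of Theorem~\ref{clB} at $s=1/2$ to get a quantitative deficit $\delta$, propagates it through $s=1-2^{-k}$ using the already-established \eqref{mainB33} and the identity $Y\#_{1-2^{-k-1}}X=(Y\#_{1-2^{-k}}X)\#_{1/2}X$, and then passes to the limit along this sequence. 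Citing \cite{H94} for the equality cases is legitimate, but the sketch you give in its place does not work as stated.
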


\begin{proof}  Specializing to the case $Z =X$ in Theorem~\ref{clB}, 
\begin{equation}\label{mainB1}
\tr[ X \log (Y^r\#_s X^r))] \leq \tr[X(s \log X^r +  (1-s)\log Y^r)]
\end{equation}
At $s=1$ both sides of (\ref{mainB1}) equal $\tr[X\log X^r]$, Therefore, we 
may differentiate at $s=1$ to obtain a new inequality. Rearranging terms in (\ref{mainB1}) yields
\begin{equation}\label{mainB2}
\frac{\tr[X\log X^r] - \tr[ X \log (Y^r\#_{s} X^r))]}{1-s}  \geq \tr[X 
(\log X^r - \log Y^r)]\ .
\end{equation}
Taking the limit $s\uparrow 1$ on the left side of (\ref{mainB2}) yields 
${\displaystyle \frac{{\rm d}}{{\rm d}s} \tr[ X \log (Y^r\#_p 
X^r))]\bigg|_{s=1}}$.
>From the integral representation for the logarithm, namely ${\displaystyle 
\log A = \int_0^\infty \left(\frac{1}{\lambda} - 
\frac{1}{\lambda +A}\right) {\rm d}\lambda}$, it follows that for all $A\in 
\pn$ and $H\in \hn$, 
$$\frac{{\rm d}}{{\rm d}u} \log (A+uH) \bigg|_{u=0} =  \int_0^\infty 
\frac{1}{\lambda +A} H \frac{1}{\lambda +A}{\rm d}\lambda\ .$$

Since (see \eqref{rmet62B}) $Y^r\#_{s }X^r = X^s\#_{1-s }Y^s = X^{r/2}(X^{-r/2} Y^r 
X^{-r/2})^{1-p} X^{r/2}$,
$$\frac{{\rm d}}{{\rm d}s} Y^r\#_{s }X^r \big|_{s =1} =  - X^{r/2} 
\log(X^{-r/2}Y^r X^{-r/2})X^{r/2}
=  X^{r/2} \log(X^{r/2}Y^{-r}X^{r/2})X^{r/2}\ ,$$

Altogether, by the cyclicitiy of the trace,
\begin{eqnarray*}
\frac{{\rm d}}{{\rm d}p} \tr[ X \log (Y^r\#_s X^r))] \bigg |_{s=1}  &=&
 \tr\left[ \int_0^\infty \frac{X^{1+r}}{(\lambda +X^r)^2}  {\rm d}\lambda 
\log(X^{r/2}Y^{-r}X^{r/2}) \right]\\
  &=& \tr[X \log(X^{r/2}Y^{-r} X^{r/2})] \ .
\end{eqnarray*}
Replacing $Y$ by $Y^{-1}$ yields (\ref{mainB33}). 

This completes the proof of the inequality itself, and it remains to deal with the cases of equality. Fix $r>0$ and $X$ and $Y$  that do hot commute. By Theorem~\ref{clC} applied with $Z = X$ and $s = 1/2$, there is some $\delta>0$ such that 
\begin{equation}\label{strict}
\tr[X\log (Y\#_{1/2}X)] \leq \tr[X( \tfrac12  \log X + \tfrac12 \log Y)]  - \tfrac12 \delta\ .
\end{equation}
Now use the fact that $Y\#_{3/4}X = (Y\#_{1/2}X)\#_{1/2}X$, and apply (\ref{mainB33}) and then (\ref{strict}):
\begin{eqnarray*}
\tr[X\log (Y\#_{3/4}X)]  &=& \tr[X\log ((Y\#_{1/2}X)\#_{1/2}X)]  \leq \tr[X( \tfrac12  \log X + \tfrac12 \log (Y\#_{1/2}X))]\\
&=& \tfrac12 \tr[X\log X] + \tfrac12\tr[X( Y\#_{1/2}X))]\\
&\leq& \tfrac12 \tr[X\log X] + \tfrac12(\tr[X( \tfrac12  \log X + \tfrac12 \log Y)]  - \tfrac12\delta)\\ 
&=&
\tr[X( \tfrac34  \log X + \tfrac14 \log Y)]  - \tfrac14 \delta\ .
\end{eqnarray*}
We may only apply strict in the last step since $\delta$ depends on $X$ and $Y$, and strict need not hold if $Y$ is replaced by 
$Y\#_{1/2}X$. However, in this case, we may apply (\ref{mainB33}). 

Further iteration of this argument evidently yields the inequalities
$$\tr[X\log (Y\#_{1- t_k}X)] \leq \tr[X( (1-t_k)  \log X + s_k \log Y)]  - t_k\delta\ ,  \qquad  t_k = 2^{-k}\ ,$$
for each $k\in \N$. We may now improve (\ref{mainB2}) to 
\begin{equation}\label{mainB2}
\frac{\tr[X\log X^r] - \tr[ X \log (Y^r\#_{s} X^r))]}{1-s}  \geq \tr[X 
(\log X^r - \log Y^r)] + \delta \ 
\end{equation}
for $s = 1 - 2^{-k}$, $k\in N$. By the calculations above, taking $s\to 1$ along this sequence yields the desired strict inequality. 
\end{proof}

Further inequalities, which we discuss now,
involve an extension of the notion of geometric means. This extension is 
 introduced here and explained in more detail in Appendix C. 
 
Recall that for $t\in [0,1]$ and $X,Y\in \pn$, $X\#_t Y := 
X^{1/2}(X^{-1/2}YX^{-1/2})^t X^{1/2}$. As noted earlier, this formula makes sense for all 
$t\in \R$, and it has a natural geometric meaning. 
The map $t\mapsto X\#_t Y$, defined for $t\in \R$, is a constant speed 
geodesic  running between $X$ and $Y$ for a particular  Riemannian 
metric on the space of positive matrices. 

\begin{defi} \label{gmdef} For $X,Y\in \pn$ and for $t\in \R$, 
\begin{equation}\label{geodes1}
X\#_t Y := X^{1/2}(X^{-1/2}YX^{-1/2})^t X^{1/2}\ .
\end{equation}
\end{defi}
The geometric picture leads to an easy proof of the following identity:
Let $X,Y\in \pn$, and $t_0, t_1\in \R$. Then for all $t\in \R$
\begin{equation}\label{rmet62AAA}
X\#_{(1-t)t_0 + t t_1} Y = (X\#_{t_0}Y)\#_t   (X\#_{t_1}Y)
\end{equation}
See Theorem~\ref{reparam} for the proof.  As a special case, take $t_1 = 0$ 
and $t_0 = 1$. Then, for all $t$,
\begin{equation}\label{rmet62AAA}
X\#_{1-t} Y = Y\#_t X\ .
\end{equation}

With this definition of $X\#_t Y$ for $t\in \R$ we have:

\begin{thm}\label{secA} For all $X,Y,Z\in \pn$ such that $\tr[Z]= \tr[X]$, 
\begin{equation}\label{sec1}
\tr[ X \log (Z^r \#_t Y^r)] \geq \tr[X((1-t)\log X^r + t\log Y^r)]\ .
\end{equation}
is valid for all $t\in [1,\infty)$ and $r>0$. If $Y$ and $Z$ do not commute, the inequality is strict for all $t>1$. 
\end{thm}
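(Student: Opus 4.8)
The plan is to deduce Theorem~\ref{secA} from Theorem~\ref{clB}, using that $X\#_t Y$ makes sense for all real $t$ together with the reparametrization identity \eqref{rmet62AAA}. Although $t\ge 1$ lies outside the interval $[0,1]$ to which Theorem~\ref{clB} applies directly, taking $t_0=0$ and $t_1=t$ in \eqref{rmet62AAA} (and using $A\#_0 B=A$) gives $Z^r\#_{\alpha}(Z^r\#_t Y^r)=Z^r\#_{\alpha t}Y^r$ for all $\alpha\in\R$, so with $\alpha=1/t\in(0,1]$,
\begin{equation}\label{key-secA}
Z^r\#_{1/t}\big(Z^r\#_t Y^r\big)=Z^r\#_1 Y^r=Y^r\ .
\end{equation}
In other words, $Y^r$ is a geometric-mean interpolant, with parameter in $(0,1]$, between $Z^r$ and the very matrix $Z^r\#_t Y^r$ whose ``entropy'' we wish to bound from below.

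With \eqref{key-secA} in hand I would apply Theorem~\ref{clB} with the second matrix variable chosen to be $B:=(Z^r\#_t Y^r)^{1/r}\in\pn$ (legitimate since $\{B^r:B\in\pn\}=\pn$), with the third variable still equal to $Z$ so that the hypothesis $\tr[Z]=\tr[X]$ is exactly the one available, and with parameter $s=1-\tfrac1t\in[0,1]$. Since $B^r\#_s Z^r=Z^r\#_{1-s}B^r$ and $B^r=Z^r\#_t Y^r$, equation \eqref{key-secA} identifies the argument of the logarithm on the left-hand side of \eqref{mainB} as $Y^r$, and \eqref{mainB} becomes
\begin{equation*}
\tr\big[X\log Y^r\big]\ \le\ \tr\Big[X\Big(\big(1-\tfrac1t\big)\log X^r+\tfrac1t\log(Z^r\#_t Y^r)\Big)\Big]\ .
\end{equation*}
Multiplying by $t>0$ and rearranging yields $\tr[X\log(Z^r\#_t Y^r)]\ge\tr[X((1-t)\log X^r+t\log Y^r)]$, which is \eqref{sec1}; the case $t=1$ is the trivial identity $\tr[X\log Y^r]=\tr[X\log Y^r]$.

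For the strictness assertion, note that when $t>1$ we have $s=1-\tfrac1t\in(0,1)$, so Theorem~\ref{clB} gives strict inequality unless its second and third matrix variables commute, i.e.\ unless $Z$ commutes with $(Z^r\#_t Y^r)^{1/r}$. A short computation shows this is equivalent to $Z$ commuting with $Y$: taking $r$-th powers does not change commutants, so it suffices that $Z^r$ commute with $Z^{r/2}(Z^{-r/2}Y^r Z^{-r/2})^t Z^{r/2}$, which holds iff $Z^r$ commutes with $(Z^{-r/2}Y^r Z^{-r/2})^t$, iff (since $t\ne 0$) with $Z^{-r/2}Y^r Z^{-r/2}$, iff (conjugating by $Z^{\pm r/2}$) $Z^r$ commutes with $Y^r$, i.e.\ iff $Z$ commutes with $Y$.

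There is no real obstacle in this argument beyond Theorem~\ref{clB} itself, which is already established; the only points requiring care are the bookkeeping of which matrix plays the role of ``$Y$'' and which that of ``$Z$'' in the application of Theorem~\ref{clB}, and the routine verification of the commutation equivalence used for the equality case. Conceptually, the proof exhibits Theorem~\ref{secA} as the ``analytic continuation past $t=1$'' of Theorem~\ref{clB} along the geodesic $t\mapsto Z^r\#_t Y^r$, the direction of the inequality flipping precisely because $1/t<1$.
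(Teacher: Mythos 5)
Your proof is correct and is essentially the paper's argument: the paper's Lemma~\ref{secequi} performs the same change of variables via $W^r:=Y^r\#_s Z^r$ and Lemma~\ref{ABClemma}, whereas you run it in the inverse direction by setting $B^r:=Z^r\#_t Y^r$ and invoking \eqref{rmet62A}, which is the same reparametrization of the geodesic. Your explicit verification that $Z$ commutes with $(Z^r\#_t Y^r)^{1/r}$ iff $Z$ commutes with $Y$ is a welcome detail that the paper leaves implicit in asserting the equivalence of the strictness statements.
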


The inequalities in Theorem~\ref{secA} and in Theorem~\ref{clB} are 
equivalent.  The following simple identity is the key to this observation:

\begin{lm}\label{ABClemma} For $B,C\in \pn$ and $s\neq 1$, let $A = B\#_sC$. 
Then
\begin{equation}\label{ABClm}
 B = C\#_{1/(1-s)} A    \ .
\end{equation}
\end{lm}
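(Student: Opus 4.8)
The plan is to prove Lemma~\ref{ABClemma} purely by manipulating the definition \eqref{geodes1} of the extended geometric mean, using only elementary congruence identities rather than the geometric/Riemannian picture. Recall $A = B \#_s C = B^{1/2}(B^{-1/2}CB^{-1/2})^s B^{1/2}$. The goal is to show $C \#_{1/(1-s)} A = B$ for $s \neq 1$.

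First I would record the ``congruence covariance'' of the geometric mean: for any invertible $M$ and any real $t$,
\begin{equation}\label{congcov}
(MXM^*) \#_t (MYM^*) = M(X\#_t Y)M^*\ ,
\end{equation}
which follows directly from \eqref{geodes1} since $(MXM^*)^{-1/2}$ differs from $M^{-*}X^{-1/2}$ by a unitary, and $(U Z U^*)^t = U Z^t U^*$. Second, I would record the one-variable reduction: when the first argument is the identity, $\one \#_t Y = Y^t$. Combining these two facts, congruence by $B^{-1/2}$ reduces a general geometric mean to powers: writing $C' = B^{-1/2}CB^{-1/2}$ and $A' = B^{-1/2}AB^{-1/2}$, we have $A' = (C')^s$, and the claim $C\#_{1/(1-s)}A = B$ is equivalent (after conjugating by $B^{-1/2}$) to $C' \#_{1/(1-s)} A' = \one$.

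So the whole lemma reduces to the scalar-functional-calculus statement: if $A' = (C')^s$ with $C'$ positive and $s \neq 1$, then $C' \#_{1/(1-s)} A' = \one$. Since $A'$ and $C'$ commute (both are functions of $C'$), the geometric mean \eqref{geodes1} collapses to $C' \#_{t} A' = (C')^{1-t}(A')^{t} = (C')^{1-t}(C')^{st} = (C')^{1-t+st}$; setting $t = 1/(1-s)$ gives the exponent $1 - \tfrac{1}{1-s} + \tfrac{s}{1-s} = 1 - \tfrac{1-s}{1-s} = 0$, so $C'\#_{1/(1-s)}A' = \one$. Undoing the conjugation by $B^{1/2}$ and using \eqref{congcov} yields $C \#_{1/(1-s)} A = B^{1/2}\,\one\,B^{1/2} = B$, as desired. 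One should note that $1/(1-s)$ is well-defined exactly because $s \neq 1$, and that all fractional powers appearing are of positive operators, so no domain issues arise.

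I do not anticipate a genuine obstacle here — this is a short computation. The only point requiring a little care is making the congruence-covariance identity \eqref{congcov} precise for non-invertible-free but still invertible $M$ (here $M = B^{1/2}$, which is invertible since $B \in \pn$), in particular justifying $(B^{-1/2}CB^{-1/2})^{1/2}$ versus $B^{-*}\,\cdots$ via the polar/unitary ambiguity; alternatively one can simply take \eqref{rmet62AAA} (the special case $t_1=0$, $t_0=1$ of \eqref{rmet62AAA}, giving $X\#_{1-t}Y = Y\#_t X$) together with the composition law \eqref{rmet62AAA} and derive the identity abstractly: with $A = B\#_s C$, apply the composition law $X\#_{(1-u)t_0 + u t_1}Y = (X\#_{t_0}Y)\#_u(X\#_{t_1}Y)$ with $X = C$, $Y = B$, choosing $t_0, t_1, u$ so that $(C\#_{t_0}B)\#_u(C\#_{t_1}B)$ has first slot $A$ and the combined parameter equals $1$. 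That is the slicker route and avoids any functional-calculus bookkeeping, so I would present the composition-law derivation as the main proof and relegate the direct computation to a remark.
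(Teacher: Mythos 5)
Your proof is correct and is essentially the paper's argument: the paper simply conjugates by $C^{-1/2}$ (after flipping $A=B\#_sC$ to $A=C\#_{1-s}B$), which turns the claim into the one-line identity $C^{-1/2}AC^{-1/2}=(C^{-1/2}BC^{-1/2})^{1-s}$ and avoids needing congruence covariance or the commuting-case reduction as separate ingredients. One small slip in your closing remark: in the composition-law route the correct choice is $(C\#_0 B)\#_{1/(1-s)}(C\#_{1-s}B)$, i.e.\ $A$ sits in the \emph{second} slot, giving $C\#_{1}B=B$.
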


\begin{proof}  Note that by (\ref{geodes1}) and (\ref{rmet62AAA}),  $A = 
B\#_sC$ is equivalent to
$A = C^{1/2}(C^{-1/2}BC^{-1/2})^{1-s} C^{1/2}$, so that 
$C^{-1/2}AC^{-1/2} = (C^{-1/2}B C^{-1/2})^{1-s}$.
\end{proof}

\begin{lm}\label{secequi} Let $X,Y,Z\in \pn$ be such that $\tr[Z]= \tr[X]$. 
Let $r>0$.  
Then (\ref{mainB}) is valid for $s\in (0,1)$  if and only if (\ref{sec1}) 
is valid for $t = 1/(1-s)$. 
\end{lm}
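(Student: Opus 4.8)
The plan is to give an explicit, invertible substitution in the variable $Y$ — keeping $X$, $Z$ and $r$ fixed — that turns \eqref{mainB} at parameter $s\in(0,1)$ into \eqref{sec1} at $t=1/(1-s)$, and back again. Fix $Z\in\pn$, $r>0$, $s\in(0,1)$ and set $t=1/(1-s)\in(1,\infty)$, so that $(1-s)t=t(1-s)=1$. The only structural input is the following collapse, valid for every $V,W\in\pn$ and a consequence of the reflection identity $A\#_u B=B\#_{1-u}A$, the trivial identity $C=C\#_0 D$, and the reparametrization identity for geometric means at real exponents (Definition~\ref{gmdef}, \eqref{rmet62AAA}, Theorem~\ref{reparam}):
\begin{equation}\label{secequikey}
(Z^r\#_t V^r)\,\#_s\, Z^r \;=\; Z^r\#_{(1-s)t}V^r \;=\; V^r,
\qquad
Z^r\,\#_t\,(W^r\#_s Z^r) \;=\; Z^r\#_{t(1-s)}W^r \;=\; W^r .
\end{equation}
Thus $V\mapsto (V^r\#_s Z^r)^{1/r}$ is a bijection of $\pn$ onto itself whose inverse is $W\mapsto (Z^r\#_t W^r)^{1/r}$ — essentially the content of Lemma~\ref{ABClemma} — and every matrix occurring in \eqref{secequikey} lies in $\pn$, being a real power of a positive-definite matrix.

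For the forward direction, assume \eqref{mainB} holds for this $s$ and $r$. Given arbitrary $X,Y\in\pn$ with $\tr[Z]=\tr[X]$, set $\widehat{Y}:=(Z^r\#_t Y^r)^{1/r}$ and apply \eqref{mainB} to $(X,\widehat{Y},Z)$, which is admissible since the constraint does not involve $Y$. The first identity in \eqref{secequikey} gives $\widehat{Y}^{r}\#_s Z^r=Y^r$, so \eqref{mainB} reads
\begin{equation*}
\tr[X\log Y^r] \;\le\; s\,\tr[X\log X^r] + (1-s)\,\tr[X\log(Z^r\#_t Y^r)].
\end{equation*}
Isolating the last term and dividing by $1-s=1/t>0$, and using $s/(1-s)=t-1$, this is exactly \eqref{sec1} for $(X,Y,Z)$ at parameter $t$. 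The reverse direction is the same manipulation run backwards: assuming \eqref{sec1} at $t=1/(1-s)$, apply it to $(X,\widetilde{Y},Z)$ with $\widetilde{Y}:=(Y^r\#_s Z^r)^{1/r}$; the second identity in \eqref{secequikey} gives $Z^r\#_t\widetilde{Y}^{r}=Y^r$, and dividing the resulting inequality by $t>0$ (using $1/t=1-s$ and $(1-t)/t=-s$) returns \eqref{mainB}. Since the substitution is a bijection of $\pn$, having either inequality for all admissible triples is equivalent to having the other for all admissible triples.

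I do not anticipate a real obstacle: the whole argument rests on spotting the substitution $Y\mapsto (Z^r\#_t Y^r)^{1/r}$ and on the collapse \eqref{secequikey}, which the reparametrization identity at real exponents supplies. What needs care is only the parameter bookkeeping ($(1-s)t=1$, $s/(1-s)=t-1$, $(1-t)/t=-s$) and the observation that, once the $\log(Z^r\#_t Y^r)$ term is moved to one side, the "$\le$" of \eqref{mainB} reads as the "$\ge$" of \eqref{sec1}. A strict version would follow identically, since the substitution preserves whether $Y$ and $Z$ commute, but the lemma as stated asserts only the plain inequalities.
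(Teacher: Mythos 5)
Your proof is correct and follows essentially the same route as the paper: the change of variable $Y \mapsto (Z^r\#_t Y^r)^{1/r}$ (equivalently, the paper's $W^r := Y^r\#_s Z^r$ inverted via Lemma~\ref{ABClemma}) together with the reparametrization identity for geometric means, followed by the parameter bookkeeping $st=t-1$. The only cosmetic difference is that you substitute in the inverse direction and state explicitly that the map is a bijection of $\pn$, which the paper leaves implicit.
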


\begin{proof} 
Define $W\in \pn$ by $W^r := Y^r\#_s Z^r$. The identity (\ref{ABClm}) then 
says that 
$Y^r = Z^r\#_{1/(1-s)} W^r$.  Therefore,
\begin{multline}\label{transfo}
\tr[ X \log (Y^r\#_s Z^r)  - s \log X^r -   (1-s)\log Y^r )] =\\
 \tr[X(\log W^r - s\log X^r - (1-s)\log (Z^r\#_{1/(1-s)} W^r)]\ 
.\end{multline} 
Since $s\in (0,1)$, the right side of (\ref{transfo}) is non-positive if and 
only if
$$\tr[X\log(Z^r\#_{1/(1-s)} W^r)] \geq  \tr[X(\tfrac{-s}{1-s}\log X + 
\tfrac{1}{1-s}W^r) ]$$
\end{proof}

With this lemma we can now prove Theorem~\ref{secA}.
\begin{proof}[Proof of Theorem~\ref{secA}]  Lemma~\ref{secequi} 
says that Theorem~\ref{secA} is equivalent to Theorem~\ref{clB}. 
\end{proof}

There is a complement to Theorem~\ref{secA} in the case $Z =X$ that is 
equivalent to a result of Hiai and Petz, who formulate it differently and 
do not discuss extended geometric means.   The statement concerning cases of equality is new. 

\begin{thm}\label{secAA} For all $X,Y\in \pn$, 
\begin{equation}\label{secVYV}
\tr[ X \log (X^r \#_t Y^r)] \geq \tr[X((1-t)\log X^r + t\log Y^r)]\ .
\end{equation}
is valid for all $t\in (-\infty,0]$ and $r>0$. If $Y$ and $X$ do not commute, the inequality is strict for all $t< 0$. 
\end{thm}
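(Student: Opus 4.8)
The plan is to deduce this from Theorem~\ref{HPhard} (the inequality on the right in \eqref{HPpair}) by exactly the same reparametrization device used in Lemma~\ref{secequi} to pass between Theorem~\ref{clB} and Theorem~\ref{secA}, but now applied in the $Z=X$ case with the complementary inequality \eqref{mainB33} in place of \eqref{mainB}. First I would rewrite \eqref{mainB33}: replacing $Y$ by $Y^{-1}$ and using the notation of extended geometric means, \eqref{mainB33} becomes
\begin{equation*}
\tr[X\log(X^{r/2}Y^{-r}X^{r/2})] \geq \tr[X(\log X^r - \log Y^r)]\ ,
\end{equation*}
and since $X^{r/2}Y^{-r}X^{r/2} = X^r\#_{-1}Y^r$ (take $t=-1$ in \eqref{geodes1}, noting $(X^{-r/2}Y^rX^{-r/2})^{-1} = X^{r/2}Y^{-r}X^{r/2}$ after the similarity), this is precisely \eqref{secVYV} at $t=-1$. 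So Theorem~\ref{secAA} holds at the single value $t=-1$; the task is to propagate it to all $t\le 0$.

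Next I would set up the propagation. For $t\in(-\infty,0]$ write $t = 1/(1-s)$ with $s\in[0,1)$; as $s$ runs over $[0,1)$, $t$ runs over $(-\infty,1]$, and $t\le 0$ corresponds to $s\in[0,1/2]$... actually I must be careful: $t=1/(1-s)\le 0$ forces $1-s<0$, i.e. $s>1$, so the correct substitution is $t = 1/(1-s)$ with $s>1$, giving $t\in(-\infty,0)$ as $s$ ranges over $(1,\infty)$, with $t=-1$ at $s=2$. Following the computation in Lemma~\ref{secequi} verbatim (define $W^r := X^r\#_s X^r$... which is just $X^r$, so this degenerate route fails) — instead I would run the argument with $Y$ as the "moving" variable: define $W^r := X^r\#_s Y^r$ for $s>1$, so that by Lemma~\ref{ABClemma}, $Y^r = X^r\#_{1/(1-s)}W^r$. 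Then
\begin{equation*}
\tr[X\log(X^r\#_{1/(1-s)}W^r)] = \tr[X\log Y^r] \ ,
\end{equation*}
and one checks, as in \eqref{transfo}, that the desired inequality \eqref{secVYV} at $t=1/(1-s)$ is equivalent to the statement that $\tr[X\log W^r]$ lies above $\tr[X(s\log X^r + (1-s)\log Y^r)]$ — but $W^r = X^r\#_s Y^r$, so this last statement is exactly Theorem~\ref{clB}/Theorem~\ref{secA}-type content, \emph{extended to $s>1$}. The cleanest path, therefore, is: (i) prove \eqref{mainB33} $\iff$ \eqref{secVYV} at $t=-1$ as above; (ii) establish the multiplicativity identity $(X^r\#_{t_0}Y^r)\#_u (X^r) = X^r\#_{(1-u)t_0 + u\cdot 0}Y^r = X^r\#_{(1-u)t_0}Y^r$ from \eqref{rmet62AAA}, which lets one compose instances of \eqref{secVYV}; (iii) show that if \eqref{secVYV} holds at $t=t_0$ and (trivially, by Remark~\ref{trcase}-type reasoning, at $t=0$ where it is an identity) then it holds at $t=(1-u)t_0$ for $u\in[0,1]$, hence at $(1-u)^k t_0$ for all $k$, and since $(1-u)^k t_0 \to 0$ while $t_0 = -1$ we sweep out all of $(-1,0]$; then iterate the composition in the other direction (compose \eqref{secVYV} at $t=-1$ with itself via the identity $(X^r\#_{-1}Y^r)\#_{?}X^r$ chosen to land at $t=-2,-3,\dots$) to reach all of $(-\infty,-1]$.

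For the cases of equality: I would track the strictness through the composition, exactly as in the proof of Theorem~\ref{HPhard}. The base case $t=-1$ is strict when $X,Y$ do not commute by the equality clause of Theorem~\ref{HPhard}; fixing $X,Y$ non-commuting gives a quantitative gap $\delta>0$ at $t=-1$, and composing with the identity instances (which are equalities) preserves a positive multiple of $\delta$ at every $t=(1-u)^k(-1)$ and at every $t=-2,-3,\dots$, so strictness holds on a sequence accumulating at $0$ and unbounded below; a short monotonicity/continuity argument in $t$ then upgrades this to strictness for all $t<0$. The main obstacle I anticipate is bookkeeping with the sign conventions and the direction of the geometric-mean identity \eqref{rmet62AAA} when the parameters leave $[0,1]$ — one must be sure that the composition $(X^r\#_{t_0}Y^r)\#_u(X^r)$ really does realize $X^r\#_{(1-u)t_0}Y^r$ for the relevant ranges, which is exactly what Theorem~\ref{reparam} (the geodesic reparametrization) guarantees, but the endpoints $t_1=0$, $t_0=1$ vs.\ $t_0=-1$ need to be plugged in carefully. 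Everything else is a transcription of arguments already in the paper.
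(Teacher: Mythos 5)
Your starting point is right --- the $t=-1$ case of \eqref{secVYV} is indeed an instance of Theorem~\ref{HPhard} --- but two things go wrong. First, a computational slip: from \eqref{geodes1}, $X^r\#_{-1}Y^r = X^{r/2}\bigl(X^{r/2}Y^{-r}X^{r/2}\bigr)X^{r/2} = X^rY^{-r}X^r$, not $X^{r/2}Y^{-r}X^{r/2}$; the $t=-1$ case is still an instance of \eqref{mainB33} (take $p=2r$ and replace $Y$ by $Y^{-1/2}$), but not the instance you wrote down. Second, and more seriously, the propagation scheme in your steps (ii)--(iii) cannot work as described. To pass from validity at $t_0<0$ to validity at $t'=(1-u)t_0$ with $u\in(0,1)$ via the identity $X^r\#_{t'}Y^r=(X^r\#_{t_0}Y^r)\#_u X^r$, you need a \emph{lower} bound of the form $\tr[X\log(A\#_u X^r)]\geq u\,\tr[X\log X^r]+(1-u)\tr[X\log A]$ with $A=X^r\#_{t_0}Y^r$; but this is exactly the reverse of Theorem~\ref{clB} (with $Z=X$), which is strict in the opposite direction whenever $A$ and $X$ do not commute. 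The alternative chaining through Lemma~\ref{ABClemma} and Theorem~\ref{secA} fails for the same reason: it would require an upper-bound version of \eqref{sec1}. Likewise, reaching $t=-2,-3,\dots$ by composing $\#_{-1}$ with itself requires a lower bound for $\tr[X\log(AX^{-r}A)]$ in which the outer factors $A$ differ from the density $X$ appearing under the trace, and no such bound is available in the paper. So the induction never leaves $t=-1$.

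The fix is to dispense with propagation altogether: for any $t\leq 0$ the single identity $X^r\#_tY^r=X^{r/2}W^rX^{r/2}$ with $W:=\bigl(X^{r/2}Y^{-r}X^{r/2}\bigr)^{|t|/r}$ exhibits the whole family at once as an object to which \eqref{mainB33} applies directly. One application gives $\tr[X\log(X^r\#_tY^r)]\geq r\tr[X\log X]+r\tr[X\log W]$; since $\tr[X\log W]=\frac{|t|}{r}\tr[X\log(X^{r/2}Y^{-r}X^{r/2})]$, a second application of \eqref{mainB33} (with $Y$ replaced by $Y^{-1}$) yields $\tr[X\log W]\geq |t|\,\tr[X(\log X-\log Y)]$. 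Combining the two and using $|t|=-t$ gives \eqref{secVYV}, with strictness for $t<0$ inherited from the equality clause of Theorem~\ref{HPhard}. This is the paper's argument; it needs no iteration and no bookkeeping of geodesic parameters outside $[0,1]$.
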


\begin{proof}
By  definition \ref{gmdef}
$$
X^r \#_t Y^r  = X^{r/2}(X^{r/2}Y^{-r} X^{r/2})^{|t|}X^{r/2} = X^{r/2} W^r 
X^{r/2} \qquad{\rm where}\qquad 
W := (X^{r/2}Y^{-r} X^{r/2})^{|t|/r}\ .$$ Therefore, by (\ref{mainB33}),
$$\tr[X \log (X^r \#_t Y^r)]  = \tr[X\log (X^{r/2}W^rX^{r/2})] \geq 
r\tr[X\log X] + r\tr[X\log W]\ .$$
By the definition of $W$ and (\ref{rmet62AAA}) once more,
$$\tr[X\log W]  = \frac{|t|}{r} \tr[ X \log((X^{r/2}Y^{-r} X^{r/2})] \geq 
\frac{|t|}{r} r\tr[X(\log X - \log Y)]\ .$$
By combining the inequalities we obtain (\ref{secVYV}).
\end{proof}

The proof given by Hiai and Petz is quite different. It
 uses a tensorization argument.

\section{Quantum Relative Entropy Inequalities}

Theorems~\ref{clC}, \ref{clA} and \ref{clB}
show that the three functions
\begin{equation}\label{varentA}
X,Y \mapsto 
\sup_{Z\in \pn, \tr[Z]= \tr[X]}\left\{\  \tr[ X \log (Y^{-1/2}ZY^{-1/2}))] 
\right\}  +\tr [Y] -\tr [X] \ ,
\end{equation}
\begin{equation}\label{varentB}
X,Y \mapsto 
\sup_{Z\in \pn, \tr[Z]= \tr[X]}\left\{\  \tr[ X \log (Y^{-1}\#_{1/2}Z)^2]  
 \right\}  +\tr [Y] -\tr [X]
\end{equation}
and
\begin{equation}\label{varentC}
X,Y \mapsto 
\sup_{Z\in \pn, \tr[Z]= \tr[X]}\left\{\  \tr\left[ 
X \log \left( \int_0^\infty \frac{1}{\lambda +Y}Z\frac{1}{\lambda +Y}{\rm 
d}\lambda
  \right)
 \right] \right\}  +\tr [Y] -\tr [X]
\end{equation}
are all bounded above by the Umegaki relative entropy $X,Y \mapsto 
\tr[X(\log X - \log Y)]+\tr [Y] -\tr [X]  $.  
The next lemma shows that these 
functions are actually one and the same.

\begin{prop}\label{allsame}  The three functions 
defined in 
(\ref{varentA}),  (\ref{varentB}) and 
(\ref{varentC}) are all equal to to the Donald relative  entropy
$ D_D(X||Y) $.   Consequently, for all $X,Y\in 
\pn$,
\begin{equation}\label{vnent3}
D_D(X||Y) \leq  
D(X||Y)\ .
\end{equation}
\end{prop}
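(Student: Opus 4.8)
The plan is to show each of the three supremum functionals equals $D_D(X||Y)$ by exhibiting, for each, an invertible change of variables that converts the supremum over $Z$ (with $\tr[Z]=\tr[X]$) into the supremum appearing in the extended Donald formula \eqref{geom31}, namely $\sup_{Q>0}\{\tr[X\log Q]\ :\ \tr[YQ]\leq\tr[X]\}+\tr[Y]-\tr[X]$. Since Theorems~\ref{clC}, \ref{clA} (specialized to $p=1$ after replacing $Y\mapsto Y^{-1}$), and \ref{clB} (specialized to $r=1$, $s=1/2$, after replacing $Y\mapsto Y^{-1}$) already give that each functional is $\leq D(X||Y)$, and since $D_D\le D$, it will suffice to prove that each functional is $\geq D_D(X||Y)$ \emph{and} $\leq D_D(X||Y)$; in fact a single bijection handles both directions at once, so the inequality \eqref{vnent3} then drops out for free from any one of the three representations combined with the already-established upper bound by $D(X||Y)$ — actually more directly, $D_D\le D$ is immediate from \eqref{dondef} and Lindblad's joint convexity, but the point here is that it also falls out of the variational identity.

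First I would handle \eqref{varentA}. Put $Q := Y^{-1/2}ZY^{-1/2}$, so $Z = Y^{1/2}QY^{1/2}$ and $\tr[Z]=\tr[YQ]$; the constraint $\tr[Z]=\tr[X]$ becomes $\tr[YQ]=\tr[X]$, and $\tr[X\log(Y^{-1/2}ZY^{-1/2})]=\tr[X\log Q]$. Thus the sup in \eqref{varentA} equals $\sup_{Q>0}\{\tr[X\log Q]:\tr[YQ]=\tr[X]\}+\tr[Y]-\tr[X]$. One checks that restricting to $\tr[YQ]=\tr[X]$ rather than $\tr[YQ]\le\tr[X]$ gives the same value: increasing $Q$ increases $\tr[X\log Q]$, so the supremum is attained (or approached) on the boundary of the constraint. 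This matches \eqref{geom31}, i.e. $D_D(X||Y)$. Next, \eqref{varentC}: here I would set $Q := \int_0^\infty (\lambda+Y)^{-1}Z(\lambda+Y)^{-1}\,{\rm d}\lambda$, which is the action on $Z$ of the operator $\mathcal{T}_Y(Z)=\int_0^\infty(\lambda+Y)^{-1}Z(\lambda+Y)^{-1}{\rm d}\lambda$; this map is positive and invertible (its inverse is $Q\mapsto \int$ against the appropriate kernel built from $Y$, or one uses that on the spectral decomposition of $Y$ it acts by multiplication by $\frac{\log a-\log b}{a-b}$ on the $(a,b)$ matrix entry, a strictly positive function, hence invertible), and crucially $\tr[\mathcal{T}_Y(Z)\,Y]=\tr[Z]$ — this is exactly the computation $\tr[W\log Y^p]=\tr[X\log Y^p]$-style identity used in the proof of Lemma~\ref{sextend}, specialized appropriately, or more simply $\int_0^\infty \frac{Y}{(\lambda+Y)^2}{\rm d}\lambda = \un$. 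So again the sup becomes $\sup\{\tr[X\log Q]:\tr[YQ]=\tr[X]\}+\tr[Y]-\tr[X]=D_D(X||Y)$, provided we know $\mathcal{T}_Y$ is a bijection of $\pn$ onto $\pn$ that preserves the relevant trace pairing.

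For \eqref{varentB} I would use the identity $Y^{-1}\#_{1/2}Z = Y^{-1/2}(Y^{1/2}ZY^{1/2})^{1/2}Y^{-1/2}$, so $(Y^{-1}\#_{1/2}Z)^2 = Y^{-1/2}(Y^{1/2}ZY^{1/2})^{1/2}Y^{-1}(Y^{1/2}ZY^{1/2})^{1/2}Y^{-1/2}$; setting $R:=(Y^{1/2}ZY^{1/2})^{1/2}$ and then $Q := Y^{-1/2}R Y^{-1} R Y^{-1/2}$ one has to check that the map $Z\mapsto Q$ is a bijection of $\pn$ and that $\tr[YQ]=\tr[Z]$. This is the most delicate of the three: unlike the linear maps in \eqref{varentA} and \eqref{varentC}, this one is nonlinear in $Z$, so I would instead argue by going through Theorem~\ref{clB} directly — \eqref{mainB} with $r=1,s=1/2$ and $Y\mapsto Y^{-1}$ gives the $\le D(X||Y)$ bound, and for the matching lower bound I would use the equality case. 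Concretely, to show the sup in \eqref{varentB} is $\ge D_D(X||Y)$, pick any pinching $\Phi$ (onto a maximal abelian subalgebra containing a common eigenbasis chosen adversarially) and note that for $Z,Y$ simultaneously diagonal the functional reduces to the classical relative entropy of $\Phi(Z)$ vs $\Phi(Y^{-1})^{-1}$-type expression, whose sup over diagonal $Z$ with the trace constraint is $D(\Phi(X)\|\Phi(Y))$ by the classical Gibbs principle; taking the sup over $\Phi$ gives $\ge D_D(X||Y)$. Combined with the upper bound, equality follows.

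\emph{The main obstacle} I expect is verifying that the three changes of variables (especially the geometric-mean one in \eqref{varentB}) are genuine bijections of $\pn$ that intertwine the constraint $\tr[Z]=\tr[X]$ with $\tr[YQ]=\tr[X]$ (or $\le$), and checking the boundary-attainment point so that $\le$ and $=$ in the constraint give the same supremum. For \eqref{varentA} this is trivial; for \eqref{varentC} it reduces to invertibility of the double-operator-integral $\mathcal{T}_Y$ and the identity $\int_0^\infty Y(\lambda+Y)^{-2}{\rm d}\lambda=\un$; for \eqref{varentB} I would sidestep the nonlinear bijection issue by the pinching argument above, which is why I'd route that case through the classical Gibbs variational principle and the definition \eqref{dondef} rather than through an explicit substitution. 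Once these are in hand, \eqref{vnent3} is immediate since each functional has just been shown equal to $D_D(X||Y)$ while Theorems~\ref{clC}–\ref{clB} show each is $\le D(X||Y)$.
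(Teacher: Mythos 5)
Your treatment of \eqref{varentA} and \eqref{varentC} is essentially the paper's argument: an invertible, positivity-preserving change of variables $Z\mapsto Q$ that carries the constraint $\tr[Z]=\tr[X]$ into $\tr[YQ]=\tr[X]$, together with the observation that the relaxed constraint $\tr[YQ]\le\tr[X]$ in \eqref{geom31} yields the same supremum (rescale $Q$ upward). Those two cases are fine, and your remarks on the invertibility of $\mathcal{T}_Y$ are, if anything, more detailed than the paper's.

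The gap is in \eqref{varentB}. You set aside the substitution as too delicate and replace it with a pinching argument, but that argument does not deliver the needed lower bound $\ge D_D(X||Y)$. Restricting $Z$ to operators commuting with $Y$ and optimizing classically gives only $\sup_\Phi D(\Phi(X)||\Phi(Y))$ over pinchings $\Phi$ whose basis \emph{diagonalizes} $Y$; in the Hiai--Petz formula \eqref{donfor} this corresponds to restricting the supremum to $H$ commuting with $Y$. The Donald entropy is the supremum over \emph{all} pinchings (all $H\in\hn$), and the optimizer $Q_{X,Y}$ of Lemma~\ref{uniqueQ} does not commute with $Y$ in general, so your restricted supremum is generically strictly smaller than $D_D(X||Y)$. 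Since your only upper bound on \eqref{varentB} is $D(X||Y)$, you end up with $[\text{something}\le D_D] \le [\text{sup in \eqref{varentB}}] \le D(X||Y)$, which does not identify the middle quantity with $D_D(X||Y)$. The substitution you avoided is in fact painless: by the Riccati characterization of the geometric mean, $G=Y^{-1}\#_{1/2}Z$ is the unique positive solution of $GYG=Z$, so setting $Q^{1/2}:=Y^{-1}\#_{1/2}Z$ one has $Z=Q^{1/2}YQ^{1/2}$; thus $Z\mapsto Q$ is a bijection of $\pn$ with explicit inverse $Q\mapsto Q^{1/2}YQ^{1/2}$, the constraint transforms as $\tr[Z]=\tr[Q^{1/2}YQ^{1/2}]=\tr[YQ]$, and $\tr[X\log((Y^{-1}\#_{1/2}Z)^2)]=\tr[X\log Q]$. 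This is exactly what the paper does, and it puts \eqref{varentB} on the same footing as the other two cases.
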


\begin{proof}[Proof of Proposition~\ref{allsame}]  The first thing to notice is 
that the relaxed constraint $\tr[YQ] \leq \tr[X]$ imposes the same 
restriction in (\ref{geom31}) as does the hard constraint  $\tr[YQ] = 
\tr[X]$ since, if  $\tr[YQ] < \tr[X]$, we may replace $Q$ by
$(\tr[X]/\tr[YQ])Q$ so that the hard constraint is satisfied.  Thus we may 
replace the relaxed constraint in (\ref{geom31}) by the hard constraint 
without affecting the function $D_D(X||Y)$.  This will be convenient in the 
lemma, though elsewhere the relaxed constraint will be essential. 

Next, for each of (\ref{varentA}),  (\ref{varentB}) and 
(\ref{varentC}) we make a change of variables. In the first case, define 
$\Phi: \pn \to \pn$ by
$\Phi(Z) = Y^{-1/2}ZY^{-1/2} := Q$. Then $\Phi$ is invertible with 
$\Phi^{-1}(Q) = Y^{1/2}QY^{1/2}$. Under this change of variables, the 
constraint $\tr[X] = \tr[Z]$ becomes.
$\tr[X] = \tr[Y^{1/2}QY^{1/2}] = \tr[YQ]\ .$  Thus (\ref{varentA}) gives us 
another expression for the Donald relative entropy. 

For the function in (\ref{varentB}), we make a similar change of variables. 
Define $\Phi:\pn \to \pn$ by
 $\Phi(Z) = Z\#_{1/2}Y := Q^{1/2}$ from $\pn$ to $\pn$.  This map is 
invertible: 
It follows by direct computation from the definition (\ref{geomean}) that  
for 
$Q^{1/2} := Z\#_{1/2}Y^{-1}$, 
$Z = Q^{1/2}YQ^{1/2}$, so that $\Phi^{-1}(Q) =  Q^{1/2}YQ^{1/2}$. (This has 
an interesting and useful geometric interpretation that is discussed in 
Appendix C.)  
Under this change of variables, the constraint $\tr[X] = \tr[Z]$ becomes.
$\tr[X] = \tr[Q^{1/2}YQ^{1/2}] = \tr[YQ]$ . Thus (\ref{varentB}) gives 
another expression for the Donald relative entropy. 

Finally, for the function in (\ref{varentC}), we make a similar change of 
variables. Define $\Phi:\pn \to \pn$ by
 $$\Phi(Z) = \int_0^\infty \frac{1}{\lambda +Y}Z\frac{1}{\lambda +Y}{\rm 
d}\lambda := Q^{1/2}$$ 
 from $\pn$ to $\pn$.  This map is invertible: 
 ${\displaystyle \Phi^{-1}(Q) = \int_0^1 Y^{1-s} Q Y^{s}{\rm d}s}$.
 Under this change of variables, the constraint $\tr[X] = \tr[Z]$ becomes
${\displaystyle \tr[X] = \tr[\int_0^1 Y^{1-s} Q Y^{s}{\rm d}s] = \tr[YQ]}$.
\end{proof}

With the Donald relative entropy  having taken center stage, we now 
bend 
our efforts to establishing some of its properties.

\begin{lm}\label{uniqueQ} Fix $X,Y\in \pn$, and define $\mathcal{K}_{X,Y} := 
\{Q\in \overline{\pn}\ :\ \tr[QY] \leq \tr[X]\}$.
There exists a unique $Q_{X,Y} \in \mathcal{K}_{X,Y}$   such that 
$\tr[Q_{X,Y}Y] \leq \tr[X]$
and such that
$$\tr[X \log Q_{X,Y}] > \tr[X\log Q]$$
for all other $Q\in \mathcal{K}_{X,Y}$. The equation 
\begin{equation}\label{Qeq}
\int_0^\infty \frac{1}{t +Q} X \frac{1}{t +Q}{\rm d}t = Y\ .
\end{equation}
has a unique solution in $\pn$, and this unique solution is the unique 
maximizer $Q_{X,Y}$.
\end{lm}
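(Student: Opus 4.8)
The plan is to establish Lemma~\ref{uniqueQ} in three stages: existence of a maximizer, the Euler--Lagrange characterization \eqref{Qeq}, and uniqueness.

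First I would address existence. The functional $Q \mapsto \tr[X\log Q]$ is concave in $Q$ on the convex set $\mathcal{K}_{X,Y}$ (since $\log$ is operator concave and $\tr[X\,\cdot\,]$ is linear and positivity-preserving for $X>0$). The set $\mathcal{K}_{X,Y}$ is closed and convex, but not compact; however, the constraint $\tr[QY]\le \tr[X]$ together with $Y>0$ forces $\|Q\|$ to be bounded, since $\tr[QY] \ge \|Y^{-1}\|^{-1}\tr[Q] \ge \|Y^{-1}\|^{-1}\|Q\|$ for $Q\ge 0$. So $\mathcal{K}_{X,Y}$ is in fact compact. The only subtlety is that $\tr[X\log Q] \to -\infty$ as $Q$ approaches a boundary point with a zero eigenvalue (because $X>0$ has no kernel), so the supremum is attained at an interior point $Q_{X,Y}\in\pn$, and by strict concavity of $\log$ on $\pn$ — more precisely strict concavity of $t\mapsto \tr[X\log Q]$ along any line segment unless the direction is $0$ — the maximizer is unique. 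This already gives the first assertion once we also note that at the maximizer the constraint is active, $\tr[Q_{X,Y}Y]=\tr[X]$ (otherwise one could scale $Q$ up by a factor $>1$, strictly increasing $\tr[X\log Q]$ by $\tr[X]\log(\text{factor})>0$).

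Next I would derive the stationarity equation. At an interior maximizer with the active linear constraint $\tr[QY]=\tr[X]$, the Lagrange condition reads: the gradient of $Q\mapsto\tr[X\log Q]$ equals $\mu Y$ for some multiplier $\mu$. Using the standard derivative formula $\frac{\rm d}{{\rm d}u}\log(Q+uH)\big|_{u=0} = \int_0^\infty \frac{1}{t+Q}H\frac{1}{t+Q}{\rm d}t$ (already invoked in the proof of Theorem~\ref{HPhard}), one gets
\[
\frac{\rm d}{{\rm d}u}\tr[X\log(Q+uH)]\Big|_{u=0} = \tr\!\left[H\int_0^\infty \frac{1}{t+Q}X\frac{1}{t+Q}{\rm d}t\right],
\]
so the Euler--Lagrange equation is $\int_0^\infty (t+Q)^{-1}X(t+Q)^{-1}{\rm d}t = \mu Y$. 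Taking the trace of both sides and using $\int_0^\infty (t+Q)^{-2}{\rm d}t = Q^{-1}$ gives $\tr[XQ^{-1}] = \mu\,\tr[QY]/$... wait — more directly, pairing with $Q$: $\tr\!\left[Q\int_0^\infty (t+Q)^{-1}X(t+Q)^{-1}{\rm d}t\right]$; since $\int_0^\infty (t+Q)^{-1}Q(t+Q)^{-1}{\rm d}t$ need not simplify, I would instead pair with the identity after multiplying appropriately, or simply observe that $\tr$ of the left side equals $\tr[XQ^{-1}]$ is false in general; the clean route is: the homogeneity of the problem. Replacing $Q$ by $cQ$ scales $\int_0^\infty (t+Q)^{-1}X(t+Q)^{-1}{\rm d}t$ by $c^{-1}$, so normalizing $\mu=1$ is exactly the choice that makes the constraint active with the right constant — one checks $\mu = \tr[X]/\tr[Q_{X,Y}Y] = 1$ by a scaling/trace-pairing argument. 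This pins down \eqref{Qeq}.

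Finally, uniqueness of the solution of \eqref{Qeq} in $\pn$: any solution is a critical point of the (strictly) concave functional $Q\mapsto \tr[X\log Q] - \tr[QY]$ on $\pn$ (whose gradient is precisely $\int_0^\infty(t+Q)^{-1}X(t+Q)^{-1}{\rm d}t - Y$), and a strictly concave function has at most one critical point; conversely any solution has $\tr[QY]=\tr[X]$ (pair \eqref{Qeq} with... again by scaling), hence lies in $\mathcal{K}_{X,Y}$ and maximizes $\tr[X\log Q]$ there by concavity plus the Lagrange condition, so it coincides with $Q_{X,Y}$. The main obstacle I anticipate is the bookkeeping around the Lagrange multiplier — confirming $\mu=1$ and that the constraint is genuinely active — which is where the degree-one homogeneity in $Q$ of the map $Q\mapsto\int_0^\infty(t+Q)^{-1}X(t+Q)^{-1}{\rm d}t$ (it is actually degree $-1$) must be used carefully together with the trace identity $\tr\big[\int_0^\infty(t+Q)^{-1}X(t+Q)^{-1}{\rm d}t\,\cdot\,Q\big]$; I would handle this by the substitution argument rather than by a direct trace pairing.
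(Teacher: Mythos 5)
Your proposal is correct and follows essentially the same route as the paper: compactness of $\mathcal{K}_{X,Y}$, strict concavity of $Q\mapsto\tr[X\log Q]$, and its blow-up to $-\infty$ at boundary points give a unique interior maximizer; the first-order condition gives $\int_0^\infty(t+Q)^{-1}X(t+Q)^{-1}\,{\rm d}t=\mu Y$; and strict concavity handles the converse direction. The one place you hesitate --- pinning down $\mu=1$ --- is resolved by exactly the trace pairing you abandoned: since $Q$ commutes with $(t+Q)^{-1}$, the integral $\int_0^\infty (t+Q)^{-1}Q(t+Q)^{-1}\,{\rm d}t=\int_0^\infty Q(t+Q)^{-2}\,{\rm d}t$ \emph{does} simplify, to $\one$, so that $\tr\bigl[Q\int_0^\infty(t+Q)^{-1}X(t+Q)^{-1}\,{\rm d}t\bigr]=\tr[X]$, while pairing $Q$ with the right-hand side gives $\mu\,\tr[QY]=\mu\,\tr[X]$ by the constraint activity you already established via scaling; hence $\mu=1$. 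This is precisely the paper's computation (``multiplying through on both sides by $Q_{X,Y}^{1/2}$ and taking the trace''), and your explicit scaling argument for why the constraint is active is a detail the paper leaves implicit.
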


\begin{proof} Note that $\mathcal{K}_{X,Y}$ is a compact, convex set. 
Since $Q\mapsto \log Q$ is strictly concave, $Q\mapsto \tr[X\log Q]$ is 
strictly concave on 
$\mathcal{K}_{X,Y}$, and it has the value $-\infty$ on $\partial{\pn}\cap 
\mathcal{K}_{X,Y}$, there is a unique maximizer
$Q_{X,Y} $ that lies in $\pn\cap \mathcal{K}_{X,Y}$.

Let $H\in \hn$ be such that $\tr[HY] = 0$. For all $t$ in a neighborhood of 
$0$, $Q_{X,Y} + tH \in 
\pn\cap \mathcal{K}_{X,Y}$. Differentiating in $t$ at $t=0$ yields
$$0 =  \int_0^\infty \left(\tr\left[X\frac{1}{t +Q_{X,Y}} H \frac{1}{t 
+Q_{X,Y}}\right] \right){\rm d}t 
= \tr\left[ H\left( \int_0^\infty \frac{1}{t +Q_{X,Y}} X \frac{1}{t 
+Q_{X,Y}}{\rm d}t  \right) \right]\ ,$$
and hence
$$\int_0^\infty \frac{1}{t +Q_{X,Y}} X \frac{1}{t +Q_{X,Y}}{\rm d}t   = 
\lambda Y$$
for some $\lambda\in \R$.   Multiplying through on both sides by 
$Q_{X,Y}^{1/2}$ and taking the trace yields $\lambda = 1$, which shows that 
$Q_{X,Y}$ solves (\ref{Qeq}). Conversely, any solution of (\ref{Qeq}) 
yields a critical point of our strictly concave functional, and hence must 
be the unique maximizer. 
\end{proof}

\begin{remark} \label{qcommute} There is one special case for which we can 
give a formula for the solution $Q_{X,Y}$ to \eqref{Qeq}: When $X$ and $Y$ 
commute,
$Q_{X,Y} = XY^{-1}$.
\end{remark}

\begin{lm}\label{scalinglem}  For all $X,Y\in \pn$ and 
all $\lambda>0$,  
\begin{equation}\label{scaling1}
D_D(\lambda X,\lambda Y) = \lambda D_D(X,Y)\ ,
\end{equation}
and
\begin{equation}\label{scaling2}
D_D(\lambda X, Y) = \lambda D_D(X,Y) + \lambda\log \lambda\tr[X] + 
(1-\lambda)\tr[Y]\ ,
\end{equation}
\end{lm}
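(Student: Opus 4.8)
The plan is to work directly from the variational formula \eqref{geom31} for the extended Donald relative entropy. Write $D_D(X\|W) = S(X,W) + \tr[W] - \tr[X]$, where $S(X,W) := \sup\{\tr[X\log Q]\ :\ Q>0,\ \tr[WQ]\le \tr[X]\}$. Both identities will follow once one understands how $S$ behaves under the two scalings $(X,W)\mapsto(\lambda X,\lambda W)$ and $(X,W)\mapsto(\lambda X,W)$. Lemma~\ref{uniqueQ} guarantees that the supremum defining $S$ is finite and attained (at a point of $\pn$), so there is no concern about $\pm\infty$ in any of the manipulations below.

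For \eqref{scaling1}: the constraint $\tr[(\lambda W) Q]\le\tr[\lambda X]$ is literally the same as $\tr[WQ]\le\tr[X]$, while the objective scales linearly, $\tr[(\lambda X)\log Q]=\lambda\tr[X\log Q]$. Hence $S(\lambda X,\lambda W)=\lambda S(X,W)$, and adding $\tr[\lambda W]-\tr[\lambda X]=\lambda(\tr[W]-\tr[X])$ gives $D_D(\lambda X,\lambda W)=\lambda D_D(X,W)$.

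For \eqref{scaling2}: make the change of variable $Q = \lambda Q'$, which is a bijection of $\pn$ onto itself. Under it the constraint $\tr[WQ]\le\tr[\lambda X]$ becomes $\tr[WQ']\le\tr[X]$, while the objective becomes $\tr[(\lambda X)\log(\lambda Q')]=\lambda\tr[X\log Q']+\lambda(\log\lambda)\tr[X]$. Therefore $S(\lambda X,W)=\lambda S(X,W)+\lambda(\log\lambda)\tr[X]$. Now add $\tr[W]-\tr[\lambda X]=\tr[W]-\lambda\tr[X]$ and compare with $\lambda D_D(X,W)=\lambda S(X,W)+\lambda\tr[W]-\lambda\tr[X]$; collecting the terms linear in $\tr[X]$ and $\tr[W]$ produces exactly the asserted formula $D_D(\lambda X, Y) = \lambda D_D(X,Y) + \lambda\log\lambda\,\tr[X] + (1-\lambda)\tr[Y]$. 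This last step is pure bookkeeping.

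There is essentially no obstacle here. The only point needing a word of care is that the substitution $Q\mapsto\lambda Q$ carries the feasible set onto the correct feasible set and that the relevant supremum is a genuine maximum — precisely the content of Lemma~\ref{uniqueQ} — after which everything is the arithmetic of homogeneity. (One could equally run the argument using the $H$-version in \eqref{donfor2} via $Q=e^H$, noting that translating $H$ by $\log\lambda$ implements the same rescaling; but the $Q$-formulation makes the scalings most transparent.)
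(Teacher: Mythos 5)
Your proof is correct and is essentially the paper's argument: both rest on the observation that the substitution $Q\mapsto\lambda Q$ intertwines the variational problems defining $D_D(\lambda X\|Y)$ and $D_D(X\|Y)$ (and that the constraint in \eqref{geom31} is invariant under $(X,W)\mapsto(\lambda X,\lambda W)$). The only cosmetic difference is that the paper implements this at the level of the unique maximizer $Q_{X,Y}$, deriving $Q_{\lambda X,Y}=\lambda Q_{X,Y}$ from the Euler--Lagrange equation \eqref{Qeq}, whereas you apply the same rescaling inside the supremum, which makes Lemma~\ref{uniqueQ} (and even attainment of the supremum) unnecessary for this particular lemma.
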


\begin{proof} By (\ref{Qeq}) the maximizer $Q_{X,Y}$ in Lemma~\ref{uniqueQ} 
satisfies the scaling relations
\begin{equation}\label{scaling3}
Q_{\lambda X,Y} = \lambda Q_{X,Y} \qquad{\rm and}\qquad Q_{X,\lambda Y} = 
\lambda^{-1}Q_{X,Y}\ ,
\end{equation}
and (\ref{scaling1}) follows immediately. 
Next, by (\ref{scaling3}) again,
$$D_D(\lambda X||Y)   = \lambda\left(\tr[X\log Q_{X,Y}] + \tr[Y] - 
\tr[X]\right) + \lambda\log \lambda\tr[X] + (1-\lambda)\tr[Y]\ ,$$
which proves (\ref{scaling2}).
\end{proof}

\begin{lm}\label{copo} If $X$ and $Y$ commute,
$
D_D(X||Y) = D(X||Y)\ .
$
\end{lm}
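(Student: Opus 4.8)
The plan is to reduce the statement to the characterization of the maximizer in the variational formula \eqref{geom31} for $D_D$ established in Lemma~\ref{uniqueQ}, together with the explicit solution recorded in Remark~\ref{qcommute}.

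First I would recall that, as noted at the start of the proof of Proposition~\ref{allsame}, the relaxed constraint $\tr[YQ]\le\tr[X]$ in \eqref{geom31} may be replaced by the equality constraint $\tr[YQ]=\tr[X]$ without changing the value, and that by Lemma~\ref{uniqueQ} the supremum
\[
\sup_{Q>0}\{\tr[X\log Q]\ :\ \tr[YQ]\le\tr[X]\}
\]
is attained at the unique $Q_{X,Y}\in\pn$ solving \eqref{Qeq}, i.e. $\int_0^\infty (t+Q)^{-1}X(t+Q)^{-1}\,{\rm d}t=Y$.

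Next, assuming $X$ and $Y$ commute, I would verify that $Q:=XY^{-1}$ solves \eqref{Qeq}: since $Q$ commutes with $X$, the left-hand side equals $X\int_0^\infty (t+Q)^{-2}\,{\rm d}t=XQ^{-1}=XYX^{-1}=Y$. Hence $Q_{X,Y}=XY^{-1}$, as asserted in Remark~\ref{qcommute}. Substituting this into the variational formula and using $\log(XY^{-1})=\log X-\log Y$ (valid because $X$ and $Y$ commute) gives
\[
D_D(X||Y)=\tr[X(\log X-\log Y)]+\tr[Y]-\tr[X],
\]
which is exactly $D(X||Y)$ by \eqref{vnent2}.

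There is essentially no obstacle here: the only point needing a line of computation is the verification that $XY^{-1}$ solves \eqref{Qeq}, and this is immediate once commutativity is used to pull $X$ out of the integral. (Alternatively one could bypass Lemma~\ref{uniqueQ} entirely by diagonalizing $X$ and $Y$ simultaneously, so that the supremum in \eqref{donfor2}/\eqref{geom31} becomes the classical Gibbs variational problem whose value is the classical relative entropy $\sum_j x_j\log(x_j/y_j)$; but invoking Remark~\ref{qcommute} is shorter.)
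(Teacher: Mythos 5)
Your proof is correct, but it takes a different route from the paper's. The paper proves Lemma~\ref{copo} by a pinching argument: given any admissible $Q$, it averages $U_j^*QU_j$ over a family of unitaries commuting with $X$ and $Y$, uses concavity of $Q\mapsto\tr[X\log Q]$ to show the averaged (hence commuting) $\widehat Q$ does at least as well, and thereby reduces the supremum in \eqref{geom31} to the commutative case, where Remark~\ref{qcommute} finishes the job. You instead go through the Euler--Lagrange characterization of Lemma~\ref{uniqueQ}: the maximizer is the unique solution of \eqref{Qeq}, and you verify directly that $Q=XY^{-1}$ solves \eqref{Qeq} when $X$ and $Y$ commute (note also $\tr[YQ]=\tr[X]$, so the constraint is met), whence $D_D(X||Y)=\tr[X(\log X-\log Y)]+\tr[Y]-\tr[X]=D(X||Y)$ by \eqref{vnent2}. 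Both arguments are sound. Yours is shorter once Lemma~\ref{uniqueQ} is in hand, and it supplies the computation that the paper's Remark~\ref{qcommute} only asserts; the paper's pinching argument is more self-contained in that it never needs existence or uniqueness of a maximizer --- it bounds an arbitrary competitor $Q$ by a commuting one, which is also the mechanism reused elsewhere (e.g.\ in the proof of Theorem~\ref{main1}). One caution about your parenthetical alternative: simultaneous diagonalization does \emph{not} by itself turn the supremum in \eqref{geom31} into the classical Gibbs problem, because the supremum ranges over all $Q\in\pn$, not only those diagonal in the common eigenbasis; closing that gap is exactly what the paper's pinching step accomplishes. Since that remark is an aside and your main argument does not depend on it, the proof stands.
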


\begin{proof}  Let $\{U_1,\dots,U_N\}$ be any set of  unitary matrices that 
commute with $X$ and $Y$. Then 
for each $j=1,\dots,n$, $\tr[Y(U_j^*QU_j)] = \tr[YQ]$. Define 
$$\widehat{Q} = \frac{1}{N}\sum_{j=1}^NU_j^*QU_j\ .$$
For an appropriate choice of the set $\{U_1,\dots,U_N\}$, $\widehat{Q}$ is 
the orthogonal projection of $Q$, with respect to the Hilbert-Schmidt inner 
product,  onto the abelian subalgebra of $\mn$ generated by $X$, $Y$ and 
$\one$ \cite{D59}. By the  concavity of the logarithm,
$$
\tr[X\log\widehat{Q}] \geq \frac{1}{N} \sum_{j=1}^N \tr [X \log 
(U^*_jQU_j)] = \frac{1}{N} \sum_{j=1}^N \tr [UXU^* \log Q]  = \tr[X\log Q] \ 
. $$
Therefore, in taking the supremum, we need only consider operators $Q$ that 
commute with both $X$ and $Y$. 
The claim now follows by Remark \ref{qcommute}.
\end{proof}

\begin{remark} Another simple proof of this can be given using Donald's original formula \eqref{dondef}.
\end{remark}

We have now proved that $D_D$ has properties (2) and (3) in the 
Definition \ref{qred} of  relative entropy, and have already observed that it inherits joint convexity from the Umegaki relative entropy though its original definition by Donald. 

We now compute the partial Legendre transform of $D_D(X||Y)$. {\em In doing so
we arrive at a direct proof of the joint 
convexity of $D_D(X||Y)$, independent of the joint convexity of the Umegaki relative entropy.}  We first prove Lemma~\ref{logconc}.

\begin{proof}[Proof of Lemma~\ref{logconc}] For $X\in \pn$, define $a = \tr[X]$ and $W := a^{-1}X$, so 
that $W$ is a density matrix.
Then
$$\tr[XH]  - R(X||Y)  =  a\tr[WH] - aR(W||Y) - a\log a  - (1-a)\tr[Y]\ ,$$
Therefore,
\begin{eqnarray*}
\Psi_R(H,Y) &=& \sup_{a>0}\left\{ a \sup_{W\in \pn}\left\{ \tr[WH] - 
D_R(W||Y)\ :\ \tr[W] =1\ \right\}   +a \tr[Y]  - a \log a \right\} - 
\tr[Y]\\
&=& \sup_{a>0}\left\{  a(\Phi_R(H,Y) + \tr[Y])  - a \log a\right\} - \tr[Y]\ 
.
\end{eqnarray*}
Now use the fact that for all $a>0$ and all $b\in \R$, $a\log a + e^{b-1} 
\geq ab$ with equality if and only if $b = 1+ \log a$ to conclude that 
(\ref{dualrel}) is valid.  \end{proof}

The function $D_D$ evidently satisfies the conditions of this lemma. Our immediate
goal is to compute $\Phi_R(H,Y)$ for this choice of $R$, and to show   its
concavity as a function of $Y$. Recall the definition
\begin{equation}\label{geom33}
\Phi_D(H,Y) :=  \sup_{X>0, \tr[X] =1}  \{  \tr[XH] - D_D(X||Y) \ \}\ .
\end{equation}
We wish to evaluate the supremum as explicitly as possible. 

\begin{lm}\label{mmlem} For $H\in \hn$ and $Y\in \pn$,
\begin{equation}\label{MM5}
\Phi_D(H,Y)\  =\   1 -  \tr[Y]  +   \inf\big\{  \lambda_{\max}\left( H - \log Q)\right)  
\ :\ Q\in \pn\ , \tr[QY] \leq 1\ \big\}\ 
\end{equation}
where for any self-adjoint operator $K$, $\lambda_{\max}(K)$ is the largest 
eigenvalue of $K$. 
\end{lm}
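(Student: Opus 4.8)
The plan is to compute $\Phi_D(H,Y) = \sup_{X>0,\,\tr[X]=1}\{\tr[XH] - D_D(X\|Y)\}$ by substituting the variational formula \eqref{geom31} for $D_D$ and then exchanging the order of the two suprema. Recall that on density matrices $D_D(X\|Y) = \tr[Y] - 1 + \sup\{\tr[X\log Q] : Q>0,\ \tr[QY]\le 1\}$. Hence, writing $R:=\tr[Y]-1$, one has
\begin{equation*}
\Phi_D(H,Y) = -R + \sup_{X>0,\,\tr[X]=1}\ \inf_{Q>0,\,\tr[QY]\le1}\ \tr[X(H-\log Q)]\ .
\end{equation*}
The inner quantity $\tr[X(H-\log Q)]$ is linear in $X$ and jointly continuous; the set of density matrices is convex and compact and the constraint set $\mathcal K = \{Q>0 : \tr[QY]\le1\}$ is convex. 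The key step is to justify interchanging $\sup_X$ and $\inf_Q$ via a minimax theorem (Sion's minimax theorem applies since the objective is concave — in fact linear — in $X$, convex — in fact linear — in $Q$, and we have compactness on the $X$ side). After the interchange we obtain
\begin{equation*}
\Phi_D(H,Y) = -R + \inf_{Q>0,\,\tr[QY]\le1}\ \sup_{X>0,\,\tr[X]=1}\ \tr[X(H-\log Q)]\ ,
\end{equation*}
and the inner supremum of $\tr[XK]$ over density matrices $X$ is exactly $\lambda_{\max}(K)$, attained at the spectral projection onto the top eigenvalue of $K := H-\log Q$. This yields precisely \eqref{MM5}.

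The main technical care is the application of the minimax theorem, since $\mathcal K$ is not compact (it is only closed and convex, and $\log Q\to -\infty$ as $Q\to\partial\pn$). I would handle this in one of two ways. First option: observe that on the complement of a large compact subset of $\mathcal K$, at least one eigenvalue of $\log Q$ is very negative, so $\lambda_{\max}(H-\log Q)$ is very large; hence the infimum over $\mathcal K$ equals the infimum over a compact convex subset, and Sion's theorem applies on that compact set against the compact set of density matrices. Second option: invoke the fact, already established in Lemma~\ref{uniqueQ} (applied with $X$ replaced by the optimizing density matrix), that the inner supremum defining $D_D$ is attained at a unique interior $Q_{X,Y}\in\pn$ solving \eqref{Qeq}; this lets one replace $\inf$ by $\min$ over a set on which everything is finite and continuous. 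Either route removes the noncompactness obstruction; I expect the compactness reduction (first option) to be the cleanest to write, needing only the elementary bound that if $\tr[QY]\le 1$ then $Q\le Y^{-1/2}(\text{something})Y^{-1/2}$ fails in general, so instead one uses that $\|Q\|$ bounded below away from $0$ in some direction forces a negative eigenvalue of $\log Q$.

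Finally, for the equality case and for the assertion that the infimum is genuinely attained, I would note that the map $Q\mapsto \lambda_{\max}(H-\log Q)$ is convex (composition of the convex, monotone-decreasing-in-the-Löwner-order-reversing... more carefully: $Q\mapsto -\log Q$ is operator convex, $K\mapsto\lambda_{\max}(K)$ is convex and monotone, so the composition is convex) and lower semicontinuous on $\mathcal K$, with values tending to $+\infty$ near $\partial\pn\cap\mathcal K$; hence the infimum over the closed convex set $\mathcal K$ is attained at an interior point. This also sets up the next results: once \eqref{MM5} is in hand, concavity of $Y\mapsto\Phi_D(H,Y)$ follows because $D_D$ is jointly convex, so $\Phi_D$, being a partial Legendre transform (a supremum over $X$ of an affine-in-the-entropy family), is concave in $Y$ — and this is exactly what feeds into Theorem~\ref{expconcavbe} and the refinement \eqref{geom52} of Golden–Thompson. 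I would present the minimax interchange as the crux and keep the eigenvalue identification and the attainment remarks brief.
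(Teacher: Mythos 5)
Your proposal is correct and follows essentially the same route as the paper: substitute the variational formula for $D_D$, interchange $\sup_X$ and $\inf_Q$ by a minimax theorem, and identify $\sup\{\tr[XK]:\tr[X]=1,\,X\geq 0\}=\lambda_{\max}(K)$. The paper invokes the Peck--Dulmage minimax theorem, which requires compactness only of the set on the concave (supremum) side --- here the density matrices --- so the noncompactness of $\{Q>0:\tr[QY]\leq 1\}$ that occupies your second paragraph is not actually an obstruction and needs no workaround.
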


Our proof of (\ref{MM5}) makes use of  a Minimax Theorem;  such theorems  give conditions 
under which a function $f(x,y)$ on $A\times B$ satisfies
\begin{equation}\label{MM}
\sup_{x \in  A} \inf_{\phantom{.}y\in B\phantom{\dot I}}  f(x,y) = 
\inf_{\phantom{.}y\in B\phantom{\dot I}}\sup_{x\in A} f(x,y)\ .
\end{equation}
The original Minimax Theorem was proved by von Neumann \cite{vN28}. While most 
of his paper deals with the case in which $f$ is a bilinear function on 
$\R^m\times \R^n$ for some $m$ and $n$, and $A$ and $B$ are simplexes, he 
also proves \cite[p. 309]{vN28} a more general results for functions on 
$\R\times \R$ that are quasi-concave in $X$ and quasi convex in $y$. 
According to Kuhn and Tucker \cite[p. 113]{KT58}, a multidimensional version 
of this is implicit in the paper. von Neumann's work inspired host of 
researchers to undertake extensions and generalizations; \cite{FKK} contains 
a useful survey. A theorem of Peck and Dulmage \cite{PD} serves our purpose. 
 See \cite{S58} for a more general extension. 

\begin{thm}[Peck and Dulmage]\label{PDMM}  Let $\mathcal{X}$ be a 
topological vector space, and let  $\mathcal{Y}$ be a  vector space. Let 
$A\subset \mathcal{X}$  be non-empty compact and convex, and let $B \subset 
\mathcal{Y}$  be non-empty and
convex.  Let $f$ be a real valued  
function on $A\times B$
such that for each fixed $y\in B$, $x\mapsto f(x,y)$ is concave and upper 
semicontinuous, and for each fixed $x\in A$, 
$y\mapsto f(x,y)$ is convex. Then (\ref{MM}) is valid. 
\end{thm}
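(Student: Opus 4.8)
\medskip\noindent\emph{Proof idea.} Since $\sup_{x\in A}\inf_{y\in B}f(x,y)\le\inf_{y\in B}\sup_{x\in A}f(x,y)$ is automatic, the plan is to prove the reverse inequality. Set $v:=\inf_{y\in B}\sup_{x\in A}f(x,y)$ and fix an arbitrary real $\alpha<v$; I will produce a single $x_0\in A$ with $f(x_0,y)\ge\alpha$ for \emph{every} $y\in B$, which gives $\sup_{x}\inf_{y}f(x,y)\ge\alpha$, and then let $\alpha\uparrow v$. For each $y\in B$ put $C_y:=\{x\in A:f(x,y)\ge\alpha\}$. Concavity of $x\mapsto f(x,y)$ makes $C_y$ convex, upper semicontinuity makes it closed (hence compact, as $A$ is compact), and $\sup_{x\in A}f(x,y)\ge v>\alpha$ makes it non-empty. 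So, by the finite intersection property, it suffices to show that for any finite family $y_1,\dots,y_n\in B$ there is an $x\in A$ with $f(x,y_i)\ge\alpha$ for $i=1,\dots,n$.

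The core of the argument will be to settle this finite problem by a separation in $\R^n$. I would introduce
$$S:=\big\{(r_1,\dots,r_n)\in\R^n\ :\ \text{there exists }x\in A\text{ with }f(x,y_i)\ge r_i\text{ for all }i\big\},$$
and check three facts: $S$ is convex (average two witnesses, using concavity of $f$ in $x$); $S$ is decreasing (if $r\in S$ and $r'\le r$ coordinatewise then $r'\in S$); and $S$ is closed (given $r^{(k)}\in S$ with $r^{(k)}\to r$ and witnesses $x^{(k)}$, pass to a subnet of $(x^{(k)})$ converging to some $x_*\in A$ and use upper semicontinuity to get $f(x_*,y_i)\ge\limsup_k f(x^{(k)},y_i)\ge r_i$, so $r\in S$). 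Suppose now, for contradiction, that $\bar\alpha:=(\alpha,\dots,\alpha)\notin S$. Separating the point $\bar\alpha$ from the closed convex set $S\subset\R^n$ gives $p=(p_1,\dots,p_n)\ne0$ and $c\in\R$ with $\langle p,r\rangle\le c<\langle p,\bar\alpha\rangle$ for all $r\in S$; since $S$ is decreasing all $p_i\ge0$, so rescaling to $\sum_i p_i=1$ turns this into $c<\alpha$. But for every $x\in A$ the vector $(f(x,y_1),\dots,f(x,y_n))$ lies in $S$, whence
$$c\ \ge\ \sup_{x\in A}\sum_{i=1}^n p_i\,f(x,y_i)\ \ge\ \sup_{x\in A}f\Big(x,\sum_{i=1}^n p_i\,y_i\Big)\ \ge\ v,$$
the middle inequality by convexity of $y\mapsto f(x,y)$ and the last because $\sum_i p_i y_i\in B$. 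Thus $\alpha>c\ge v$, contradicting $\alpha<v$; hence $\bar\alpha\in S$, i.e.\ the desired $x$ exists.

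The step I expect to be the main obstacle is the finite case just outlined; everything else is the standard compactness reduction. Inside it, the delicate points are the closedness of $S$ — which is precisely where compactness of $A$ and upper semicontinuity of $f(\cdot,y)$ are used, and where one must argue with subnets rather than sequences since $\mathcal{X}$ need not be metrizable — and the observation that the separation takes place in the finite-dimensional space $\R^n$, so no further convexity or analytic hypotheses are needed there. It is worth noting that the topology of $\mathcal{Y}$ never enters: only finite convex combinations $\sum_i p_i y_i$ of the chosen $y_i$ occur, all inside their fixed finite-dimensional affine hull, which is exactly why the theorem can dispense with any semicontinuity assumption on $f$ in its second argument.
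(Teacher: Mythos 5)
The paper does not prove this theorem; it is imported verbatim from Peck and Dulmage \cite{PD}, so there is no internal argument to compare against. Your proof is correct and complete in outline: it is the standard Fan-type minimax argument, reducing via the finite intersection property (compactness of $A$ plus closedness and convexity of the superlevel sets $C_y$, which is exactly where upper semicontinuity and concavity in $x$ enter) to a finite family $y_1,\dots,y_n$, and then settling the finite case by separating $\bar\alpha$ from the closed, convex, downward-closed set $S\subset\R^n$ of achievable value vectors. All the delicate points check out: superlevel sets of an u.s.c.\ function are closed; the closedness of $S$ correctly uses subnets, since $\mathcal{X}$ need not be metrizable; the nonnegativity of the separating functional $p$ follows from $S$ being decreasing; and the final chain $c\ge\sup_x\sum_i p_i f(x,y_i)\ge\sup_x f(x,\sum_i p_i y_i)\ge v$ uses only convexity of $f(x,\cdot)$ and of $B$, which is why no topology on $\mathcal{Y}$ is needed. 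One cosmetic remark: you should dispose of the degenerate case $v=-\infty$ (where the inequality is trivial) before fixing $\alpha<v$; finiteness of $\sup_{x\in A}f(x,y)$ for each $y$ already rules out $v=+\infty$.
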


\begin{proof}[Proof of Lemma~\ref{mmlem}] The formula (\ref{MM5}) has been proved above. 

Define $\mathcal{X}= \mathcal{Y} = \mn$, $A = \{  W\in \overline{\pn}\ :\  
\tr[W] =1\ \}$
and $B := \{W \in \pn\ :\ \tr[WY]  \leq 1\}$.  For $H\in \hn$, define
$$f(X,Q) := \tr[X(H - \log Q)]\ .$$
Then the hypotheses of Theorem~\ref{PDMM} are satisfied, and  hence
\begin{equation}\label{MM1}
\sup_{X \in  A} \inf_{\phantom{.}Q\in B\phantom{\dot I}}  f(X,Q) = 
\inf_{\phantom{.}Q\in B\phantom{\dot I}}\sup_{X\in A} f(X,Q)\ .
\end{equation}

Using the definition (\ref{geom33}) and the identity (\ref{MM1})
\begin{eqnarray}\label{lmax}
 \Phi_D(H,Y) + \tr[Y] -1&:=&  \sup_{X>0, \tr[X] =1}  \left\{  \tr[XH] - 
\sup_{Q>0, \tr[QY] \leq1}\{  \tr[X\log Q]\  \} \ \right\} \nonumber \\
&:=&  \sup_{X>0, \tr[X] =1} \inf_{Q>0, \tr[QY] \leq1} \left\{  
\tr\left[X\left( H -   \log Q\right)\right]\   \ \right\} \nonumber \\
&=&   \inf_{Q>0, \tr[QY] \leq1} \sup_{X>0, \tr[X] =1} \left\{  
\tr\left[X\left( H - \log Q)\right)\right]\   \ \right\}  \nonumber \\
&=&   \inf_{Q>0, \tr[QY] \leq 1} \lambda_{\max}\left( H - \log Q)\right)  
\end{eqnarray}
\end{proof}

\begin{lm}\label{jcne} For each $H\in \hn$, $Y\mapsto \Phi_D(H,Y)$ is 
concave.  \end{lm}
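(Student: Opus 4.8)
The plan is to deduce the concavity of $Y \mapsto \Phi_D(H,Y)$ directly from the variational formula \eqref{MM5} obtained in Lemma~\ref{mmlem}. Up to the additive term $1 - \tr[Y]$, which is affine in $Y$, we must show that
\[
G(Y) := \inf_{Q>0,\ \tr[QY]\leq 1} \lambda_{\max}\!\left(H - \log Q\right)
\]
is concave on $\pn$. The key structural observations are: (i) for fixed $Q$, the integrand $\lambda_{\max}(H - \log Q)$ does not depend on $Y$ at all; (ii) the constraint set $\{Q>0 : \tr[QY]\leq 1\}$ varies with $Y$, and the map $Y \mapsto \tr[QY]$ is linear. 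So the whole $Y$-dependence enters only through the feasible region, and I expect concavity to follow from a standard ``infimum over a linearly-parametrized constraint'' argument.

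The first step is to reformulate $G$ so that the concavity is transparent. Introduce a scalar multiplier: for $Q>0$ write $Q = tQ'$ with $\tr[Q'Y] = 1$ and $t>0$, so that $\log Q = \log t + \log Q'$ and $\lambda_{\max}(H - \log Q) = \lambda_{\max}(H - \log Q') - \log t$. Since we want to minimize, and since enlarging $t$ only helps (it decreases $-\log t$) subject to the relaxed constraint $\tr[QY]\leq 1$, one checks that the infimum over the relaxed set equals the infimum over $\{Q : \tr[QY] = 1\}$ — this is exactly the same normalization trick already used in the proof of Proposition~\ref{allsame} and in Lemma~\ref{uniqueQ}. Then
\[
G(Y) = \inf_{Q'>0,\ \tr[Q'Y]=1} \lambda_{\max}(H - \log Q').
\]
Now substitute $Q' = Q''/\tr[Q''Y]$ for an \emph{unconstrained} $Q'' > 0$; since $\log Q' = \log Q'' - \log \tr[Q''Y]$,
\[
G(Y) = \inf_{Q''>0} \Big( \lambda_{\max}(H - \log Q'') + \log \tr[Q''Y] \Big).
\]
This is the form I want: for each fixed $Q'' > 0$, the function $Y \mapsto \lambda_{\max}(H - \log Q'') + \log \tr[Q''Y]$ is concave in $Y$, being an affine function ($Y \mapsto \tr[Q''Y]$ is linear, hence affine) composed with the concave increasing function $\log$, plus a constant. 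An infimum of an arbitrary family of concave functions is concave, so $G$ is concave, and therefore $\Phi_D(H,Y) = 1 - \tr[Y] + G(Y)$ is concave in $Y$.

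The main obstacle, such as it is, is the justification that replacing the relaxed constraint $\tr[QY] \leq 1$ by the hard constraint $\tr[QY] = 1$ does not change the infimum, and then that the further substitution to an unconstrained $Q''$ is legitimate (in particular that the infimum over $Q'' > 0$ is attained in a regime where $\tr[Q''Y]$ stays bounded away from $0$ and $\infty$, or else handled by a limiting argument). Both points are routine: the first is the normalization argument already invoked several times in Section~3, and the second follows because $\lambda_{\max}(H - \log Q'') \to +\infty$ if any eigenvalue of $Q''$ tends to $0$, and the $\log\tr[Q''Y]$ term controls the overall scale of $Q''$, so the infimum is effectively over a compact set; alternatively one simply notes that concavity of an infimum of concave functions needs no attainment hypothesis at all. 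I would present the chain of identities above, remark that each equality is an instance of the normalization trick, and conclude concavity in one line. No appeal to the joint convexity of the Umegaki relative entropy is needed here — indeed this is part of the promised independent route to joint convexity of $D_D$.
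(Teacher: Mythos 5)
Your proof is correct, and it takes a cleaner route than the paper's. The paper proves midpoint concavity: it takes the optimizer $Q$ for $Y$ and, for $Y_\pm = Y\pm A$, produces a scalar $c$ with $\tr[Qe^cY_+]\le 1$ and $\tr[Qe^{-c}Y_-]\le 1$, using that $\log(Qe^{\pm c})=\log Q\pm c$ only shifts $\lambda_{\max}(H-\log Q)$ by a constant; the existence of such a $c$ comes down to $\log\tr[QY_-]\le -\log\tr[QY_+]$, i.e.\ to the AM--GM inequality, and the proof then concludes by continuity. Your argument eliminates the constraint altogether, rewriting
$G(Y)=\inf_{Q>0}\bigl\{\lambda_{\max}(H-\log Q)+\log\tr[QY]\bigr\}$,
an infimum of functions each of which is concave in $Y$ (constant plus $\log$ of a positive linear functional), hence concave. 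The two proofs rest on the same two facts --- the scaling identity $\lambda_{\max}(H-\log(tQ))=\lambda_{\max}(H-\log Q)-\log t$ and the concavity of $\log$ composed with a linear map --- but your packaging buys three things: full concavity in one step rather than midpoint concavity plus continuity; no need for the infimum in \eqref{MM5} to be attained (the paper's ``let $Q$ be optimal'' silently assumes an optimizer exists, which requires a coercivity remark since the constraint set is not compact); and a Lagrangian-type formula for $\Phi_D$ that may be of independent use. All the intermediate reductions (relaxed versus hard constraint, the substitution $Q'=Q''/\tr[Q''Y]$) are justified exactly as you say, and, as you note, they are not even strictly necessary since concavity of an infimum of concave functions needs no attainment hypothesis.
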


\begin{proof}  Fix $Y>0$ and let $A\in \hn$ be such that $Y_\pm := Y\pm A$ 
are both positive. Let $Q$ be optimal in the variational formula (\ref{MM5}) 
for $\Phi(H,Y)$.  We claim that there exists $c\in \R$ so that 
\begin{equation}\label{geom51}
\tr[Y_+Qe^c] \leq 1 \qquad{\rm and}\qquad \tr[Y_-Qe^{-c}] \leq 1\ .
\end{equation}
Suppose for the moment that this is true. Then
$$\lambda_{\max}(H - \log Q) = \frac12  \lambda_{\max}(H - \log (Qe^c) + 
\frac12  \lambda_{\max}(H - \log (Qe^{-c})\ .$$
By (\ref{geom51}),
$$\Phi_D(H,Y) \geq \frac12 \Phi_D(H,Y_+) + \frac12 \Phi_D(H,Y)\ ,$$
which proves midpoint concavity. The general concavity statement follows by 
continuity. 

To complete this part of the proof, it remains to show that we can choose 
$c\in \R$ so that (\ref{geom51}) is satisfied.    Define $a := \tr[Q A]$.  
Since $Y\pm A > 0$, and $\tr[Q(Y\pm A)] > 0$, which is the same as
$1\pm a > 0$. That is,  $ |a| < 1$.  We then compute
$$\tr[Y_+Qe^{c}] = e^c \tr[YQ + AQ]  = e^c(1+a)$$
and likewise, $\tr[Y_-Qe^{-c}]  - e^{-c}(1-a)$.  We wish to choose $c$ so 
that
$$e^c(1+a) \leq 1 \quad{\rm and}\quad e^{-c} (1-a)\leq 1\ .$$
This is the same as
$$\log(1-a) \leq   c \leq -\log(1+a)\ .$$
Since $-\log(1+a) - \log(1-a) = -\log(1-a^2) > 0$. the interval $[\log(1-a) 
, -\log(1+a)]$ is non-empty, and we may choose any $c$ in this interval.   
\end{proof}

We may now improve on  Lemma~\ref{jcne}: Not only is $\Phi_D(H,Y)$ concave in $Y$; {\em its exponential is also concave in $Y$}.

\begin{thm}\label{expconcavbe} For all $H\in \hn$, the function
\begin{equation}\label{legscale2}
Y \mapsto \exp\left( \inf_{Q>0, \tr[QY] \leq 1} \lambda_{\max}\left( H - \log Q)\right)\right)  
\end{equation}
is concave on $\pn$.  Moreover,  for all  $H,K\in \hn$, 
\begin{equation}\label{geom52R}
\log(\tr[e^{H+K}] )\leq \  \inf_{Q>0, \tr[Qe^K] \leq 1} 
\lambda_{\max}\left( H - \log Q)\right)  \  \leq  \log( \tr[e^{H} e^{K}]).
\end{equation}
These inequalities  improve upon 
the Golden-Thompson inequality.
\end{thm}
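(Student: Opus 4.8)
The plan is to reduce everything to the function $g(Y):=\inf\{\lambda_{\max}(H-\log Q):Q>0,\ \tr[QY]\le1\}$, so that by Lemma~\ref{mmlem} the function in \eqref{legscale2} is exactly $e^{g(Y)}$ and $\Phi_D(H,Y)=1-\tr[Y]+g(Y)$. The concavity of $Y\mapsto e^{g(Y)}$ I would prove directly, by sharpening the rescaling argument in the proof of Lemma~\ref{jcne}. Fix $Y_0,Y_1\in\pn$ and $\theta\in[0,1]$, set $Y:=(1-\theta)Y_0+\theta Y_1$, and let $Q\in\pn$ be optimal for $g(Y)$ (if the infimum is not attained, use an $\varepsilon$-optimizer and send $\varepsilon\to0$ at the end). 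Put $\beta_i:=\tr[QY_i]>0$ and $Q_i:=\beta_i^{-1}Q\in\pn$. Then $\tr[Q_iY_i]=1$, so $Q_i$ is admissible for $g(Y_i)$, and since $\log Q_i=\log Q-(\log\beta_i)\one$ one gets
\[
 g(Y_i)\ \le\ \lambda_{\max}(H-\log Q_i)\ =\ \lambda_{\max}(H-\log Q)+\log\beta_i\ =\ g(Y)+\log\beta_i\ .
\]
Exponentiating and taking the convex combination,
\[
 (1-\theta)\,e^{g(Y_0)}+\theta\,e^{g(Y_1)}\ \le\ e^{g(Y)}\bigl[(1-\theta)\beta_0+\theta\beta_1\bigr]\ =\ e^{g(Y)}\,\tr[QY]\ \le\ e^{g(Y)}\ ,
\]
which is precisely the asserted concavity. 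The delicate point --- and the only real obstacle --- is the identity $(1-\theta)\beta_0+\theta\beta_1=\tr[QY]$, which holds because each $Q_i$ is normalized by its own $\beta_i$; the symmetric rescaling $Q\mapsto Qe^{\pm c}$ used in Lemma~\ref{jcne}, though good enough for concavity of $g$ itself, is too lossy for the exponential.

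For the two-sided bound \eqref{geom52R} I would specialize $Y=e^K$, so the middle quantity is $g(e^K)$. The lower bound follows from Proposition~\ref{allsame}: since $D_D(X||Y)\le D(X||Y)$ we have $\Phi_D(H,Y)\ge\Phi(H,Y)$, and combining $\Phi_D(H,Y)=1-\tr[Y]+g(Y)$ with the explicit formula \eqref{phiu}, $\Phi(H,Y)=1-\tr[Y]+\log\tr[e^{H+\log Y}]$, gives $g(e^K)\ge\log\tr[e^{H+K}]$. For the upper bound I would simply test the variational formula for $g(e^K)$ with the feasible choice $Q:=e^H/\tr[e^He^K]$: here $\tr[Qe^K]=1$ and $H-\log Q=(\log\tr[e^He^K])\,\one$, so $\lambda_{\max}(H-\log Q)=\log\tr[e^He^K]$ and hence $g(e^K)\le\log\tr[e^He^K]$. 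Since the Golden--Thompson inequality is exactly $\tr[e^{H+K}]\le\tr[e^He^K]$, the chain \eqref{geom52R} interpolates it; that this $Q$ is not optimal unless $H$ and $K$ commute is what Section~4 exploits to sharpen the upper bound.

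Finally, as the discussion before Lemma~\ref{logconc} announces, the concavity just proved yields joint convexity of $D_D$ without reference to the Umegaki relative entropy: by Lemma~\ref{logconc} with $R=D_D$ one has $e^{g(Y)}=\Psi_D(H,Y)+\tr[Y]$, so $\Psi_D(H,\cdot)$ is concave for every $H$; and since $X\mapsto D_D(X||Y)$ is convex and lower semicontinuous, biconjugation gives $D_D(X||Y)=\sup_{H\in\hn}\{\tr[XH]-\Psi_D(H,Y)\}$, a supremum of functions each the sum of a term affine in $X$ and a convex function of $Y$.
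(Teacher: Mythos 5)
Your proposal is correct, and the concavity argument is genuinely different from the paper's. For the two-sided bound \eqref{geom52R} you do exactly what the paper does: the lower bound comes from $D_D\leq D$ and the order-reversing property of the Legendre transform, the upper bound from testing the infimum with $Q=e^H/\tr[e^He^K]$. But for the concavity of \eqref{legscale2} the paper does not argue directly on the variational formula: it invokes the joint convexity of $D_D$ (inherited from the Umegaki relative entropy via Donald's original definition) together with the general theorem of \cite{CL08} that the unconstrained partial Legendre transform $\Psi_D(H,\cdot)$ of a jointly convex, homogeneous relative entropy is concave, and then identifies the function in \eqref{legscale2} with $\Psi_D(H,Y)+\tr[Y]$ via Lemma~\ref{logconc} and Lemma~\ref{mmlem}. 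Your rescaling argument --- normalizing the ($\eps$-)optimal $Q$ for $g\bigl((1-\theta)Y_0+\theta Y_1\bigr)$ separately by $\beta_i=\tr[QY_i]$ and exploiting $(1-\theta)\beta_0+\theta\beta_1=\tr[QY]\leq 1$ --- is elementary, self-contained, and strictly sharper than the symmetric $e^{\pm c}$ rescaling used in Lemma~\ref{jcne} (which, as you correctly observe, yields concavity of $g$ but is too lossy to survive exponentiation). What your route buys is significant: it proves the concavity of the exponential \emph{without} presupposing the joint convexity of $D_D$, so the reverse implication that the paper only sketches in the discussion following the theorem (concavity of the Legendre transform $\Rightarrow$ joint convexity of $D_D$ by biconjugation, independent of Lieb's concavity theorem) becomes a genuine standalone proof rather than a remark; your final paragraph makes this explicit. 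The only points to watch are routine: $\beta_i>0$ because $Q>0$ and $Y_i>0$, and the $\eps$-optimizer limit at the end, both of which you have handled.
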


\begin{proof}  Let $\Psi_D(H,Y)$ be the partial Legendre transform of $D(X||Y)$ in $X$ without any restriction on
$X$:  
\begin{equation}\label{geom33R}
\Psi_D(H,Y) :=  \sup_{X>0}  \{  \tr[XH] - D_D(X||Y) \ \}\ ,
\end{equation}
By \cite[Theorem 1.1]{CL08}, and the joint convexity of $D_D(X||Y)$,  $\Psi_D(H,Y)$ is concave in $Y$ for each
fixed $H \in \hn$. 
By Lemma~\ref{logconc}, 
$$\Psi_D(H,Y) = e^{ \Phi_D(X,Y) + \tr[Y] -1}  - \tr[Y]\ ,$$
and thus we conclude
\begin{equation}\label{ident}\Psi_D(H,Y)  =  
\exp\left( \inf_{Q>0, \tr[QY] \leq 1} \lambda_{\max}\left( H - \log Q)\right)\right) - \tr[Y]\ . 
\end{equation}
The inequality $\Psi(H,Y) \leq \Psi_D(H,Y)$ follows from $D_D(X||Y) 
\leq D(X||Y)$ and the order reversing property of Legendre transforms. 
Taking exponentials and writing $Y = e^K$  yields the first inequality in 
(\ref{geom52R}).  Finally, choosing
${\displaystyle Q := \frac{e^{H}}{\tr[e^H Y]}}$
so that the constraint $\tr[QY] \leq 1$ is satisfied, we obtain
$\Phi_D(H,Y) \leq \log(\tr[e^H Y])$.
Taking exponentials and writing $Y = e^K$  now yields the second inequality 
in (\ref{geom52R}).
\end{proof}

The proof that the function in \eqref{legscale2} is concave has two components. 
One is the identification \eqref{ident} of this function with $\Psi_D(H,Y)$. The second 
makes use of the direct analog of an argument of Tropp \cite{Tr12} proving the concavity in 
$Y$  of $\tr[e^{H + \log Y}] = \Psi(H,Y) + \tr[Y]$ as a consequence of the joint convexity of the 
Umegaki relative entropy. Once one has the formula \eqref{ident}, the convexity of the 
function in \eqref{legscale2} follows from the same argument, applied instead to the Donald 
relative entropy, which is also jointly convex. 

However, it is of interest to note here that this argument can be run in reverse to deduce 
the joint convexity of the Donald relative entropy without invoking the joint convexity 
of the Umegaki relative entropy. 
To see this, note that Lemma~\ref{jcne} provides a simple direct proof of  the concavity in $Y$ of 
$\Phi_D(H,Y)$.  By the Fenchel-Moreau Theorem,
for all density matrices $X$
\begin{equation}\label{PhiRdD}
D_D(X||Y)   = \sup_{H\in \hn}\{ \tr[XH] - \Phi_D(H,Y)  \} \ .
\end{equation}
For each fixed $H\in \hn$, $X,Y \mapsto \tr[XH] - \Phi_R(H,Y)$ is evidently jointly convex. 
Since the supremum of any family of  convex functions is convex, we conclude that   
with the $X$ variable restricted to be a density matrix,  $X,Y \mapsto D_D(X||Y)$ is jointly convex. 
The restriction on $X$ is then easily removed; see Lemma~\ref{homoge} below. 
This gives an elementary proof of the joint convexity of $D_D(X||Y)$. 

It is somewhat surprising the the joint convexity of the Umegaki relative entropy is deeper than the joint convexity of 
either $D_D(X||Y)$ or $D_{BS}(X||Y)$. In fact, the simple proof by Fujii and Kamei that the latter is 
jointly convex stems from a joint {\em operator convexity} result; see the discussion in Appendix C. 
The joint convexity of the Umegaki relative entropy, in contrast,
stems from  the basic concavity theorem in \cite{L73}.

\begin{lm}\label{homoge}  Let $f(x,y)$ be a $(-\infty, \infty]$ valued function on 
$\R^m\times \R^n$ that is homogeneous of degree one. Let
$a\in \R^m$, and let $K_a = \{ x\in \R^m\ :\ \langle a,x\rangle =1\}$, 
and suppose that whenever $f(x,y) < \infty$, $\langle a,x\rangle > 0$. 
If $f$ is convex on $K_a\times \R^n$, then it is convex on $\R^m\times \R^n$. 
\end{lm}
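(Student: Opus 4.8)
The plan is to reduce convexity on all of $\R^m\times\R^n$ to the assumed convexity on the affine slice $K_a\times\R^n$ by a homogeneity rescaling. Fix $(x_0,y_0),(x_1,y_1)\in\R^m\times\R^n$ and $\lambda\in(0,1)$; the goal is to bound $f\bigl(\lambda(x_0,y_0)+(1-\lambda)(x_1,y_1)\bigr)$ by $\lambda f(x_0,y_0)+(1-\lambda)f(x_1,y_1)$. If $f(x_0,y_0)=\infty$ or $f(x_1,y_1)=\infty$ the inequality is trivial, so assume both values are finite. By hypothesis this forces $s_0:=\langle a,x_0\rangle>0$ and $s_1:=\langle a,x_1\rangle>0$, which is precisely what makes the rescaling legitimate.

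Next I would introduce the rescaled points $(x_0/s_0,y_0/s_0)$ and $(x_1/s_1,y_1/s_1)$; since $\langle a,x_i/s_i\rangle=1$, both lie in $K_a\times\R^n$. Put $s:=\langle a,\lambda x_0+(1-\lambda)x_1\rangle=\lambda s_0+(1-\lambda)s_1>0$ and $\mu:=\lambda s_0/s\in(0,1)$, so that $1-\mu=(1-\lambda)s_1/s$. A direct computation gives
$$\frac1s\bigl(\lambda(x_0,y_0)+(1-\lambda)(x_1,y_1)\bigr)=\mu\Bigl(\frac{x_0}{s_0},\frac{y_0}{s_0}\Bigr)+(1-\mu)\Bigl(\frac{x_1}{s_1},\frac{y_1}{s_1}\Bigr),$$
and the right-hand side, being a convex combination of two points of the convex set $K_a\times\R^n$, again lies in $K_a\times\R^n$.

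Now I would apply the assumed convexity of $f$ on $K_a\times\R^n$ to this point, and then invoke degree-one homogeneity twice: on the left to pull the factor $1/s$ out of $f$, and on the right to rewrite $\mu f(x_0/s_0,y_0/s_0)=\mu f(x_0,y_0)/s_0=(\lambda/s)f(x_0,y_0)$ and likewise $(1-\mu)f(x_1/s_1,y_1/s_1)=((1-\lambda)/s)f(x_1,y_1)$. Multiplying the resulting inequality through by $s>0$ yields exactly the desired bound, which establishes convexity on $\R^m\times\R^n$. The argument is essentially bookkeeping; the only point that needs a moment's care is the opening case distinction, which guarantees $\langle a,x_i\rangle>0$ exactly when the right-hand side is finite, so there is no genuine analytic obstacle.
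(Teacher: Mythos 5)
Your proof is correct and follows essentially the same route as the paper: rescale each point to the slice $K_a$ using $\langle a,x_i\rangle>0$, write the (normalized) combination as a convex combination of points of $K_a\times\R^n$ with reweighted coefficients, and use degree-one homogeneity to convert the convexity inequality on the slice into the desired one. The paper packages the identical computation as a proof of subadditivity (taking $\lambda=\alpha_1/(\alpha_1+\alpha_2)$) and then invokes homogeneity once more, but this is only a cosmetic difference from your direct verification.
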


\begin{proof} Let $x_1,x_2\in \R^n$ and $y_1,y_2\in \R^n$. We may suppose that $f(x_1,y_1),f(x_2,y_2) < \infty$. Define
$\alpha_1 = \langle a,x_1\rangle$ and $\alpha_2 = \langle a,x_2\rangle$. Than  $\alpha_1,\alpha_2> 0$, 
and $u_1/\alpha_1,x_2/\alpha_2\in K_a$. With $\lambda := \alpha_1/(\alpha_1+\alpha_2)$, 
\begin{eqnarray*}
f(x_1+x_2,y_1+y_2) &=&(\alpha_1+\alpha_2)
 f\left (\lambda \frac{x_1}{\alpha_1} +(1-\lambda) \frac{x_2}{\alpha_2} ,\lambda \frac{y_1}{\alpha_1} 
 +(1-\lambda)\frac{y_2}{\alpha_2}\right )\\
 &\leq&
 (\alpha_1+\alpha_2)\lambda  f\left ( \frac{x_1}{\alpha_1} , \frac{y_1}{\alpha_1} 
 \right ) +  (\alpha_1+\alpha_2)(1-\lambda)  f\left ( \frac{x_2}{\alpha_2} , \frac{y_2}{\alpha_2} 
 \right )\\
 &=&f(x_1,y_1) + f(x_2,y_2)\ .
 \end{eqnarray*}
Thus, $f$ is subaddtive on $\R^m\times \R^m$, and by the homogeneity once more, jointly convex. 
\end{proof}

We next provide the proof of Proposition~\ref{pinsker}, which we recall says that any quantum relative entropy functional satisfies the inequality
\begin{equation}\label{pinsk22}
R(X||W) \geq \tfrac12 \tr[X]  \left\Vert \frac{X}{\tr[X]} 
-\frac{W}{\tr[W]}\right\Vert_1^2
\end{equation}
for all $X,W\in \pn$, 
where $\|\cdot \|_1$ denotes the trace norm.

\begin{proof}[Proof of Proposition~\ref{pinsker}]  By scaling, it suffices to show that when $X$ and $W$ are density matrices, 
\begin{equation}\label{pinsk222}
R(X||W) \geq \tfrac12 \ \left\Vert X 
- W\right\Vert_1^2
\end{equation}
Let $X$ and $W$ be density matrices and define $H = X - W$. Let $P$ be the spectral 
projection onto the subspace of $\C^n$ spanned be the eigenvectors of $H$ with 
non-negative eigenvalues. Let $\mathcal{A}$ be the $*$-subalgebra of $\mn$ generated by 
$H$ and $\one$, and let ${\rm E}_{\mathcal{A}}$ be the orthogonal projection in $\mn$ 
equipped with the Hilbert-Schmidt inner product onto $\mathcal{A}$. Then $A \mapsto 
{\rm E}_{\mathcal{A}}A$ is a convex operation \cite{D59}, and then by the joint convexity of $R$, 
\begin{equation}\label{pinsk223}
R(X||Y) \geq R({\rm E}_{\mathcal{A}}X || {\rm E}_{\mathcal{A}}Y)\ .
\end{equation}
Since both ${\rm E}_{\mathcal{A}}X$ and ${\rm E}_{\mathcal{A}}Y$ belong to the commutative 
algebra $\mathcal{A}$, \eqref{pinsk223} together with property {\it (3)} in the definition of quantum relative entropies then gives us
$$R(X||Y) \geq D({\rm E}_{\mathcal{A}}X || {\rm E}_{\mathcal{A}}Y)\ .$$
Since $\|{\rm E}_{\mathcal{A}}X - {\rm E}_{\mathcal{A}}Y\|_1 = \|X - Y\|_1$, 
the inequality now follows from the {\em classical} Csiszar-Kullback-Leibler-Pinsker inequality \cite{C67,KL51,K67,Pi64} on a two-point probability space.
\end{proof}

\begin{remark}  The proof of the lower bound (\ref{pinsk222}) given here is essentially the same as the proof for the case of the Umegaki relative entropy given in \cite{HOT81}. The proof gives one reason for attaching importance to the joint convexity property, and since it is short, we spelled it out to emphasize this. 
\end{remark} 

We conclude this section with a brief discussion of the failure of 
convexity of the function   $\phi(X,Y) = \tr X^{1/2}\log (Y^{-1/2} X 
Y^{-1/2})  X^{1/2}$.
We recall that if we write this in the other order, i.e., define the 
function  $\psi(X,Y) = \tr X^{1/2}\log (X^{1/2} Y^{-1} X^{1/2}) X^{1/2}$, 
the function $\psi$  is jointly convex. In fact $\psi$ is {\em operator 
convex} if the trace is omitted. We might have hoped, therefore, that $\phi$ 
would at least be convex in $Y$ alone, and even have hoped that 
$\log (Y^{-1/2} X Y^{-1/2})$ is operator convex in $Y$. Neither of these 
things is true. The following lemma precludes the operator convexity. 

\begin{lm} Let $F$ be a function mapping the set of positive semidefinite 
matrices into itself. Let $f: [0,\infty)\to \R$ be a concave, monotone 
increasing function. If $Y \mapsto f(F(Y))$ is operator convex, then $Y 
\mapsto F(Y)$ is operator convex.
\end{lm}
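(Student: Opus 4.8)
The plan is to verify operator convexity of $F$ straight from the definition. Fix positive semidefinite $X,Y$ and $\lambda\in[0,1]$, set $Z=\lambda X+(1-\lambda)Y$, and write $C=F(Z)$, $\mathcal A=F(X)$, $\mathcal B=F(Y)$, and $D=\lambda\mathcal A+(1-\lambda)\mathcal B$; the goal is $C\le D$. The hypothesis that $Y\mapsto f(F(Y))$ is operator convex furnishes the operator inequality $f(C)\le\lambda f(\mathcal A)+(1-\lambda)f(\mathcal B)$, with $f$ applied through the functional calculus. One may assume $f$ is strictly increasing; a concave, weakly increasing $f$ on $[0,\infty)$ that is not strictly increasing is eventually constant, a degenerate case that can be treated separately.

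The first step uses that $f$, though it rearranges spectra in general, acts transparently along eigenvectors of $C$: if $v$ is a unit eigenvector of $C$ with $Cv=\mu v$, then $\langle v,f(C)v\rangle=f(\mu)$. Pairing that operator inequality with $v$, and then invoking the operator Jensen inequality for the \emph{concave} function $f$ — namely $\langle v,f(M)v\rangle\le f(\langle v,Mv\rangle)$ for self-adjoint $M$, together with ordinary concavity of $f$ — one obtains
\begin{align*}
f(\mu)=\langle v,f(C)v\rangle &\le \lambda\langle v,f(\mathcal A)v\rangle+(1-\lambda)\langle v,f(\mathcal B)v\rangle\\
&\le \lambda f(\langle v,\mathcal A v\rangle)+(1-\lambda)f(\langle v,\mathcal B v\rangle)\ \le\ f(\langle v,Dv\rangle).
\end{align*}
Strict monotonicity of $f$ then forces $\mu=\langle v,Cv\rangle\le\langle v,Dv\rangle$. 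Hence $\langle v,(D-C)v\rangle\ge0$ for every eigenvector $v$ of $C$; equivalently, $D-C$ has non-negative diagonal entries in every eigenbasis of $C$.

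The hard part is to promote this to the genuine operator inequality $D-C\ge0$, and I expect that to be the main obstacle, since ``non-negative diagonal in the eigenbasis of $C$'' is strictly weaker than ``$D-C\ge0$'' for a general pair of positive matrices. (It is already enough when $C$ is a scalar multiple of the identity, because then every orthonormal basis is an eigenbasis; so the difficulty is concentrated on the non-degenerate part of the spectrum of $C$.) To close the gap I would exploit the freedom left in the setup: replacing $(X,Y)$ by $\bigl(X+sH,\ Y-\tfrac{\lambda}{1-\lambda}sH\bigr)$ for self-adjoint $H$ and small real $s$ leaves $Z$, hence $C$ and its eigenbasis, \emph{unchanged}, while $D$ becomes $D_s=\lambda F(X+sH)+(1-\lambda)F\bigl(Y-\tfrac{\lambda}{1-\lambda}sH\bigr)$; the first step applies to each $D_s$ and gives non-negative diagonal in the one fixed eigenbasis of $C$. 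One then hopes to combine these relations over all $H$ — for instance, at an index where $\langle v_i,(D_0-C)v_i\rangle=0$ the scalar function $s\mapsto\langle v_i,D_s v_i\rangle$ attains an interior minimum, yielding a first-order identity — together with the Cauchy--Schwarz constraints that the operator inequality $f(C)\le\lambda f(\mathcal A)+(1-\lambda)f(\mathcal B)$ imposes on its off-diagonal entries, so as to control the off-diagonal part of $D-C$ and conclude $D-C\ge0$. Making this last synthesis rigorous is the crux of the argument.
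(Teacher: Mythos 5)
Your first step is correct, and it is in fact the only part of this statement that is true: testing the operator inequality $f(C)\le \lambda f(\mathcal A)+(1-\lambda)f(\mathcal B)$ against an eigenvector $v$ of $C$, where $\langle v,f(C)v\rangle=f(\mu)$ holds exactly, and then applying Jensen, concavity and monotonicity of $f$, yields $\langle v,Cv\rangle\le\langle v,Dv\rangle$ for every eigenvector $v$ of $C$. You are also right that this is strictly weaker than $C\le D$ and that the off-diagonal part is the obstruction. But the gap you flag cannot be closed, because the lemma as stated is false. Take $F(Y)=Y^4$ and $f(x)=x^{1/2}$: then $f$ is concave and increasing on $[0,\infty)$, $F$ maps positive semidefinite matrices to positive semidefinite matrices, and $f(F(Y))=(Y^4)^{1/2}=Y^2$ is operator convex, yet $Y\mapsto Y^4$ is not operator convex (recall $t^p$ is operator convex on $[0,\infty)$ only for $p\in[-1,0]\cup[1,2]$; equivalently, $t^4/t=t^3$ is not operator monotone). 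The same phenomenon occurs with $f=\log$ and $F(Y)=e^Y$. In these examples $D-C$ does have non-negative diagonal in every eigenbasis of $C$ --- exactly your step 1 --- while failing to be positive semidefinite, so no amount of perturbation or off-diagonal bookkeeping can rescue the argument. Your instinct to distrust the passage from the diagonal inequality to the operator inequality was the right one.

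For comparison, the paper argues by contraposition: assuming $F$ is not operator convex, it produces a unit vector $v$ and $Y_1,Y_2$ witnessing the failure, and runs essentially your Jensen/concavity/monotonicity chain on the quadratic forms at $v$ to contradict operator convexity of $f\circ F$. That argument has the same hole as yours, seen from the other side: its final step needs $\langle v,f(F(Y))v\rangle=f(\langle v,F(Y)v\rangle)$ rather than the Jensen inequality $\langle v,f(F(Y))v\rangle\le f(\langle v,F(Y)v\rangle)$, and equality holds only when $v$ is an eigenvector of $F(Y)$, which the witnessing vector need not be. (There is also a sign slip in the displayed failure-of-convexity inequality.) What both arguments actually establish is the weaker, true statement: if $f\circ F$ is operator convex, then $\langle v,F(\lambda X+(1-\lambda)Y)v\rangle\le\lambda\langle v,F(X)v\rangle+(1-\lambda)\langle v,F(Y)v\rangle$ for every eigenvector $v$ of $F(\lambda X+(1-\lambda)Y)$. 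The paper's intended application --- that $Y\mapsto\log(Y^{-1/2}ZY^{-1/2})$ is not operator convex --- therefore needs to be verified directly rather than routed through this lemma.
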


\begin{proof} If $Y \mapsto F(Y)$ is not operator convex, then there is a 
unit vector $v$ and there are density matrices $Y_1$ and $Y_2$ such that 
with $Y = \frac12(Y_1+Y_2)$, 
$$\langle v, F(Y) v\rangle < \frac12\left(\langle v, F(Y_1)v\rangle + 
\langle v, F(Y_2)v\rangle\right)\ .$$
By Jensen's inequality, for all density matrices $X$,
$\langle v, f(F(X))v\rangle  \leq  f(\langle v, F(X)v\rangle)$.
Therefore,
\begin{eqnarray*}
\frac12\left( \langle v, f(F(Y_1))v\rangle + \langle v, 
f(F(Y_2))v\rangle\right) &\leq&
\frac12\left( f(\langle v, F(Y_1)v\rangle) + f(\langle v, 
F(Y_2)v\rangle)\right)\\
&\leq&
f\left(\frac12\left( \langle v, F(Y_1)v\rangle + \langle v, 
F(Y_2)v\rangle\right)\right) < \langle v, f(F(Y))v\rangle\ .
\end{eqnarray*}
\end{proof}

By the lemma, if $Y \mapsto \log(Y^{-1/2}ZY^{-1/2})$ were convex, $Y\mapsto 
Y^{-1/2}ZY^{-1/2}$ would be convex. But this may be shown to be false in the 
$2\times 2$ case by simple computations in an neighborhood of the identity 
with $Z$ a rank-one projector.  A more intricate computation of the same 
type shows that -- even with the trace -- convexity fails.

\section{Exponential Inequalities Related to the\\
Golden Thompson 
Inequality}

Let $\Psi(H,Y)$ be given in \eqref{psiu} and  $\Psi_D(H,Y)$ be given in 
\eqref{geom33R}.
We have seen in the previous section that the inequality $D_D(X||Y)\leq 
D(X||Y)$ leads to the inequality
$\Psi(H,Y) \leq \Psi_D(H,Y)$. This inequality, which may be 
written 
explicitly as
\begin{equation}\label{gtbet}
\tr[e^{H + \log Y}] \leq \exp\left( \inf\{  \lambda_{\max}\left( H - \log 
Q)\right)  \ :\ Q\in \pn\ , \tr[QY] \leq 1\ \} \right) \ ,
\end{equation}
immediately implies the Golden-Thompson inequality through the simple 
choice 
$Q = e^H/\tr[Y e^H]$. 
The $Q$ chosen here is optimal only when $H$ and $Y$ commute. Otherwise,
there is a better choice for $Q$, which will lead to a tighter upper bound.

A similar analysis can be made with respect to the BS relative entropy.
Define $\Psi_{BS}(H,Y)$ by
\begin{equation}\label{psibs}
\Psi_{BS}(H,Y) := \sup\{ \tr[HX] - D_{BS}(X||Y)\  : X\in \pn \}\ .
\end{equation}
The inequality $D(X||Y) \leq D_{BS}(X||Y)$ together with 
Lemma~\ref{logconc}  gives
\begin{equation}\label{psibs2}
\Psi_{BS}(H,Y)   \leq \Psi(H,Y) = \tr[e^{H +\log Y}] - \tr[Y] \  .
\end{equation}
It does not seem possible to compute $\Psi_{BS}(H,Y)$ explicitly, 
but it is possible to give an alternate expression for it in terms of the 
solutions of a non-linear matrix 
equation similar to the one (\ref{Qeq})  that arises in the context of the 
Donald relative entropy.

Writing out the identity $X\#_t Y = Y\#_{1-t}X$ gives
$$X^{1/2}(X^{-1/2}Y X^{-1/2})^t X^{1/2}   =  Y^{1/2}(Y^{-1/2} X 
Y^{-1/2})^{1-t} Y^{1/2} \ .$$
Differentiating at $t=0$ yields
$$X^{1/2}\log (X^{1/2}Y^{-1} X^{1/2}) X^{1/2} = Y^{1/2}(Y^{-1/2} X Y^{-1/2}) 
\log (Y^{-1/2} X Y^{-1/2})   Y^{1/2}\ .$$
This provides an alternate expression for $D_{BS}(X||Y)$ that involves $X$ 
in a somewhat simpler way that is advantageous for the partial Legendre 
transform in $X$:
\begin{equation}\label{BSalt}
D_{BS}(X||Y) = \tr[ Y f (Y^{-1/2} X Y^{-1/2}) ] - \tr[X] + \tr[Y]
\end{equation}
where $f(x) = x \log x$.  A different derivation of this formula may be 
found in \cite{HP93}.

Introducing the variable $R = Y^{-1/2} X Y^{-1/2}$ we have, for all $H\in 
\hn$,
\begin{eqnarray*}
\tr[XH] - D_{BS}(X||Y)  &=&  \tr[X(H+ \one)]  - \tr  \tr[ Y f (Y^{-1/2} X 
Y^{-1/2}) ]  - \tr[Y]\\
&=& \tr[R(Y^{1/2}(H+\one)Y^{1/2})] - \tr[Y f(R)] - \tr[Y] \ .
\end{eqnarray*}
Therefore,
\begin{equation}\label{psibsalt}
\Psi_{BS}(H,Y) + \tr[Y]  = \sup_{R\in \pn}\left\{  
\tr[R(Y^{1/2}(H+\one)Y^{1/2})] - \tr[Y f(R)]  \right\} \ .
\end{equation}

When $Y$ and $H$ commute, the supremum on the right is achieved at $R = e^H$ 
since for this choice of $R$,
$$ \tr[R(Y^{1/2}(H+\one)Y^{1/2})] - \tr[Y f(R)]  = \tr[Ye^H] = \tr[e^{H + 
\log Y}]\ ,$$
and by (\ref{psibs2}), this is the maximum possible value. 

In general, without assuming that $H$ and $Y$ commute, this choice of $R$ 
and (\ref{psibs2}) yields an interesting inequality.
\begin{thm}
 For all self-adjoint $H$ and $L$, 
 \begin{equation}
\tr[e^He^L] - \tr[e^{H+L}]  \leq \tr[e^HHe^L] - \tr[e^H e^{L/2}H e^{L/2}]\ .
 \end{equation}
\end{thm}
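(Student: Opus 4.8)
The plan is to read this inequality directly off the variational formula \eqref{psibsalt} for $\Psi_{BS}$, combined with the comparison \eqref{psibs2}, by substituting the explicit (but in general non-optimal) choice $R = e^H$ into \eqref{psibsalt} and then $Y = e^L$.

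First I would recall that \eqref{psibsalt} reads
\[
\Psi_{BS}(H,Y) + \tr[Y] \;=\; \sup_{R\in\pn}\Bigl\{ \tr\bigl[R\,Y^{1/2}(H+\one)Y^{1/2}\bigr] - \tr[Y f(R)]\Bigr\},\qquad f(x) = x\log x,
\]
so that every choice of $R\in\pn$ gives a lower bound on the left-hand side. Taking $R = e^H\in\pn$ and using that $H$ is self-adjoint, so $\log(e^H)=H$ and hence $f(e^H) = e^H H$, together with cyclicity of the trace to write $\tr\bigl[e^H Y^{1/2}(H+\one)Y^{1/2}\bigr] = \tr\bigl[Y^{1/2}e^H Y^{1/2}H\bigr] + \tr[e^H Y]$, I obtain
\[
\Psi_{BS}(H,Y) + \tr[Y] \;\geq\; \tr\bigl[Y^{1/2}e^H Y^{1/2}H\bigr] + \tr[e^H Y] - \tr[Y e^H H].
\]
Next I would invoke \eqref{psibs2}, namely $\Psi_{BS}(H,Y) \leq \Psi(H,Y) = \tr[e^{H+\log Y}] - \tr[Y]$ (the only non-commutative input, coming from $D(X\|Y)\leq D_{BS}(X\|Y)$ via Lemma~\ref{logconc} and the order-reversing property of the Legendre transform). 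Combining the two displays gives
\[
\tr\bigl[Y^{1/2}e^H Y^{1/2}H\bigr] + \tr[e^H Y] - \tr[Y e^H H] \;\leq\; \tr[e^{H+\log Y}].
\]
Finally I would set $Y = e^L$ for self-adjoint $L$, so that $\log Y = L$ and $Y^{1/2} = e^{L/2}$, rearrange, and use cyclicity of the trace to rewrite $\tr[Y e^H H] = \tr[e^H H e^L]$ and $\tr\bigl[Y^{1/2}e^H Y^{1/2}H\bigr] = \tr\bigl[e^H e^{L/2} H e^{L/2}\bigr]$; this yields exactly the claimed inequality $\tr[e^H e^L] - \tr[e^{H+L}] \leq \tr[e^H H e^L] - \tr\bigl[e^H e^{L/2} H e^{L/2}\bigr]$.

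There is essentially no obstacle: all the substance is already contained in \eqref{psibsalt} and \eqref{psibs2}, and the argument is bookkeeping — substituting $R = e^H$ and $Y = e^L$ and rearranging. The only points that require a moment's care are the identity $\log(e^H) = H$ for self-adjoint $H$, so that $f(e^H) = e^H H$, and keeping the direction of the comparison $\Psi_{BS}\leq\Psi$ straight, since reversing it would reverse the conclusion.
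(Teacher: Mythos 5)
Your proposal is correct and is essentially identical to the paper's own proof: both substitute the non-optimal choice $R = e^H$ into the variational formula \eqref{psibsalt}, combine with the comparison $\Psi_{BS}(H,Y)\leq \Psi(H,Y)$ from \eqref{psibs2}, and set $Y=e^L$. The bookkeeping with $f(e^H)=e^HH$ and cyclicity of the trace is carried out correctly.
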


\begin{proof}
With the choice $R = e^H$, the inequality \eqref{psibs2} together 
with \eqref{psibsalt} yields  
$$
\tr[e^H (Y^{1/2}H Y^{1/2} + Y)]  - \tr[ Y e^H H] \leq \tr[e^{H + \log Y}]\ ,
$$
or, rearranging terms,
$$
\tr[e^HY]  - \tr[e^{H + \log 
Y}] \leq 
  \tr[e^HH Y]  - \tr[ e^H(Y^{1/2}HY^{1/2})].
$$
The inequality is proved
by writing $Y = e^L$.
\end{proof}

We now turn to the specification of the actual maximizer. 

\begin{lm}\label{BSmax} For $K\in \hn$ and $Y\in \pn$, the function
$$R \mapsto \tr[RK] - \tr[Y f(R)] $$
 on $\overline{\pn}$ has a unique maximizer $R_{K,Y}$  in $\overline{\pn}$ 
which is contained in 
 $\pn$, and $R_{K,Y}$  is the unique critical point of this function in 
$\pn$. 
\end{lm}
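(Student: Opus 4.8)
The plan is to combine strict concavity of the objective with a coercivity estimate to obtain a unique maximizer over $\overline{\pn}$, and then to show that this maximizer cannot lie on the boundary because $f'(x)=1+\log x\to-\infty$ as $x\to 0^+$. Write $g(R):=\tr[RK]-\tr[Yf(R)]$. Since $f(x)=x\log x$ extends continuously to $[0,\infty)$ with $f(0)=0$, the map $R\mapsto\tr[Yf(R)]$ is continuous on $\overline{\pn}$, hence so is $g$. The structural point I would establish is that $R\mapsto\tr[Yf(R)]$ is \emph{strictly} convex on $\pn$. Using the integral representation
\[
f(R)=R\log R=\int_0^\infty\Big(\tfrac{R}{1+\lambda}-\one+\lambda\,(R+\lambda)^{-1}\Big){\rm d}\lambda,
\]
and differentiating twice along $t\mapsto R+tH$ (only the last summand contributes), one gets
\[
\frac{{\rm d}^2}{{\rm d}t^2}\Big|_{t=0}\tr[Yf(R+tH)]=2\int_0^\infty\lambda\,\tr\!\big[(R+\lambda)^{-1}Y(R+\lambda)^{-1}\,H(R+\lambda)^{-1}H\big]{\rm d}\lambda .
\]
Each integrand equals $\tr[PQ]$ with $P:=(R+\lambda)^{-1}Y(R+\lambda)^{-1}>0$ and $Q:=H(R+\lambda)^{-1}H=\big((R+\lambda)^{-1/2}H\big)^*\big((R+\lambda)^{-1/2}H\big)\ge 0$, so it is nonnegative and vanishes only when $H=0$; thus the second derivative is strictly positive for $H\neq 0$, and $g$ is strictly concave on $\pn$ (and concave on $\overline{\pn}$ by continuity).

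Next I would prove coercivity. Writing $R$ in a spectral basis with eigenvalues $r_i$, with $\mu>0$ the smallest eigenvalue of $Y$ and $\rho:=\|R\|=\max_i r_i$, one checks $\tr[Yf(R)]\ge\mu\,\rho\log\rho-C$ for a constant $C=C(Y,n)$ (bound the $r_i\ge 1$ terms below by $\mu\,r_i\log r_i$ and each $r_i<1$ term below by $-\|Y\|/e$), while $|\tr[RK]|\le n\|K\|\,\rho$. Hence $g(R)\to-\infty$ as $\|R\|\to\infty$, so the superlevel set $\{R\in\overline{\pn}:g(R)\ge g(\one)\}$ is closed and bounded, therefore compact; by continuity $g$ attains its supremum over $\overline{\pn}$ at some $R_{K,Y}$, and by strict concavity this maximizer is unique.

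To see $R_{K,Y}\in\pn$, suppose it had nontrivial kernel with projection $P_0\ne 0$. For small $t>0$, $R_{K,Y}+tP_0\in\overline{\pn}$, and since $P_0$ commutes with $R_{K,Y}$ and projects onto its kernel, $f(R_{K,Y}+tP_0)=f(R_{K,Y})+(t\log t)P_0$, so
\[
g(R_{K,Y}+tP_0)-g(R_{K,Y})=t\,\tr[P_0K]-(t\log t)\,\tr[YP_0].
\]
Since $\tr[YP_0]>0$ and $-\log t\to+\infty$ as $t\downarrow 0$, this is strictly positive for $t$ small, contradicting maximality. Hence $R_{K,Y}\in\pn$, so it is an interior critical point of $g$; and since any critical point of a differentiable strictly concave function on a convex open set is its unique global maximizer, $R_{K,Y}$ is the unique critical point as well. (If one wants it explicitly, differentiating $g$ with the same representation gives the critical-point equation $K=\int_0^\infty\big(\tfrac{Y}{1+\lambda}-\lambda(R+\lambda)^{-1}Y(R+\lambda)^{-1}\big){\rm d}\lambda$.)

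The step I expect to cost the most care is the strict convexity of $R\mapsto\tr[Yf(R)]$: plain convexity is immediate from operator convexity of $x\log x$, but it is strictness that upgrades ``a maximizer exists'' to ``the maximizer is unique'', and that needs the second-derivative computation above together with the factorization of $H(R+\lambda)^{-1}H$ showing it is nonzero whenever $H\ne 0$. The coercivity estimate and the boundary exclusion are routine once the $x\to 0^+$ and $x\to\infty$ behaviour of $x\log x$ is used.
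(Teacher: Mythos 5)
Your proof is correct and follows essentially the same route as the paper's: strict concavity of $R\mapsto \tr[RK]-\tr[Yf(R)]$, a coercivity bound reducing the problem to a compact set, and exclusion of boundary maximizers via the blow-up of $-f'(x)$ as $x\downarrow 0$. You simply supply more detail at each step (the second-derivative computation for strictness, the explicit kernel-projection perturbation), where the paper asserts these points more briefly.
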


\begin{proof} Since $f$ is strictly operator convex, $R \mapsto \tr[RK] - 
\tr[Y f(R)]$ is strictly concave. There are no local maximizers on the 
boundary on $\pn$ since $\lim_{x\downarrow 0}(-f'(x)) = \infty$, so that if 
$R$ has a zero eigenvalue, 
a small perturbation of $R$ will yield a higher value. 

Finally, 
$$ \tr[RK] - \tr[Y f(R)]   \leq \|K\| \tr[ R - \tfrac{1}{a} R\log R]$$
where $a = \|K\| \|Y^{-1}\|$. This shows that 
$$\sup_{R\in \pn}\{ \tr[RK] - \tr[Y f(R)]\} = \sup\{ \tr[RK] - \tr[Y f(R)]\ 
:\ R\geq 0\ ,\ \|R\| \leq e^{1/a}\ \}\ .$$
since the set on the right is compact and convex, and since the function $R 
\mapsto \tr[RK] - \tr[Y f(R)]$ is strictly concave and upper-semicontinuous 
on this set, there exists a unique maximizer, which we have seen must be in 
the interior, and by the strict concavity, there can be no other interior 
critical point. 
\end{proof}

It is now a simple matter to derive the Euler-Lagrange equation that 
determines the maximizer in Lemma~\ref{BSmax}. The integral representation 
for $f(A) = A\log A$ is
$$
A \log A = \int_0^\infty \left( \frac{A}{\lambda +1} - \one + \frac{\lambda 
}{\lambda  +A}\right){\rm d}\lambda\ ,
$$
and then one readily concludes that the unique maximizer $R_{H,Y}$ to the 
variational problem in (\ref{psibsalt}) is the unique solution in $\pn$ of
$$
\int_0^\infty\left( \frac{Y}{\lambda +1}  - \lambda \frac{1}{\lambda +R} Y 
\frac{1}{\lambda +R}\right){\rm d}\lambda = 
Y^{1/2}(H+\one)Y^{1/2}\ .
$$
When $H$ and $Y$ commute, one readily checks that $R = e^H$ is the unique 
solution in $\pn$.

We now show how some of the logarithmic 
inequalities that follow 
from Theorem~\ref{main1} may be used to get upper and lower bounds on  
$\tr[e^{H + \log Y}]$.

Given two positive matrices $W$ and $V$, one way to show that $\tr[W] \leq \tr[V]$ is to show that 
\begin{equation}\label{fulc}
\tr[W \log W]  \leq \tr[W \log V]\ .
\end{equation}
Then 
\begin{eqnarray}\label{fulc2}
0 \leq D(W||V) &=& \tr[W \log W]  - \tr[W \log V] - \tr[W] + \tr[V]\nonumber\\
&\leq& -\tr[W] + \tr[V]\ .
\end{eqnarray}
Thus,  when (\ref{fulc}) is satisfied, one not only has $\tr[W] \leq\ tr[V]$, but the stronger bound $D(W||V) + \tr[W] \leq \tr[V]$.

\begin{thm}\label{HPC2} Let $H,K\in \H_n$  For  $r>0$, define
\begin{equation}\label{comp1}
W :=  (e^{rH} \#_{s}e^{rK})^{1/r}  \quad{\rm and}\qquad V :=  e^{(1-s)H + s 
K}\ .
\end{equation}
Then for $s\in [0,1]$,
\begin{equation}\label{comp2}
D(V||W) + \tr[W] \leq \tr[V]\ .
\end{equation}
\end{thm}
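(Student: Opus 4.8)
The plan is to turn \eqref{comp2} into a bound on a single relative entropy and then isolate the genuinely new point. By the definition \eqref{vnent2}, $D(V\|W)=\tr[V(\log V-\log W)]+\tr[W]-\tr[V]$, so \eqref{comp2} is \emph{exactly} equivalent to
\[
D(V\|W)\ \le\ \tr[V]-\tr[W]\ =:\ M .
\]
The complementary Golden--Thompson inequality \eqref{HPco}, which is established here in Theorem~\ref{HPhard}, already supplies $M\ge 0$. Since $\log V=(1-s)H+sK$ holds on the nose while $\log W=\tfrac1r\log(e^{rH}\#_s e^{rK})$, the claim $D(V\|W)\le M$ is the same as the purely logarithmic statement $\tr[V(\log V-\log W)]\le 2M$, a comparison between the geometric mean $e^{rH}\#_s e^{rK}$ and the exponential of the linear interpolant, tested in the \emph{first} slot $V$.

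Next I would bring in the fulcrum \eqref{fulc}--\eqref{fulc2}, which produces the companion remainder $D(W\|V)\le M$; this is equivalent to the \emph{second}-slot logarithmic inequality $\tr[W\log W]\le\tr[W\log V]$. The bridge between the two slots is the elementary identity
\[
\tr\big[(V+W)(\log V-\log W)\big]\ =\ D(V\|W)-D(W\|V)+2M ,
\]
which shows that $D(V\|W)\le D(W\|V)$ is \emph{equivalent} to the symmetrized estimate $\tr[(V+W)(\log V-\log W)]\le 2M$. Hence the stated inequality $D(V\|W)\le M$ follows at once from two ingredients: the second-slot bound $D(W\|V)\le M$ (the fulcrum engine), and the comparison $D(V\|W)\le D(W\|V)$ for this particular pair.

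The main obstacle is precisely this comparison, and it is where the non-symmetry of the relative entropy must be confronted: the fulcrum controls only the second argument, and there is no general ordering of $D(V\|W)$ against $D(W\|V)$, so a clean transfer between slots is unavailable. To establish it I would exploit the extended geometric-mean structure directly. Writing $W^{r}=e^{rH}\#_s e^{rK}$, I would use the reparametrization identity \eqref{rmet62AAA} together with Lemma~\ref{ABClemma} to express the endpoints $e^{rH}$ and $e^{rK}$ as \emph{extrapolated} means of $W^{r}$ (with parameters $1/(1-s)$ and $1/s$, both $\ge 1$), and then apply the geodesic logarithmic inequalities of Theorem~\ref{secA} (equivalently the hard Hiai--Petz inequality Theorem~\ref{HPhard}) to bound the first-slot trace $\tr[V\log(e^{rH}\#_s e^{rK})]$ from below. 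The strict form of the Friedland--So refinement \eqref{FrSo}, which was already the decisive input in the proof of Theorem~\ref{clA}, should furnish the needed sharpness and the equality analysis. Carrying out this first-slot estimate is the crux; once it yields $\tr[(V+W)(\log V-\log W)]\le 2M$, the target $\tr[V(\log V-\log W)]\le 2M$, and hence \eqref{comp2}, follows by combining it with the engine bound.
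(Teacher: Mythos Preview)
Your reduction via the fulcrum to the ``engine bound'' $\tr[W\log W]\le\tr[W\log V]$, and hence to $D(W\|V)\le M$, is exactly what the paper does. But note that the printed statement has the two arguments of the relative entropy transposed: the paper's proof, and the remark immediately following the theorem (which invokes $D(W\|V)>0$), actually establish
\[
D(W\|V)+\tr[W]\ \le\ \tr[V]\,,
\]
not the $D(V\|W)$ version you set out to prove. The paper's argument is short and direct: by Lemma~\ref{ABClemma} (equivalently Theorem~\ref{reparam}) one rewrites $e^{rH}=W^{r}\#_{\beta}\,e^{rK}$ with $\beta=-s/(1-s)\le 0$, then applies the logarithmic inequality \eqref{secVYV} of Theorem~\ref{secAA} (the $t\le 0$ range, with the \emph{first} slot of the mean equal to the density $W$ in front) to obtain $\tr[W\log W]\le (1-s)\tr[W\log e^{H}]+s\,\tr[W\log e^{K}]=\tr[W\log V]$. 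The fulcrum \eqref{fulc}--\eqref{fulc2} then finishes.

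Everything in your proposal after the line ``the companion remainder $D(W\|V)\le M$'' is therefore an attempt to prove a claim the paper does not make. The extra comparison you isolate, $D(V\|W)\le D(W\|V)$, is not established in your sketch, and the tools you invoke do not apply as stated: Theorem~\ref{secA} (the $t\ge 1$ range) bounds $\tr[X\log(Z^{r}\#_{t}Y^{r})]$ only under the constraint $\tr[Z]=\tr[X]$, and when you place $V$ in front and write $e^{rH}=e^{rK}\#_{1/(1-s)}W^{r}$ you would need $\tr[e^{K}]=\tr[V]$, which fails in general. The relevant inequality here is \eqref{secVYV} (Theorem~\ref{secAA}), not Theorem~\ref{secA}, precisely because it allows the first slot of the mean to match the density $W$ without any side trace constraint. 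The Friedland--So refinement plays no role in this theorem. In short: accept the $D(W\|V)$ reading; then your engine bound via the fulcrum, once supplied with the one-line proof of $\tr[W\log W]\le\tr[W\log V]$ above, is already the paper's proof, and the remaining ``crux'' you identify is neither needed nor carried out.
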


\begin{proof} By the remarks preceding the theorem, it suffices to show that for this choice of $V$ and $W$,  
$\tr[W \log W]  \leq \tr[W \log V]$. Define $X = e^H$ and $Y = 
e^K$. 
The identity
\begin{equation}\label{rmet62AA}
A = (A\#_{s}B)\#_{-s/(1-s)}  B\ , 
\end{equation}
valid for $A,B\in \pn$. 
is the special case  of Theorem~\ref{reparam}  in which $t_1 = 1$, $t = 
-t_0/(t-t_0)$ and $t_0 = s$.
Taking $A = X^r = e^{rH}$ and $B = Y^r = e^{rK}$, we have
$X^r = W^r \#_{\beta} Y^r$, with  $\beta = -s/(1-s)$.
Therefore, by (\ref{secVYV}),
\begin{eqnarray*}
\tr[W\log X]  &=& \frac{1}{r} \tr[W \log (W^r \#_{\beta} Y^r)] =  \\
&\geq &   \tr[W((1-\beta) \log W +  \beta\log Y)].
\end{eqnarray*}
Since $$ \frac{1}{1-\beta} \log X - \frac{\beta}{1-\beta}\log Y = (1-s)\log X + s \log Y = \log V\ ,$$
this last inequality is equivalent to  $\tr[W \log W] \leq  \tr[W \log V]$. 
\end{proof}

\begin{remark}
Since $D(W||V) > 0$ unless $W = V$, (\ref{comp2}) is stronger than the 
inequality 
$\tr[W] \leq \tr[V]$ which is the complemented Golden-Thompson inequality 
of Hiai and Petz \cite{HP93}. Their proof is also based on
(\ref{secVYV}), together with an identity equivalent to (\ref{rmet62AA}), 
but they 
employ these differently, thereby omitting the remainder term
$D(W||V)$.
\end{remark}

We remark that one may obtain at least one of the cases of (\ref{HPco}) directly from (\ref{psibs}) and (\ref{psibs2}) by making an appropriate choice of $X$ in terms of $H$ and $Y$: 
Define $X^{1/2} := Y\# e^H$. Then 
$$X^{1/2}Y^{-1}X^{1/2} = X^{1/2}\#_{-1}Y = e^H\ ,$$
and, therefore, making this choice of $X$,
$$\Psi_{BS}(H,Y) \geq \tr[  (Y\# e^H)^2H] - \tr[(Y\# e^H)^2H] + \tr[(Y\# 
e^H)^2] - \tr[Y] =  \tr[(Y\# e^H)^2] - \tr[Y]\ .$$
This proves  $\tr[(Y\# e^H)^2]    \leq \tr[e^{H +\log Y}] $
which is equivalent to the $r=1/2$, $t=1/2$ case of  \eqref{HPco}.

\section*{Appendices}

\appendix

\section {The Peierls-Bogoliubov Inequality and the \\ Gibbs~Variational 
Principle}  \label{PBGV}

For $A\in \hn$, let $\sigma(A)$ denote the spectrum of $A$, and let $A = \sum_{\lambda\in \sigma(A)}\lambda P_\lambda$
be the spectral decomposition of $A$. For a function $f$ defined $
\sigma(A)$,  $f(A) = \sum_{\lambda\in \sigma(A)}f(\lambda) P_\lambda$.
Likewise, for $B\in \hn$, let  $B = \sum_{\mu\in \sigma(B)}\mu Q_\mu$ be the 
spectral 
decomposition of $B$. 
Let $f$ be convex and differentiable on an interval containing 
$\sigma(A)\cup\sigma(B)$.  
Then, since
$\sum_{\lambda\in \sigma(A)} P_\lambda = \sum_{\mu\in \sigma(B)} Q_\mu= \one$,
\begin{equation}\label{kleinin}
\tr[f(B)- f(A) - f'(A)(B-A)]  = \sum_{\lambda\in \sigma(A)}\sum_{\mu\in 
\sigma(B)} 
[f(\mu) - f(\lambda)  - f'(\lambda)(\lambda - \mu)]\tr[P_\lambda Q_\mu] \ .
\end{equation}
For each $\mu$ and $\lambda$ both $[f(\mu) - f(\lambda)  - 
f'(\lambda)(\lambda - \mu)]$
and $\tr[P_\lambda Q_\mu]$ are non-negative, and hence the right side of 
(\ref{klein}) is non-negative. This yields {\em Klein's inequality}:
\begin{equation}\label{klein}
\tr[f(B)] \geq \tr[f(A)] + \tr[f'(A)(B-A)]\ .
\end{equation}

Now suppose that  the function $f$ is strictly convex on an interval 
containing
$\sigma(A)\cup\sigma(B)$, Then for $\mu\neq \lambda$, $[f(\mu) - f(\lambda)  
- f'(\lambda)(\lambda - \mu)]>0$. If there is equality in (\ref{klein}), 
then for each $\lambda\in \sigma(A)$ and $\mu\in \sigma(B)$ such that 
$\lambda \neq \mu$, $\tr[P_\lambda Q_\mu] =0$. 
Since $\sum_{\mu\in \sigma(B)}\tr[P_\lambda Q_\mu] = \tr[P_\lambda] > 0$, 
$\lambda \in \sigma(B)$
and 
$P_\lambda \leq  Q_\lambda$. The same reasoning shows that for each 
$\mu\in \sigma(B)$, $\mu\in \sigma(A)$ and $Q_\mu \leq P_\lambda$.  Thus, 
there is equality in Klein's inequality if and only if $A = B$.

Taking $f(t) = e^t$, (\ref{klein}) becomes
$\tr[e^{B}] \geq \tr[e^A] + \tr[e^A(B-A)]$.
For $c\in \R$ and $H,K\in \hn$, choose $A = c +H$ and $B = H+K$ to  obtain
$$\tr[e^{H+K}] \geq e^{c}\tr[e^H] + e^{c}\tr[e^H( K- c)]\ .$$
Choosing $c = \tr[e^H K]/\tr[e^H]$, we obtain $\tr[e^{H+K}] \geq e^{\tr[e^H K]/\tr[e^H]}\tr[e^H]$ which can be written as
\begin{equation}\label{PB}
\frac{\tr[e^H K]}{\tr[e^H]} \leq \log(\tr[e^{H+K}]) - \log(\tr[e^H])\ ,
\end{equation}
 the Peierls-Bogoliubov inequality, valid for all $H,K\in \hn$.

The original application of Klein's inequality was to the entropy. It may be used to prove the non-negativity of the relative entropy. Let $A,B\in \pn$, and apply Klein's inequality with $f(x) = x \log x$ to obtain
$$\tr[B \log B] \geq \tr[A\log A] + \tr[(1 + \log A)(B - A)] = \tr[B] - \tr[A] + \tr[B \log A]\ .$$
Rearranging terms yields
$\tr[B(\log B - \log A)] + \tr[A] - \tr[B] \geq 0$;
that is, $D(B||A) \geq 0$.

The Peierls-Bogoliubov Inequality has as a direct consequence  the quantum Gibbs Variational Principle. 
Suppose that $H\in \hn$ and $\tr[e^H] =1$.  Define $X := e^H$ so that $X$ is a density matrix. Then (\ref{PB}) specializes to 
\begin{equation}\label{PB2}
\tr[X K]\leq \log(\tr[e^{\log X+K}]) \ ,
\end{equation}
which is valid for all density matrices $X$ and all $K\in \hn$.  Replacing $K$ in (\ref{PB2}) with 
$K - \log X$ yields
\begin{equation}\label{PB3}
\tr[X K] \leq \log(\tr[e^{K}])  + \tr[X\log X]\ .
\end{equation}
For fixed $X$, there is equality in (\ref{PB3}) for $K = \log X$, and for fixed $K$, there is equality in 
(\ref{PB3}) for $X := e^K/\tr[e^K]$.

It follows that  for all density matrices $X$,
\begin{equation}\label{gibbs1}
\tr[X\log X] =  \sup\{ \tr[XK] - \log(\tr[e^{K}])\ :\ K\in \hn\ \}
\end{equation}
and that for all $K\in \hn$,
\begin{equation}\label{gibbs2}
\log(\tr[e^{K}]) =  \sup\{ \tr[XK] - \tr[X\log X]\ :\ X\in \pn\ , \tr[X] =1\ \}\ .
\end{equation}
This is the Gibbs variational principle for the entropy $S(X) = - \tr[X \log X]$.

Now let $Y\in \pn$ and replace $K$ with 
$K+ \log Y$ in (\ref{PB3}) to conclude that for all density matrices $X$, all $Y\in \pn$ and all $K\in \hn$,
\begin{eqnarray}\label{PB4}
\tr[X K] &\leq& \log(\tr[e^{K+\log Y}])  + \tr[X(\log X - \log Y)] \nonumber\\
&=& (\log(\tr[e^{K+\log Y}])   + 1 -\tr[Y] )+ D(X||Y)\ .
\end{eqnarray}
For fixed $X$, there is equality in (\ref{PB4}) for $K = \log X- \log Y$, and for fixed $K$, there is equality in 
(\ref{PB3}) for $X := e^{K+\log Y}/\tr[e^{K+\log Y}]$. Recalling that for $\tr[X] =1$,  
$\tr[X(\log X - \log Y)] = D(X||Y)+ 1 - \tr[Y]$, 
we have that for all density matrices $X$, and all $Y\in \pn$,
\begin{equation}\label{gibbs3}
D(X||Y) =  \sup\{ \tr[XK] - (\log(\tr[e^{K+ \log Y}]) +\tr[Y] -1 )\ :\ K\in \hn\ \}
\end{equation}
and that for all $K\in \hn$ and all $Y\in \pn$,
\begin{equation}\label{gibbs4}
\log(\tr[e^{K+\log Y}]) + 1 -\tr[Y]  =  \sup\{ \tr[XK] - D(X||Y)\ :\ X\in \pn\ , \tr[X] =1\ \}\ .
\end{equation}

\section{Majorization inequalities}

Let ${\bf x} = (x_1,\dots,x_n)$ and  ${\bf y} = (y_1,\dots,y_n)$  be two vectors in $\R^n$ such that
$x_{j+1} \leq x_j$ and  $y_{j+1} \leq y_j$ for each $j=1,\dots,n-1$.  Then $y$ is said to {\em  majorize} $x$
in case
\begin{equation}\label{maj1}
\sum_{j=1}^k x_j \leq \sum_{j=1}^k y_j \quad{\rm for}\quad k=1,\dots, n-1\quad{\rm and}\quad 
\sum_{j=1}^n x_j = \sum_{j=1}^n y_j \ .
\end{equation}
and in this case we write ${\bf x} \prec {\bf y}$.

A matrix $P\in \mn$ is {\em doubly stochastic} in case 
$P$ has non-negative entries and the entries in each row and column sum to one.
By a theorem of Hardy, Littlewood and P\'olya, ${\bf x} \prec {\bf y}$ if and only if there is a doubly stochastic matrix $P$ such that ${\bf x} = P{\bf y}$.  Therefore, if $\phi$ is convex on $\R$ and ${\bf x}\prec {\bf y}$,
let $P$ be a doubly stochastic matrix such that ${\bf x} = P{\bf y}$. By Jensen's inequality
$$\sum_{j=1}^n \phi(x_j) = \sum_{j=1}^n \phi\left(\sum_{k=1}^n P_{j,k} y_k \right)  \leq 
\sum_{j,k=1}^n P_{j,k}\phi\left( y_k \right)  = \sum_{k=1}^n \phi(y_k)\ . $$
That is, for every convex function $\phi$,
\begin{equation}\label{maj3}
{\bf x} \prec {\bf y} \quad{\Rightarrow} \quad \sum_{j=1}^n \phi(x_j) \leq  \sum_{j=1}^n \phi(y_j)\ .
\end{equation}

Let $X,Y\in \hn$, and let  ${\boldsymbol\lambda}^X$ and ${\boldsymbol\lambda}^Y$ be the eigenvalue sequences of $X$ and $Y$ respectively with the eigenvalues repeated according to their geometric multiplicity and arranged in decreasing order considered as vectors in $\R^n$. Then $Y$ is said to {\em majorize} $X$ in case 
${\boldsymbol\lambda}^X \prec{\boldsymbol\lambda}^Y$, and in this case we write $X \prec  Y$.
It follows immediately from (\ref{maj3}) that if $\phi$ is an  increasing convex function,
\begin{equation}\label{maj4}
X \prec Y \quad \Rightarrow \quad \tr[\phi(X)] \leq \tr[\phi(Y)] \quad{\rm and}\quad \tr[X] = \tr[Y]\ .
\end{equation}
The following extends a theorem of Bapat and Sunder \cite{BS85}:

\begin{thm}\label{bohpatsunder}   Let $\Phi:\mn \to \mn$ be a linear transformation such that $\Phi(A) \geq 0$ for all $A \geq 0$, $\Phi(\one) = \one$ and $\tr[\Phi(A)] = \tr[A]$ for all $A \in \mn$.  Then for all $A\in \hn$,
\begin{equation}\label{bosu2}
\Phi(X) \prec X\ .
\end{equation}
\end{thm}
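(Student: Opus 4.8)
The plan is to verify directly the two defining conditions of the majorization $\Phi(A)\prec A$: equality of the full traces, and the chain of partial-sum inequalities for the decreasingly ordered eigenvalues ${\boldsymbol\lambda}^{\Phi(A)}$ and ${\boldsymbol\lambda}^{A}$. The first is immediate, since $\tr[\Phi(A)]=\tr[A]$ by hypothesis, so $\sum_{j=1}^n\lambda_j^{\Phi(A)}=\sum_{j=1}^n\lambda_j^{A}$. For the partial sums I would invoke the classical Ky Fan maximum principle: for any $B\in\hn$ and any $1\leq k\leq n$, $\sum_{j=1}^k\lambda_j^{B}=\max\{\tr[PB]\ :\ P\in\hn,\ P^2=P,\ \tr[P]=k\}$. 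Thus it suffices to bound $\tr[P\Phi(A)]$ for an arbitrary rank-$k$ orthogonal projection $P$.

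The key step is to pass to the Hilbert--Schmidt adjoint $\Phi^*$ of $\Phi$, defined by $\tr[\Phi^*(A)B]=\tr[A\Phi(B)]$, and to record that it inherits the structural properties of $\Phi$ with the roles of ``unital'' and ``trace-preserving'' interchanged. Since $\Phi$ is positivity-preserving, so is $\Phi^*$: for $A\geq 0$ and $B\geq 0$ one has $\tr[\Phi^*(A)B]=\tr[A\Phi(B)]\geq 0$. Because $\Phi$ is trace-preserving, $\tr[\Phi^*(\one)B]=\tr[\Phi(B)]=\tr[B]$ for all $B$, hence $\Phi^*(\one)=\one$; and because $\Phi$ is unital, $\tr[\Phi^*(A)]=\tr[A\Phi(\one)]=\tr[A]$, so $\Phi^*$ is trace-preserving. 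Now fix $k$, let $P$ be a rank-$k$ orthogonal projection, and set $Q:=\Phi^*(P)$. Then $Q\geq 0$; moreover $\one-P\geq 0$ gives $\one-Q=\Phi^*(\one-P)\geq 0$, so $0\leq Q\leq\one$; and $\tr[Q]=\tr[P]=k$.

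It then remains to establish the elementary fact that whenever $0\leq Q\leq\one$ and $\tr[Q]=k$, one has $\tr[QA]\leq\sum_{j=1}^k\lambda_j^{A}$. For this, diagonalize $A$ in an orthonormal eigenbasis $u_1,\dots,u_n$ with $\lambda_1^A\geq\cdots\geq\lambda_n^A$, put $q_j:=\langle u_j,Qu_j\rangle$, and note $q_j\in[0,1]$ with $\sum_j q_j=k$. Then
$$\tr[QA]-\sum_{j=1}^k\lambda_j^{A}\;=\;-\sum_{j\leq k}\lambda_j^{A}(1-q_j)+\sum_{j>k}\lambda_j^{A}q_j\;\leq\;\lambda_k^{A}\Big(-\sum_{j\leq k}(1-q_j)+\sum_{j>k}q_j\Big)\;=\;0,$$
using $\lambda_j^{A}\geq\lambda_k^{A}$ for $j\leq k$, $\lambda_j^{A}\leq\lambda_k^{A}$ for $j>k$, and $\sum_{j>k}q_j=k-\sum_{j\leq k}q_j=\sum_{j\leq k}(1-q_j)$. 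Applying this with $Q=\Phi^*(P)$ and maximizing over rank-$k$ projections $P$, the Ky Fan principle yields $\sum_{j=1}^k\lambda_j^{\Phi(A)}=\max_P\tr[P\Phi(A)]=\max_P\tr[\Phi^*(P)A]\leq\sum_{j=1}^k\lambda_j^{A}$, which together with the trace equality is precisely $\Phi(A)\prec A$.

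There is no serious obstacle; the only points that need care are the bookkeeping for the dual map $\Phi^*$ and the observation that the Ky Fan variational formula reduces the partial-sum statement to testing against positive operators $Q$ with $0\leq Q\leq\one$ and $\tr[Q]=k$ --- which is exactly the class into which $\Phi^*$ maps rank-$k$ projections. This is the place where positivity of $\Phi$ is used, and nothing stronger (such as complete positivity, which the Bapat--Sunder hypotheses effectively imposed) is needed.
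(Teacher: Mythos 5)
Your proof is correct and follows essentially the same route as the paper: both pass to the Hilbert--Schmidt adjoint $\Phi^*$, observe that it carries rank-$k$ projections into operators $Q$ with $0\leq Q\leq\one$ and $\tr[Q]=k$, and bound $\tr[\Phi^*(P)A]$ by $\sum_{j=1}^k\lambda_j^A$ via the Ky Fan variational characterization. You simply spell out the verification of the properties of $\Phi^*$ and the elementary variational estimate that the paper leaves implicit.
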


\begin{proof} Note that $\Phi(X)\in \hn$.  Let $\Phi(X) = \sum_{j=1}^n \lambda_j |v_j\rangle\langle  v_j|$ be the spectral resolution of $\Phi(X)$ with $\lambda_j \geq \lambda_{j+1}$ for $j =1,\dots,n-1$,
Fix $k\in \{1,\dots,n-1\}$.and let $P_k =   \sum_{j=1}^k  |v_j\rangle\langle v_j|$. Then with $\Phi^*$ denoting the adjoint of $\Phi$ with respect to the Hilbert-Schmidt inner product, 
\begin{eqnarray*}
\sum_{j=1}^k \lambda_j &=& \tr[P_k\Phi(X)]\\
&=& \tr[\Phi^*(P_k)X] \leq \sup\{ \tr[QX], \ 0 \leq Q \leq 1,\tr[Q] = k\ \} = 
\sum_{j=1}^k \mu_j\ 
\end{eqnarray*}
where $\{\mu_1,\dots,\mu_k\}$ is the eigenvalue sequence of $X$ arranged in decreasing order. 
\end{proof}

Bapat and Sunder prove this for $\Phi$ of the form $\Phi(A) = \sum_{\j=1}^m V_j^*AV_j$ where
Let $V_1,\dots, V_m \in \mn$ satisfy
\begin{equation}\label{bosu1}
\sum_{j=1}^m V_j V_j^* = \one  = \sum_{j=1}^m V_j^* V_j\ .
\end{equation}
Choi \cite{Choi72,Choi75} has shown that, for all $n\geq 2$, the transformation
$$\Phi(A) = \frac{1}{n^2 - n -1}((n-1)\tr[A]\one - A)$$
cannot be written in the form (\ref{bosu1}), yet it  satisfies the conditions of Theorem~\ref{bohpatsunder}.

\begin{lm}\label{bosulem} Let $A\in \pn$ and let $\Phi$ be defined by 
\begin{equation}\label{bosu3}
\Phi(X) = \int_0^\infty \frac{ A^{1/2}}{\lambda + A} X \frac{ A^{1/2}}{\lambda + A}  {\rm d}\lambda
\end{equation}. Then for all $X\in \hn$, (\ref{bosu2}) is satisfied,
and for all $p\geq 1$, 
\begin{equation}\label{bosu5}
\tr[|\Phi(X)|^p] \leq \tr[|X|^p] \ .
\end{equation}
\end{lm}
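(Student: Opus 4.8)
The plan is to recognize $\Phi$ as a map of the type covered by Theorem~\ref{bohpatsunder} and then extract both conclusions from that theorem together with the majorization machinery of Appendix~B. Write $V_\lambda := A^{1/2}(\lambda+A)^{-1}$, which is positive (in particular self-adjoint) for each $\lambda\geq 0$, so that
\begin{equation*}
\Phi(X) = \int_0^\infty V_\lambda X V_\lambda \,{\rm d}\lambda\ .
\end{equation*}
Linearity of $\Phi$ is immediate, and if $X\geq 0$ then each integrand $V_\lambda X V_\lambda$ is positive, hence so is the integral; thus $\Phi$ is positivity preserving. Convergence of the integral for every $X\in\mn$ will follow once we know that $\int_0^\infty V_\lambda^2\,{\rm d}\lambda$ is finite, which is the content of the next step.

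The one genuine computation is $\int_0^\infty V_\lambda^2\,{\rm d}\lambda = \one$. By the spectral theorem this reduces to the scalar identity $\int_0^\infty \frac{a}{(\lambda+a)^2}\,{\rm d}\lambda = 1$ for each eigenvalue $a>0$ of $A$, which is just $\frac{{\rm d}}{{\rm d}\lambda}\!\left(-\frac{a}{\lambda+a}\right) = \frac{a}{(\lambda+a)^2}$. This gives at once $\Phi(\one) = \int_0^\infty V_\lambda^2\,{\rm d}\lambda = \one$, and, using cyclicity of the trace and interchanging the (finite-dimensional) trace with the integral, $\tr[\Phi(X)] = \int_0^\infty \tr[V_\lambda^2 X]\,{\rm d}\lambda = \tr\!\left[\left(\int_0^\infty V_\lambda^2\,{\rm d}\lambda\right)X\right] = \tr[X]$ for all $X\in\mn$. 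Hence $\Phi$ satisfies every hypothesis of Theorem~\ref{bohpatsunder}, and that theorem yields $\Phi(X)\prec X$ for all $X\in\hn$, which is \eqref{bosu2}.

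For the $L^p$ bound, observe that since $X$ and each $V_\lambda$ are self-adjoint, $\Phi(X)$ is self-adjoint, so $\tr[|\Phi(X)|^p] = \sum_j |\lambda_j^{\Phi(X)}|^p$ and $\tr[|X|^p] = \sum_j |\lambda_j^{X}|^p$, where ${\boldsymbol\lambda}^{\Phi(X)}$ and ${\boldsymbol\lambda}^{X}$ are the decreasingly ordered eigenvalue vectors in $\R^n$. The relation $\Phi(X)\prec X$ is by definition ${\boldsymbol\lambda}^{\Phi(X)}\prec{\boldsymbol\lambda}^{X}$, and since $t\mapsto |t|^p$ is convex on $\R$ for $p\geq 1$, the vector majorization inequality \eqref{maj3} gives $\sum_j |\lambda_j^{\Phi(X)}|^p \leq \sum_j |\lambda_j^{X}|^p$, i.e. \eqref{bosu5}. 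There is no real obstacle here; the only point needing care is that $|t|^p$ is not monotone, so one must apply \eqref{maj3} at the level of eigenvalue vectors rather than the trace inequality \eqref{maj4}, while convergence of the defining integral and the interchange of integral and trace are routine consequences of $\int_0^\infty V_\lambda^2\,{\rm d}\lambda = \one$.
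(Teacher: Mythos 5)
Your proof is correct and follows the same route as the paper: verify that $\Phi$ is positivity preserving, unital, and trace preserving, invoke Theorem~\ref{bohpatsunder} to get $\Phi(X)\prec X$, and then pass from majorization to the trace inequality. Your one added refinement --- applying \eqref{maj3} with the convex but non-monotone function $t\mapsto|t|^p$ at the level of eigenvalue vectors, rather than \eqref{maj4}, which requires monotonicity --- is exactly the right way to fill in the step the paper leaves implicit.
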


\begin{proof} $\Phi$ evidently satisfies the conditions of Theorem~\ref{bohpatsunder}, and then (\ref{bosu2}) implies (\ref{bosu5}) as discussed above. 
\end{proof}

\section{Geodesics and Geometric Means}\label{geodesic}

There is 
a natural Riemannian metric on $\pn$
such that the corresponding distance $\delta(X,Y)$ is invariant under 
conjugation:
$$\delta(A^*XA,A^*YA) = \delta(X,Y)$$
for all $X,Y\in \pn$ and all invertible $n\times n$ matrices $A$. It turns out that for  
$A,B\in \pn$, $t\mapsto A\#_t B$, $t\in [0,1]$, is a constant speed geodesic for this metric that connects $A$ and $B$. 
This geometric point of view, originating in the work of statisticians, and was developed in the form presented here by Bhatia and Holbrook \cite{BH06}.

\begin{defi} Let $t\mapsto X(t)$, $t\in [a,b]$,  be a smooth path in $\pn$. The arc-length along this path
in the {\em conjugation invariant metric} is
$$\int_a^b \|X(t)^{-1/2} X'(t) X(t)^{-1/2}\|_2{\rm d} t\ ,$$
where $\|\cdot\|_2$ denotes the Hilbert-Schmidt norm and the prime denotes the derivative.
The corresponding distance between $X,Y\in \pn$ is defined by 
$$\delta(X,Y) = \inf\left \{  \int_0^1  \|X(t)^{-1/2} X'(t) X(t)^{-1/2}\|_2{\rm d} t\ : X(t)\in \pn \ {\rm  for}\  t\in (0,1), X(0) =X, X(1) = Y\ \right\}\ .$$
\end{defi}

To see the conjugation invariance, let the smooth path $X(t)$ be given, let an invertible matrix $A$ be given, and define $Z(t) := A^*X(t)A$. Then by cyclicity of the trace,
\begin{eqnarray*}
\|Z(t)^{-1/2} Z'(t) Z(t)^{-1/2}\|_2^2 &=& \tr[ Z(t)^{-1} Z'(t) Z(t)^{-1} Z'(t)] \\
&=&
\tr[ A^{-1}X(t)^{-1} X'(t) X(t)^{-1} X'(t)A] = \|X(t)^{-1/2} X'(t) X(t)^{-1/2}\|_2^2\ .
\end{eqnarray*}

Given any smooth path $t \mapsto X(t)$, define $H(t) := \log(X(t))$ so that $X(t)= e^{H(t)}$, and then
\begin{equation}\label{rmet1}
X'(t) = \int_0^1 X(t)^{1-s} H'(t) X(t)^s {\rm d} s\ ,
\end{equation}
or equivalently,
\begin{eqnarray}\label{rmet1}
H '(t) &=& \int_0^\infty \frac{1}{\lambda + X(t)} X'(t) \frac{1}{\lambda + X(t)} {\rm d} \lambda\nonumber\\
&=&  \int_0^\infty \frac{X(t)^{1/2}}{\lambda + X(t)} (X(t)^{-1/2}X'(t)X(t)^{-1/2}) \frac{X(t)^{1/2}}{\lambda + X(t)} {\rm d} \lambda\ .
\end{eqnarray}

Lemma~\ref{bosulem} yields 
$H'(t) \prec X(t)^{-1/2}X'(t)X(t)^{-1/2}$
and  its consequence
\begin{equation}\label{rmet3}
\|H'(t)\|_2  \leq  \|X(t)^{-1/2}X'(t)X(t)^{-1/2}\|_2 \ .
\end{equation}

Now let $X(t)$ be a smooth path in $\pn$ with $X(0) = X$ and $X(1) = Y$. Then, with $H(t) = \log X(t)$
\begin{eqnarray}\label{rmet4}
\|\log Y - \log X\|_2  &=& \left\Vert \int_0^1 H'(t){\rm d} t\right\Vert_2\nonumber\\
 &\leq& 
 \int_0^1\| H'(t)\|_2{\rm d}t \leq  \int_0^1\|X(t)^{-1/2}X'(t)X(t)^{-1/2}\|_2{\rm d}t = \delta(X,Y)\ .
 \end{eqnarray}
 
 If $X$ and $Y$ commute, this lower bound is exact: Given $X,Y\in \pn$ that commute, define
 $H(t) = (1-t)\log X + t \log Y$, and $X(t) = e^{H(t)}$.  Then $H'(t) = \log Y - \log X$, independent of $t$.
 Hence all of the inequalities in (\ref{rmet4}) are equalities. Moreover, if  there is equality in (\ref{rmet4}),
 the necessarily  
 $$H'(s) = \int_0^1 H'(t){\rm d} t = \log Y - \log X$$
 for all $s\in [0,1]$.  
 This proves:

  \begin{lm}\label{comcase}
  When $X,Y \in \pn$ commute, there is exactly one constant speed geodesic running from $X$ to $Y$ in unit time, namely, $X(t) = e^{(1-t)\log X + t\log Y}$, and 
 $$\delta(X,Y) =  \|\log Y - \log X\|_2\ .$$
 \end{lm}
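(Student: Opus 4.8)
The plan is to split Lemma~\ref{comcase} into two parts---the distance formula $\delta(X,Y)=\|\log Y-\log X\|_2$ together with the assertion that $t\mapsto e^{(1-t)\log X+t\log Y}$ is a constant-speed geodesic, and the uniqueness of this geodesic among constant-speed paths---and to assemble both from the inequalities \eqref{rmet3} and \eqref{rmet4} already established above. Throughout I set $M:=\log Y-\log X$; the case $X=Y$ is trivial, so I assume $M\neq0$ and put $\widehat M:=M/\|M\|_2$.

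For the first part I would take $H(t):=(1-t)\log X+t\log Y$ and $X(t):=e^{H(t)}$ for $t\in[0,1]$. Because $X$ and $Y$ commute, $M=H'(t)$ commutes with $H(t)$ for every $t$, so $X'(t)=MX(t)$ and $X(t)^{-1/2}X'(t)X(t)^{-1/2}=M$; hence this path joins $X$ to $Y$, has constant speed $\|M\|_2$, and has length $\|M\|_2$. On the one hand \eqref{rmet4} gives $\|M\|_2\le\delta(X,Y)$; on the other, $\delta(X,Y)$ is at most the length of this particular path, namely $\|M\|_2$. Therefore $\delta(X,Y)=\|M\|_2=\|\log Y-\log X\|_2$, and the displayed path---having constant speed and length equal to $\delta(X,Y)$---is a constant-speed geodesic.

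For uniqueness, let $X(t)$, $t\in[0,1]$, be any constant-speed path from $X$ to $Y$ whose length is $\delta(X,Y)$, let $L$ denote its (constant) speed, and put $H(t):=\log X(t)$. Then $L=\int_0^1 L\,dt=\delta(X,Y)=\|M\|_2$, and reading the chain \eqref{rmet4} for this path gives
\[
\|M\|_2=\left\Vert\int_0^1 H'(t)\,dt\right\Vert_2\le\int_0^1\|H'(t)\|_2\,dt\le\int_0^1\|X(t)^{-1/2}X'(t)X(t)^{-1/2}\|_2\,dt=L=\|M\|_2,
\]
so both inequalities are equalities. Equality in the first is equality in the triangle inequality for the $\hn$-valued (Hilbert-Schmidt) integral $\int_0^1 H'(t)\,dt=M$; since the Hilbert-Schmidt norm is strictly convex this forces $H'(t)=c(t)\widehat M$ with $c(t):=\|H'(t)\|_2\ge0$. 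Equality in the second, together with the pointwise bound \eqref{rmet3}, $\|H'(t)\|_2\le\|X(t)^{-1/2}X'(t)X(t)^{-1/2}\|_2=L$, forces $\int_0^1\big(L-\|H'(t)\|_2\big)\,dt=0$ with a non-negative continuous integrand, hence $c(t)=L$ for every $t$; and $\int_0^1 c(t)\,dt=\|M\|_2$ then gives $L=\|M\|_2$ and $H'(t)\equiv M$. Integrating, $H(t)=\log X+tM=(1-t)\log X+t\log Y$, i.e.\ $X(t)=e^{(1-t)\log X+t\log Y}$, which is the uniqueness claim.

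The routine parts are the reductions and the fundamental-theorem-of-calculus manipulations. The one step that needs genuine care---and hence the main point---is the equality case of the Hilbert-space triangle inequality $\left\Vert\int_0^1 H'(t)\,dt\right\Vert_2\le\int_0^1\|H'(t)\|_2\,dt$: writing $\widehat M=\big(\int_0^1 H'(t)\,dt\big)/\|M\|_2$, one has $\int_0^1\big(\|H'(t)\|_2-\tr[\widehat M\,H'(t)]\big)\,dt=0$ with a non-negative continuous integrand, so Cauchy-Schwarz is saturated at every $t$ and $H'(t)$ is a non-negative multiple of $\widehat M$. Note that the argument uses only the inequality \eqref{rmet3}, not its equality condition, so no separate analysis of when $X'(t)$ commutes with $X(t)$ is needed; and the constant-speed hypothesis is exactly what upgrades the conclusion $H'(t)=c(t)\widehat M$ to $H'(t)\equiv M$.
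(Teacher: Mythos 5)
Your proof is correct and follows essentially the same route as the paper: the lower bound $\|\log Y-\log X\|_2\le\delta(X,Y)$ from \eqref{rmet3}--\eqref{rmet4}, exactness of the bound along $e^{(1-t)\log X+t\log Y}$ in the commuting case, and uniqueness from forcing equality throughout the chain \eqref{rmet4}. You merely spell out more carefully than the paper does the equality case of the triangle inequality for the Hilbert--Schmidt-valued integral (via saturation of Cauchy--Schwarz against $\widehat M$), which is a worthwhile clarification but not a different argument.
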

 
  Since conjugation is an isometry in this metric, it is now a simple matter to find the explicit formula for the geodesic connecting $X$ and $Y$ in $\pn$.  Apart from the statement on uniqueness, the following theorem is due to Bhatia and Holbrook \cite{BH06}.

\begin{thm}\label{gencase}  For all $X,Y\in \pn$, 
there is exactly one constant speed geodesic running from $X$ to $Y$ in unit time, namely, 
\begin{equation}\label{rmet5}
 X(t) = X\#_t Y := X^{1/2}(X^{-1/2}Y X^{-1/2})^tX^{1/2}\ ,
 \end{equation} 
 and
 $$\delta(X,Y) =  \| \log (X^{-1/2}Y X^{-1/2})\|_2\ .$$
 \end{thm}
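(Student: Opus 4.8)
The plan is to reduce the general case to the commuting case already settled in Lemma~\ref{comcase}, exploiting the conjugation invariance of the metric established just above.

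First I would fix $X,Y\in\pn$, set $A:=X^{1/2}$, and introduce $Z_0:=X^{-1/2}YX^{-1/2}\in\pn$, so that $A\one A=X$ and $AZ_0A=Y$. Consider the conjugation map $C_A:\pn\to\pn$, $C_A(P)=APA$; since $A$ is positive and invertible, $C_A$ is a smooth bijection with smooth inverse $C_{A^{-1}}$, and for any smooth path $t\mapsto X(t)$ the path $t\mapsto Z(t):=C_A(X(t))$ satisfies $\|Z(t)^{-1/2}Z'(t)Z(t)^{-1/2}\|_2=\|X(t)^{-1/2}X'(t)X(t)^{-1/2}\|_2$ for every $t$, by exactly the computation carried out just before Lemma~\ref{comcase} (here $A^*=A$). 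Hence $C_A$ carries smooth paths from $\one$ to $Z_0$ bijectively onto smooth paths from $X$ to $Y$, preserving the arc-length integrand pointwise in $t$; in particular it preserves both length and the constant-speed property, so it maps unit-time constant-speed geodesics from $\one$ to $Z_0$ onto unit-time constant-speed geodesics from $X$ to $Y$, and $\delta(X,Y)=\delta(\one,Z_0)$.

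Next, since $\one$ commutes with $Z_0$, Lemma~\ref{comcase} applies: the unique unit-time constant-speed geodesic from $\one$ to $Z_0$ is $t\mapsto e^{(1-t)\log\one + t\log Z_0}=Z_0^{\,t}$, and $\delta(\one,Z_0)=\|\log Z_0\|_2$. Pushing this forward by $C_A$ then shows that the unique unit-time constant-speed geodesic from $X$ to $Y$ is
$$t\mapsto A Z_0^{\,t} A = X^{1/2}(X^{-1/2}YX^{-1/2})^t X^{1/2} = X\#_t Y\ ,$$
and that $\delta(X,Y)=\delta(\one,Z_0)=\|\log(X^{-1/2}YX^{-1/2})\|_2$, which is the assertion of the theorem.

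The one point I would write out with some care — and the only place where anything could go wrong — is the bookkeeping that $C_A$ is a genuine \emph{bijective} correspondence on path spaces fixing the endpoint pair $\{X,Y\}$ and respecting both "length-minimizing" and "constant speed," so that \emph{uniqueness}, and not just existence, transfers from Lemma~\ref{comcase}; this rests entirely on the pointwise speed identity above together with $C_A^{-1}=C_{A^{-1}}$. Beyond that there is no analytic obstacle: all of the substantive work — the majorization estimate \eqref{rmet3} feeding into \eqref{rmet4} — has already been done in establishing Lemma~\ref{comcase}.
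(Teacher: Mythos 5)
Your proposal is correct and is essentially the paper's own argument: both reduce to the pair $(\one,\,X^{-1/2}YX^{-1/2})$, invoke Lemma~\ref{comcase} (applicable since $\one$ commutes with everything), and transfer existence and uniqueness back via the isometry $P\mapsto X^{1/2}PX^{1/2}$. The only cosmetic difference is that the paper phrases the uniqueness step as a contradiction (a second geodesic from $X$ to $Y$ would conjugate to a second geodesic from $\one$ to $X^{-1/2}YX^{-1/2}$), whereas you phrase it as a bijection of path spaces; these are the same observation.
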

 
 \begin{proof}  By Lemma~\ref{comcase}, the unique 
 constant speed geodesic running from $\one$ to $X^{-1/2}Y X^{-1/2}$ in unit time
 is $W(t) = (X^{-1/2}Y X^{-1/2})^t$; it has the constant speed $\|\log (X^{-1/2}Y X^{-1/2})\|_2$, 
and 
$$\delta(\one, X^{-1/2}Y X^{-1/2}) = \|\log (X^{-1/2}Y X^{-1/2})\|_2 = \delta(\one, X^{-1/2}Y X^{-1/2})\ .$$
 By the conjugation invariance of
the metric, $\delta(X,Y) = \delta(\one, X^{-1/2}Y X^{-1/2})$ and $X(t)$ as defined in (\ref{rmet5}) 
has the constant speed $\delta(X,Y)$ and runs from $X$ to $Y$ in unit time. Thus it is a constant 
speed geodesic running from $X$ to $Y$ in unit time.   

If there were another such geodesic, say $\widetilde{X}(t)$,
then $X^{-1/2}\widetilde{X}(t) X^{-1/2}$ would be a 
constant speed geodesic running from $\one$ to $X^{-1/2}Y X^{-1/2}$ in unit time, 
and different form $W(t)$, but this would contradict the uniqueness in Lemma~\ref{comcase}. 
 \end{proof}

 In particular, the midpoint of the unique constant speed geodesic running from 
 $X$ to $Y$ in unit time  is the geometric mean of $X$ and $Y$ as originally defined by Pusz and Woronowicz \cite{PW75}:
 $$X\# Y = X^{1/2}(X^{-1/2}Y X^{-1/2})^{1/2}X^{1/2}\ .$$
  
In fact, the Riemannian manifold $(\pn, \delta)$ is geodesically complete: The smooth path
$$t\mapsto X^{1/2}(X^{-1/2}Y X^{-1/2})^tX^{1/2} := X\#_t Y$$
is well defined for all $t\in \R$. By the conjugation invariance and Lemma~\ref{comcase}, for all $s,t\in \R$,
$$\delta(X\#_s Y, X\#_t Y) = \delta((X^{-1/2}Y X^{-1/2})^s, (X^{-1/2}Y X^{-1/2})^t)  = |t-s|\| \log (X^{-1/2}Y X^{-1/2})\|_2\ .$$
Since the speed along the curve $T\mapsto  X\#_t Y$ has the constant value $\|\log (X^{-1/2}Y X^{-1/2})\|_2$,
this, together with the uniqueness in Theorem~\ref{gencase}, shows that    for all $t_0 < t_1$ in $\R$,
the restriction of $t\mapsto X\#_t Y$ to $[t_0,t_1]$ is the unique constant speed geodesic running from 
$X\#_{t_0} Y$ to $X\#_{t_1} Y$ in time $t_1- t_0$.  

This has a number of consequences. 

\begin{thm}\label{reparam}  Let $X,Y\in \pn$, and $t_0 , t_1\in \R$. Then for all $t\in \R$
\begin{equation}\label{rmet62}
X\#_{(1-t)t_0 + t t_1} Y = (X\#_{t_0}Y)\#_t   (X\#_{t_1}Y)
\end{equation}
\end{thm}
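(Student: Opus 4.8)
The plan is to deduce \eqref{rmet62} directly from the geometric picture established just above: the curve $s\mapsto X\#_s Y$ is a constant speed geodesic on $(\pn,\delta)$ defined for every $s\in\R$, and its restriction to any compact interval $[t_0,t_1]$ is \emph{the} unique constant speed geodesic joining $X\#_{t_0}Y$ to $X\#_{t_1}Y$. Fix $X,Y\in\pn$ and $t_0,t_1\in\R$. If $t_0=t_1$ then, since $A\#_t A=A$ for every $A\in\pn$, both sides of \eqref{rmet62} equal $X\#_{t_0}Y$, so I may assume $t_0\neq t_1$.

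Set $A:=X\#_{t_0}Y$ and $B:=X\#_{t_1}Y$, and for $t\in[0,1]$ define $\gamma(t):=X\#_{s(t)}Y$ with $s(t):=(1-t)t_0+t t_1$. First I would note that $t\mapsto s(t)$ is affine with slope $t_1-t_0$, so $\gamma$ is an affine reparametrization of the constant speed geodesic $s\mapsto X\#_s Y$; consequently $\gamma$ is a smooth curve of constant speed $|t_1-t_0|\,\|\log(X^{-1/2}YX^{-1/2})\|_2$, and it runs from $\gamma(0)=X\#_{t_0}Y=A$ to $\gamma(1)=X\#_{t_1}Y=B$ in unit time. Thus $\gamma$ is a constant speed geodesic from $A$ to $B$ traversed in unit time, and by the uniqueness asserted in Theorem~\ref{gencase} it must coincide with $t\mapsto A\#_t B$. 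This proves $X\#_{s(t)}Y=(X\#_{t_0}Y)\#_t(X\#_{t_1}Y)$ for all $t\in[0,1]$, which is \eqref{rmet62} on that range.

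To remove the restriction $t\in[0,1]$ I would appeal to analyticity in $t$. Writing $Z:=X^{-1/2}YX^{-1/2}>0$, one has $X\#_t Y=X^{1/2}\exp(t\log Z)X^{1/2}$, an entire $\mn$-valued function of $t$; hence the left side of \eqref{rmet62}, whose exponent is the affine function $s(t)$, is entire in $t$, and so is the right side, $A^{1/2}\exp\!\big(t\log(A^{-1/2}BA^{-1/2})\big)A^{1/2}$. Two entire matrix-valued functions that agree on $[0,1]$ agree on all of $\R$, so \eqref{rmet62} holds for every $t\in\R$.

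The only step needing care is the identification of $\gamma$ as a bona fide constant speed geodesic (so that the uniqueness in Theorem~\ref{gencase} genuinely applies), rather than merely a curve with the right endpoints: this is exactly the content of the geodesic-completeness discussion preceding the theorem, where $t\mapsto X\#_t Y$ restricted to $[t_0,t_1]$ is identified as the unique constant speed geodesic between $X\#_{t_0}Y$ and $X\#_{t_1}Y$, together with the elementary fact that an affine (hence constant speed) reparametrization of a constant speed geodesic is again one. A purely algebraic alternative --- conjugating by $A^{-1/2}$ to reduce to the case $A=\one$ and then manipulating powers of positive matrices --- is also available, but the geometric argument is the natural one in this context and is shorter.
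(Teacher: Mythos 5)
Your proof is correct and follows essentially the same route as the paper: both sides of \eqref{rmet62} are constant speed geodesics from $X\#_{t_0}Y$ to $X\#_{t_1}Y$ traversed in unit time, so the uniqueness in Theorem~\ref{gencase} identifies them on $[0,1]$. Your additional analyticity argument passing from $t\in[0,1]$ to all of $\R$ is a welcome bit of extra care that the paper's own proof leaves implicit.
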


\begin{proof} By what we have noted above, $t\mapsto X\#_{(1-t)t_0 + t t_1} Y$ is a constant speed geodesic running from $X\#_{t_0}Y$ to $X\#_{t_1}Y$ in unit time, as is $t\mapsto 
(X\#_{t_0}Y)\#_t   (X\#_{t_1}Y)$. The identity (\ref{rmet62}) now follows from the uniqueness in 
Theorem~\ref{gencase}.  
\end{proof}

Taking $t_0 = 0$ and $t_1 =s$, we have the special case
\begin{equation}\label{rmet62A}
X\#_{ts} Y = X\#_t   (X\#_sY)\ .
\end{equation}

Taking $t_0 = 1$ and $t_1 =0$, we have the special case
\begin{equation}\label{rmet62B}
X\#_{1-t} Y = Y\#_t  X\ .
\end{equation}
The identity (\ref{rmet62B}) is well-known, and may be derived directly from the formula in (\ref{rmet5}).

We are particularly concerned with $t\mapsto X\#_t Y$ for $t\in [-1,2]$. Indeed,  from the formula in 
(\ref{rmet5}),
\begin{equation}\label{geom2}
X\#_{-1} Y = X \frac{1}{Y} X \qquad{\rm and}\qquad X\#_2 Y = Y\frac{1}{X} Y\ .
\end{equation}

Let $t\in (0,1)$. By combining the formula
$$X\#_tY = X^{1/2}(X^{-1/2}Y X^{-1/2})^t X^{1/2} =  X^{1/2}(X^{1/2}Y^{-1} X^{1/2})^{-t} X^{1/2}$$
with the integral representation
$$A^{-t} = \frac{\sin(\pi t)}{\pi}\int_0^\infty \lambda^{-t} \frac{1}{\lambda + A} {\rm d}\lambda = 
\frac{\sin(\pi t)}{\pi}\int_0^\infty \lambda^{t} \frac{1}{1 + \lambda A} {\rm d}\lambda\ , $$
we obtain, for $t\in (0,1)$, 
\begin{eqnarray}\label{akrep}
X\#_tY  &=&   \frac{\sin(\pi t)}{\pi}\int_0^\infty \lambda^{t} X^{1/2}\frac{1}{1 + \lambda X^{1/2}Y^{-1} X^{1/2}}X^{1/2} {\rm d}\lambda\nonumber\\
&=&   \frac{\sin(\pi t)}{\pi}\int_0^\infty \lambda^{t} \frac{1}{X^{-1} + \lambda Y^{-1} } {\rm d}\lambda\ .
\end{eqnarray}
The merit of this formula lies in the following lemma:
\begin{lm}[Ando]\label{aklem}
The function $(A,B) \mapsto (A^{-1} + B^{-1})^{-1}$ is jointly concave on $\pn$. 
\end{lm}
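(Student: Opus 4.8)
The plan is to reduce everything to the classical variational (parallel--sum) formula: for all $A,B\in\pn$ and all $x\in\C^n$,
\begin{equation}\label{parallelvar}
\langle x,(A^{-1}+B^{-1})^{-1}x\rangle \;=\; \inf\bigl\{\,\langle y,Ay\rangle+\langle z,Bz\rangle \;:\; y,z\in\C^n,\ y+z=x\,\bigr\}\ .
\end{equation}
Granting \eqref{parallelvar}, joint concavity is immediate: fix $A_0,A_1,B_0,B_1\in\pn$ and $t\in[0,1]$, and write $A_t=(1-t)A_0+tA_1$, $B_t=(1-t)B_0+tB_1$. For each fixed decomposition $y+z=x$ the quantity $\langle y,A_ty\rangle+\langle z,B_tz\rangle$ is affine in $t$; hence its infimum over all such decompositions is a concave function of $t$, and \eqref{parallelvar} gives
$$\bigl\langle x,(A_t^{-1}+B_t^{-1})^{-1}x\bigr\rangle \;\geq\; (1-t)\bigl\langle x,(A_0^{-1}+B_0^{-1})^{-1}x\bigr\rangle + t\bigl\langle x,(A_1^{-1}+B_1^{-1})^{-1}x\bigr\rangle$$
for every $x$, which is precisely the operator inequality expressing joint concavity of $(A,B)\mapsto(A^{-1}+B^{-1})^{-1}$.

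To prove \eqref{parallelvar} I would eliminate $z=x-y$ and minimize the strictly convex quadratic
$$f(y)=\langle y,Ay\rangle+\langle x-y,B(x-y)\rangle = \langle y,(A+B)y\rangle - 2\operatorname{Re}\langle Bx,y\rangle + \langle x,Bx\rangle\ ,$$
which is strictly convex because $A+B>0$. Its unique critical point is $y_\ast=(A+B)^{-1}Bx$, and the standard formula for the minimum of a convex quadratic gives
$$\min_y f(y) = \langle x,Bx\rangle - \langle Bx,(A+B)^{-1}Bx\rangle = \bigl\langle x,\bigl(B-B(A+B)^{-1}B\bigr)x\bigr\rangle\ .$$
The only remaining point is the algebraic identity $B-B(A+B)^{-1}B = B(A+B)^{-1}A = (A^{-1}+B^{-1})^{-1}$; the first equality is $B(A+B)^{-1}\bigl[(A+B)-B\bigr]$, and the second follows by inverting: $\bigl(B(A+B)^{-1}A\bigr)^{-1} = A^{-1}(A+B)B^{-1} = A^{-1}+B^{-1}$.

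I do not expect a serious obstacle here; the content of the lemma is entirely contained in \eqref{parallelvar}, and the deduction of concavity is the familiar fact that a pointwise infimum of affine functions is concave. The one thing to be careful about is that $A$ and $B$ do not commute, so the identity $B-B(A+B)^{-1}B=(A^{-1}+B^{-1})^{-1}$ must be verified by the sandwiching argument above rather than by manipulating the symbols as in the scalar case. As an alternative route (should one prefer to avoid the variational formula) one may observe that $(A^{-1}+B^{-1})^{-1}=A(A+B)^{-1}B$ is the Schur complement of the block $A+B$ in the positive semidefinite matrix $\left(\begin{smallmatrix} A & A\\ A & A+B\end{smallmatrix}\right)$, which depends affinely on $(A,B)$, and invoke concavity of the Schur--complement map; but the variational proof is the most self-contained and is the form needed for the integral representation \eqref{akrep}.
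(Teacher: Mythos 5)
Your proof is correct, but it takes a different route from the paper's. The paper proves the lemma in two lines: it writes $(A^{-1}+B^{-1})^{-1}=B(A+B)^{-1}A=A-A(A+B)^{-1}A$ and then cites the known joint convexity of $(A,B)\mapsto A(A+B)^{-1}A$ (attributed to Kiefer), so that the parallel sum is affine minus convex, hence concave. You instead establish the variational characterization
$\langle x,(A^{-1}+B^{-1})^{-1}x\rangle=\inf\{\langle y,Ay\rangle+\langle z,Bz\rangle:\ y+z=x\}$
by explicitly minimizing the quadratic, and read off concavity as a pointwise infimum of functions linear in $(A,B)$; your algebra (the completion of the square, the identity $B-B(A+B)^{-1}B=B(A+B)^{-1}A$, and the inversion $A^{-1}(A+B)B^{-1}=A^{-1}+B^{-1}$) all checks out. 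In effect you have inlined a proof of the very convexity fact the paper cites, since the standard proof of Kiefer's result is the same kind of variational argument; what your version buys is self-containedness (no external citation, and it also handles degenerate limits gracefully since an infimum of linear functionals extends by continuity to the closure of $\pn$), while the paper's version buys brevity at the cost of an appeal to the literature. Either form feeds equally well into the integral representation of $X\#_tY$ that the lemma is used for.
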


\begin{proof} Note that
$A^{-1} + B^{-1} = A^{-1}(A+ B)B^{-1}$, so that
$$(A^{-1} + B^{-1})^{-1}  = B(A+B)^{-1}A = ((A+B) -A) (A+B)^{-1}A = A -  A(A+B)^{-1}A\ ,$$
and the claim now follows form the convexity of $(A,B) \mapsto A(A+B)^{-1}A$ \cite{K59}. 
\end{proof} 

The {\em harmonic mean} of positive operators $A$ and $B$, $A:B$, is defined by 
\begin{equation}\label{harme}
A:B := 2(A^{-1} + B^{-1})^{-1}\ ,
\end{equation}
and hence Lemma~\ref{aklem} says that $(A,B) \mapsto A:B$ is jointly concave. Moreover, (\ref{akrep}) can be written in terms of the harmonic mean as
\begin{equation}\label{akrep2}
X\#_tY =  \frac{\sin(\pi t)}{2\pi} \int_0^\infty  X:(\lambda Y) \lambda^{t} {\rm d}\lambda\ ,
\end{equation}
which expresses weighted geometric means as average over harmonic means.
By the operator  monotonicity of the map $A\mapsto A^{-1}$, the map $X,Y \mapsto X:Y$ is monotone in each variable, and then by (\ref{akrep2}) this is also true of $X,Y \mapsto X\#_t Y$. This proves the following result of Ando and Kubo \cite{KA80}:

\begin{thm}[Ando and Kubo]\label{akthm} For all $t\in [0,1]$, 
$(X,Y)\mapsto X\#_tY$ is jointly concave, and monotone increasing in $X$ and $Y$.
\end{thm}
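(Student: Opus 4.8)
The plan is to reduce the whole statement to the integral representation \eqref{akrep2} together with Lemma~\ref{aklem}. First I would dispose of the endpoints: $X\#_0 Y = X$ and $X\#_1 Y = Y$ are each jointly linear, hence trivially jointly concave and monotone increasing. So fix $t\in(0,1)$, for which $\sin(\pi t)>0$ and the coefficient $\frac{\sin(\pi t)}{2\pi}\lambda^{t}$ in \eqref{akrep2} is nonnegative on $(0,\infty)$.

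For monotonicity I would argue directly from the harmonic-mean formula. Since $A\mapsto A^{-1}$ is operator monotone decreasing on $\pn$, if $X_1\le X_2$ then $X_1^{-1}\ge X_2^{-1}$, hence $X_1^{-1}+(\lambda Y)^{-1}\ge X_2^{-1}+(\lambda Y)^{-1}$, and inverting once more gives $X_1:(\lambda Y)\le X_2:(\lambda Y)$ for every $\lambda>0$; the same argument applies in the second slot. Multiplying by the nonnegative weight $\frac{\sin(\pi t)}{2\pi}\lambda^{t}$ and integrating, \eqref{akrep2} yields that $X\#_t Y$ is monotone increasing in $X$ and in $Y$.

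For joint concavity I would combine Lemma~\ref{aklem} with the elementary fact that a nonnegative superposition of jointly concave functions is jointly concave. For each fixed $\lambda>0$, the map $(X,Y)\mapsto X:(\lambda Y)=2(X^{-1}+\lambda^{-1}Y^{-1})^{-1}$ is jointly concave: this is Lemma~\ref{aklem} evaluated at $(X,\lambda Y)$, using that $Y\mapsto\lambda Y$ is linear. Hence for $X_i,Y_i\in\pn$ and $s\in[0,1]$, writing $\bar X=sX_1+(1-s)X_2$ and $\bar Y=sY_1+(1-s)Y_2$, one has $\bar X:(\lambda\bar Y)\ge s\,X_1:(\lambda Y_1)+(1-s)\,X_2:(\lambda Y_2)$ for every $\lambda>0$; multiplying by $\frac{\sin(\pi t)}{2\pi}\lambda^{t}\ge 0$ and integrating over $\lambda\in(0,\infty)$, \eqref{akrep2} gives $\bar X\#_t\bar Y\ge s\,X_1\#_t Y_1+(1-s)\,X_2\#_t Y_2$, which is joint concavity.

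The one point I would be careful about is the legitimacy of passing these operator inequalities through the integral: each integrand is a continuous positive-matrix-valued function of $\lambda$ whose integral converges to the finite matrix $X\#_t Y$ — this convergence is precisely what validates \eqref{akrep}--\eqref{akrep2} — so each matrix inequality, being the collection of scalar statements $\langle v,\cdot\,v\rangle$ over all vectors $v$, survives integration by the corresponding scalar fact about convergent integrals. I do not expect any deeper obstacle, since the real work — joint concavity of $(A,B)\mapsto(A^{-1}+B^{-1})^{-1}$ — has already been done in Lemma~\ref{aklem} via convexity of $(A,B)\mapsto A(A+B)^{-1}A$.
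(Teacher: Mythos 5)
Your proposal is correct and follows essentially the same route as the paper: the representation \eqref{akrep2} of $X\#_tY$ as a positively weighted integral of harmonic means $X:(\lambda Y)$, joint concavity of the harmonic mean via Lemma~\ref{aklem}, and monotonicity via the operator antitonicity of $A\mapsto A^{-1}$. You merely spell out the endpoint cases $t=0,1$ and the passage of operator inequalities through the integral, which the paper leaves implicit.
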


The method of Ando and Kubo can be used to prove joint operator concavity theorems for functions on $\pn\times \pn$  that are not connections.  The next theorem, due to Fujii and Kamei \cite{FK88}, provides an important example. 

\begin{thm}\label{jointconvex} The map
$(X,Y) \mapsto  -X^{1/2} \log(X^{1/2} Y^{-1} X^{1/2}) X^{1/2}$ is jointly concave. 
\end{thm}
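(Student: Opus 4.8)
The plan is to follow the method of Ando and Kubo used for Theorem~\ref{akthm}: write the logarithm as an integral of resolvents, conjugate by $X^{1/2}$ so that each resolvent term turns into a multiple of a harmonic mean, and then quote the joint concavity of harmonic means from Lemma~\ref{aklem}.

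First I would use the elementary scalar identity
$$-\log t \ =\ \int_0^\infty\left(\frac{1}{t+\lambda}\ -\ \frac{1}{1+\lambda}\right)\dd\lambda\ ,\qquad t>0\ .$$
Applying the functional calculus to the positive operator $X^{1/2}Y^{-1}X^{1/2}$ and conjugating by $X^{1/2}$ gives
$$-X^{1/2}\log\!\big(X^{1/2}Y^{-1}X^{1/2}\big)X^{1/2}\ =\ \int_0^\infty\left( X^{1/2}\big(X^{1/2}Y^{-1}X^{1/2}+\lambda\big)^{-1}X^{1/2}\ -\ \frac{1}{1+\lambda}X\right)\dd\lambda\ .$$
The key algebraic step is the factorization $X^{1/2}Y^{-1}X^{1/2}+\lambda = X^{1/2}\big(Y^{-1}+\lambda X^{-1}\big)X^{1/2}$, which yields, for each $\lambda>0$,
$$X^{1/2}\big(X^{1/2}Y^{-1}X^{1/2}+\lambda\big)^{-1}X^{1/2}\ =\ \big(Y^{-1}+\lambda X^{-1}\big)^{-1}\ ;$$
this is $\tfrac12$ the harmonic mean $Y:(\lambda^{-1}X)$.

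Now, by Lemma~\ref{aklem} applied to the pair $\big(Y,\lambda^{-1}X\big)$, together with the linearity of $X\mapsto\lambda^{-1}X$, the map $(X,Y)\mapsto\big(Y^{-1}+\lambda X^{-1}\big)^{-1}$ is jointly concave on $\pn\times\pn$ for every fixed $\lambda>0$, while the remaining term $-\tfrac{1}{1+\lambda}X$ is affine in $(X,Y)$. Hence the integrand is jointly concave for each $\lambda$, and integrating a family of jointly concave maps against the positive measure $\dd\lambda$ preserves joint concavity, which gives the theorem. The one point requiring care is the convergence of the $\lambda$-integral, but this is routine: a short Neumann-series expansion gives $\big(Y^{-1}+\lambda X^{-1}\big)^{-1}-\tfrac{1}{1+\lambda}X = \lambda^{-2}\big(X-XY^{-1}X\big)+O(\lambda^{-3})$ as $\lambda\to\infty$, so the integrand is $O(\lambda^{-2})$ there, while as $\lambda\downarrow 0$ it tends to $Y-X$; thus the integral converges in operator norm, locally uniformly on $\pn\times\pn$, which legitimizes passing the concavity inequality under the integral sign. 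I do not anticipate a genuine obstacle: once the conjugated resolvent is rewritten as a harmonic mean, Lemma~\ref{aklem} does all the work, exactly as in the proof of Theorem~\ref{akthm}.
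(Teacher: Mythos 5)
Your proof is correct and follows essentially the same route as the paper: the integral representation of $-\log$, the factorization $X^{1/2}\bigl(\lambda+X^{1/2}Y^{-1}X^{1/2}\bigr)^{-1}X^{1/2}=\bigl(Y^{-1}+\lambda X^{-1}\bigr)^{-1}$ turning each resolvent into a harmonic mean, and then Lemma~\ref{aklem}. Your explicit form of the integrand and the convergence check are in fact slightly more careful than the paper's one-line version, but there is no difference in substance.
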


\begin{proof} The representation
$$\log A = \int_0^\infty \left(
\frac{1}{\lambda + 1} - \frac{1}{\lambda + A}\right){\rm d}\lambda$$
yields
$$- X^{1/2} \log(X^{1/2} Y^{-1} X^{1/2}) X^{1/2} = 
\int_0^\infty \left(
 \frac{1}{ X^{-1} + (\lambda Y)^{-1}}   - \frac{1}{\lambda + 1}X \right){\rm d}\lambda$$
from which the claim follows. 
\end{proof}

\begin{thm}\label{jointconvexout} For all $t\in [-1,0]\cup [1,2]$, the map $(X,Y) \mapsto X\#_tY$ is jointly convex. 
\end{thm}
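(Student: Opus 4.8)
The plan is to collapse the whole parameter range to $t\in[-1,0]$ and there to represent $X\#_tY$ as a positive superposition of functions whose joint convexity is already available in the paper.

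First, by \eqref{rmet62B} one has $X\#_tY=Y\#_{1-t}X$, and joint convexity is unaffected by interchanging the two arguments (precomposition with the linear map $(X,Y)\mapsto(Y,X)$). Since $t\mapsto 1-t$ maps $[1,2]$ onto $[-1,0]$, it suffices to prove the statement for $t\in[-1,0]$. The endpoints are immediate: $X\#_0Y=X$ is linear, and by \eqref{geom2}, $X\#_{-1}Y=XY^{-1}X$, which is jointly convex on $\pn\times\pn$ (classical, and it also drops out of the argument below as $u\uparrow 1$). So I would fix $t=-u$ with $u\in(0,1)$.

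Next, using Definition~\ref{gmdef}, $X\#_{-u}Y=X^{1/2}(X^{-1/2}YX^{-1/2})^{-u}X^{1/2}$, and the integral representation $A^{-u}=\frac{\sin(\pi u)}{\pi}\int_0^\infty\lambda^{-u}(\lambda+A)^{-1}{\rm d}\lambda$, valid for $u\in(0,1)$ as recalled in Appendix~C, applied to $A=X^{-1/2}YX^{-1/2}$. Combining this with the elementary identity $X^{1/2}(\lambda+X^{-1/2}YX^{-1/2})^{-1}X^{1/2}=X(\lambda X+Y)^{-1}X$ gives
$$X\#_{-u}Y=\frac{\sin(\pi u)}{\pi}\int_0^\infty\lambda^{-u}\,X(\lambda X+Y)^{-1}X\,{\rm d}\lambda\ .$$
The integral converges: near $\lambda=0$ the integrand is $\lambda^{-u}$ times a bounded quantity (integrable since $u<1$), and near $\lambda=\infty$ it is dominated in the operator order by $\lambda^{-u-1}X$ (integrable since $u>0$).

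Finally, for each fixed $\lambda>0$ the map $(X,Y)\mapsto X(\lambda X+Y)^{-1}X$ is jointly convex on $\pn\times\pn$: writing $\lambda X+Y=\lambda(X+\lambda^{-1}Y)$ gives $X(\lambda X+Y)^{-1}X=\lambda^{-1}X(X+\lambda^{-1}Y)^{-1}X$, and $(A,B)\mapsto A(A+B)^{-1}A$ is jointly convex (this is exactly the fact used in the proof of Lemma~\ref{aklem}; cf. \cite{K59}), so the claim follows after the affine substitution $Y\mapsto\lambda^{-1}Y$ and multiplication by the positive scalar $\lambda^{-1}$. Hence $X\#_{-u}Y$ is an integral of jointly convex functions against the nonnegative weight $\frac{\sin(\pi u)}{\pi}\lambda^{-u}\,{\rm d}\lambda$, so it is jointly convex; letting $u\uparrow1$ and $u\downarrow0$ and using that pointwise limits of convex functions are convex recovers the endpoints $t=-1$ and $t=0$ if one prefers not to treat them directly. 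I expect the only point requiring any care to be the convergence and interchange bookkeeping around the integral representation; the convexity input itself is entirely classical and already present in the paper.
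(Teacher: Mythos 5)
Your proposal is correct and follows essentially the same route as the paper: reduce $t\in[1,2]$ to $t\in[-1,0]$ via $X\#_tY=Y\#_{1-t}X$, dispose of the endpoints directly, and express $X\#_{-u}Y$ for $u\in(0,1)$ as a positive integral superposition whose joint convexity comes from that of $(A,B)\mapsto A(A+B)^{-1}A$, i.e.\ the content of Lemma~\ref{aklem}. The only cosmetic difference is that you use the representation of $A^{-u}$ directly, producing the integrand $\lambda^{-u}X(\lambda X+Y)^{-1}X$, whereas the paper's integrand $\lambda^{s-1}\bigl(X-(X^{-1}+(\lambda Y)^{-1})^{-1}\bigr)$ is the same expression after the substitution $\lambda\mapsto1/\lambda$ and the identity $A-(A^{-1}+B^{-1})^{-1}=A(A+B)^{-1}A$.
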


\begin{proof} First suppose that $t\in [0,1]$. The case $t=0$ is trivial, and since $X\#_{-1}Y = XY^{-1}X$ which is convex, we may suppose that $t\in (-1,0)$.  Let $s= -t$ so that $s\in (0,1)$. 
We use the integral representation
$$A^ s = \frac{\sin\pi s}{\pi} \int_0^\infty \lambda^s\left( \frac{1}{\lambda} - \frac{1}{\lambda + A}\right){\rm d}\lambda$$
valid for $A\in \pn$ and $s\in (0,1)$
to obtain
$$X\#_sY   = 
\frac{\sin\pi s}{\pi} \int_0^\infty \lambda^s \left( X  - \frac{1}{ X^{-1}  + (\lambda Y)^{-1}}\right)\frac{{\rm d}\lambda}{\lambda}\ ,$$
which by Lemma~\ref{aklem} is jointly convex. Finally, the identity $Y\#_{1-t}X = X\#_tY$ shows that the joint convexity for $t\in [1.2]$ follows from the joint convexity for $t\in [-1,0]$.
\end{proof}

\end{document}